\documentclass[10pt]{article}

\usepackage{color}
\usepackage{amsmath,enumerate}
\usepackage{amssymb}
\usepackage{epsfig}
\usepackage{subfig}
\usepackage{pgf}
\usepackage{pgfcore}
\usepackage{pgfbaseimage}
\usepackage{pgfbaselayers}
\usepackage{pgfbasepatterns}
\usepackage{pgfbaseplot}
\usepackage{pgfbaseshapes}
\usepackage{pgfbasesnakes}
\usepackage{tikz}
\usepackage{rotating}
\usepackage{multicol}
\usepackage[hmargin=2cm,vmargin=2cm]{geometry}

\def\C{\mathcal{C}}

\newenvironment{proof}{\par \noindent \textsc{Proof.} }{\hfill$\Box$\medskip}

\newcommand{\Z}{\mathbb{Z}}
\renewcommand{\S}{\mathcal{S}}
\newcommand{\D}{D}
\newcommand{\Lpat}{\tikz[scale = 0.3] \draw[mygrid]  (1,0) node[gridnode] {} -- (0,0) node[gridnode] {} -- (0,1)  node[gridnode] {}; }

\newtheorem{theorem}{Theorem}
\newtheorem{proposition}[theorem]{Proposition}
\newtheorem{lemma}[theorem]{Lemma}


\definecolor{roug}{RGB}{0,0,0} 
\definecolor{orang}{RGB}{255,153,0}
\definecolor{fond}{RGB}{255,255,230}
\definecolor{coul1}{RGB}{255,255,50}
\definecolor{coul2}{RGB}{255,100,0}
\definecolor{coul3}{RGB}{9,156,9}
\definecolor{coul4}{RGB}{0,0,255}
\definecolor{coul5}{RGB}{0,255,255}
\definecolor{coul6}{RGB}{255,0,255}

\tikzstyle{gridnode}=[shape=circle,fill=black,draw=black,minimum size=0.5pt,inner sep=0.5pt]
\tikzstyle{neighbour}=[shape=circle,fill=black,draw=black,minimum size=0.5pt,inner sep=1.3pt]
\tikzstyle{centernode}=[shape=rectangle, fill=white, line width=2pt, draw=roug,minimum size=0.5pt,inner sep=2pt]
\tikzstyle{code}=[shape=circle,fill=black,draw=black,minimum size=0.5pt,inner sep=2.2pt]
\tikzstyle mygrid=[line width=1,color=black!20]
\tikzstyle ball=[thick,roug]
\tikzstyle{pattern}=[draw=roug,thick,rectangle,inner sep=7pt,fill=roug!20, semitransparent]
\tikzstyle{patternpath}=[draw=roug,thick,fill=roug!20, semitransparent]
\tikzstyle{centernode2}=[shape=circle,fill=roug,draw=roug,minimum size=0.5pt,inner sep=1.3pt]
\tikzstyle{labelnode}=[circle,fill=white, inner sep =0]
\tikzstyle{labelnodep}=[circle,fill=roug!20, inner sep =0]
\tikzstyle{evennode}=[shape=circle,fill=white, draw=black,minimum size=0.5pt,inner sep=2.5pt]
\tikzstyle{oddnode}=[shape=circle,fill=black!50,draw=black,minimum size=0.5pt,inner sep=2.5pt]

\title{Tolerant identification with Euclidean balls\thanks{This research is
    supported by the ANR Project IDEA {\scriptsize $\bullet$} {ANR-08-EMER-007},  2009-2011.}
}
\author{Ville Junnila\thanks{Corresponding author.
Department of Mathematics, University of Turku,
20014 Turku, Finland, e-mail: ville.junnila@utu.fi, phone: +358 2 333 6021, fax: +358 2 231 0311 }
, Tero Laihonen\thanks{
Department of Mathematics, University of Turku,
20014 Turku, Finland, e-mail: tero.laihonen@utu.fi}  \ and Aline Parreau\thanks{
Institut Fourier, 100 rue des maths, BP 74, 38402 St Martin d'H\`eres cedex, France, e-mail: aline.parreau@ujf-grenoble.fr}}
\date{\today}

\begin{document}

\maketitle

\renewcommand{\abstractname}{Abstract}

\begin{abstract}
The concept of identifying codes was introduced by Karpovsky, Chakrabarty and Levitin in 1998. The identifying codes can be applied, for example, to sensor networks. In this paper, we consider as sensors the set $\Z^2$ where one sensor can check its neighbours within Euclidean distance $r$. We construct tolerant identifying codes in this network that are robust against some changes in the neighbourhood monitored by each sensor. We give bounds for the smallest density of a tolerant identifying code for general values of $r$. We also provide infinite families of values $r$ with optimal such codes and study the case of small values of $r$.
\end{abstract}
\noindent\emph{Keywords:} Identifying code; Optimal code; Sensor network; Fault diagnosis

\section{Introduction}
Let a network be modelled by a simple, connected and undirected graph $G=(V,E)$ with vertex set $V$ and edge set $E$. We can place a sensor in any vertex $u$. A sensor is able to check its closed neighbourhood $N[u]$ (i.e., the adjacent vertices and itself) and report to a central controller if it detects something wrong there (like a smoke detector). The idea is to place as few sensors as possible in such a way that we could uniquely determine where (that is, in which vertex) the problem occurs (if any) knowing only the set of sensors which gave us the alarm.

Let us denote the subset of vertices, where we placed the sensors, by $C$. In order to find the sought object (like fire) in our network, we need to choose $C$ in the following way. Denote the set of sensors monitoring a vertex $u\in V$ by $ I(u)=N[u]\cap C$. Suppose that $C$ satisfies the following two conditions: $I(u)\neq \emptyset$ for every $u\in V$ and $I(u)\neq I(v)$ for all $u,v\in V$, $u\neq v$. Hence, $I(u)$ is the set of sensors giving the alarm if there is a problem in $u$, and since it is unique and nonempty for each $u\in V$, we can determine the vertex with a problem (if there is any). Such a subset $C\subseteq V $ satisfying the two requirements is called an \emph{identifying code} (any nonempty subset of $V$ is called a \emph{code}). The concept of identifying codes was introduced in \cite{KCL98}, where a fault diagnosis was performed in multiprocessor networks.

\smallskip

Consider a graph with the vertex set $\mathbb{Z}^2$ endowed with the
Euclidean distance $d$. Let further $r$ be a positive real number.
The set of edges is defined as follows: there is an edge between
two vertices of $\Z^2$ if their Euclidean distance is at most $r$. In other words, the closed neighbourhood of a vertex $u=(x_u,y_u)\in \mathbb{Z} ^2$ is the \emph{ball} 
$$ B_r(u)=\{v\in \mathbb{Z}^2\mid d(u,v)\leq r\}=\{(x,y)\in \mathbb{Z}^2\mid (x-x_u)^2+(y-y_u)^2\leq r^2\}.$$
Hence, in this graph each sensor can check (or cover) vertices within Euclidean distance $r$. In Figure~\ref{fig:euclideancode}, we have illustrated the graph with the ball of radius $\sqrt{5}$ and an identifying code in it. For special values of $r$, this graph has been considered in many papers, for example, \cite{BL05,CHHL04,HL03,HL02,JL11}. For related results, see \cite{CIQC02,MS09}. An interested reader is also referred to the web page \cite{lowww}, which contains an extensive collection of papers concerning identification and related problems.


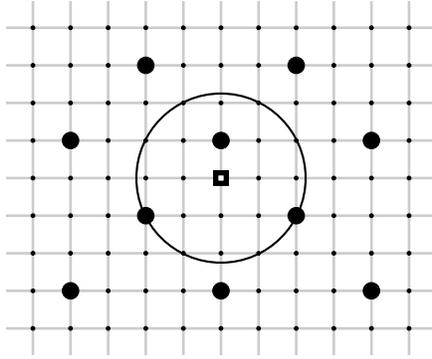
\begin{figure}
\begin{center}
\begin{tikzpicture}[scale=0.5]

\clip (-5.7,-4.7) rectangle (5.7,4.7);
\draw[mygrid] (-6.9,-5.9) grid (6.9,5.9);
\foreach \I in {-6,-5,-4,-3,-2,-1,0,1,2,3,4,5,6}\foreach \J in {-5,-4,-3,-2,-1,0,1,2,3,4,5}
	{\node[gridnode](\I\J) at (\I,\J) {};}

\draw[ball] (0,0) circle (2.25cm);
\node[centernode] at (0,0) {};


\foreach \I in {-6,-2,2,6} \foreach \J in {-5,-1,3}
	{\node[code] at (\I,\J) {};}
	
\foreach \I in {-4,0,4} \foreach \J in {-3,1,5}
	{\node[code] at (\I,\J) {};}

\end{tikzpicture}
\end{center}
\caption{\label{fig:euclideancode} For a Euclidean radius $r=\sqrt{5}$, the shaded vertices form an optimal identifying code for $\Z^2$. The sensors covering the squared vertex are illustrated by the Euclidean ball.}
\end{figure}

Since our underlying graph $\mathbb{Z}^2$ is infinite, we need a device to measure how `small' our code is compared to others. To this end, we use the usual concept of \emph{density}. Denote $Q_n=\{(x,y)\in \mathbb{Z}^2\mid |x|\le n, |y|\le n\}$ for a positive integer $n$. Obviously, $|Q_n|=(2n+1)^2.$ The density $D(C)$ of a code $C\subseteq \mathbb{Z}^2$ is $$D(C)=\limsup_{n\to +\infty} \frac{|C\cap Q_n|}{|Q_n|}.$$ We say that an identifying code is {\em optimal} if there is no code with lower density.


The previously defined concept of regular identification is a somewhat idealized view to approach the described locating problem. In particular, it is unrealistic to assume that each sensor $c \in C$ monitors exactly the Euclidean ball $B_r(c)$ of radius $r$. In this paper, we concentrate on a more realistic scenario, where the area that each sensor monitors may individually vary. Supposing $\Delta$ is a non-negative real number, we assume that the area covered by a sensor $c \in C$ is a subset of $B_{r+\Delta}(c)$ such that all the vertices of $B_r(c)$ belong to it. Consider then which sensors monitor a given vertex $u \in \mathbb{Z}^2$. Clearly, the sensors covering $u$ are the ones that belong to an area, which is a subset of $B_{r+\Delta}(u)$ and contains all the vertices of $B_r(u)$. A code $C$, using which we can uniquely determine the sought vertex (if any) solely based on the information provided by the sensors, is called $(r, \Delta)$\emph{-tolerant identifying} (or in short $(r, \Delta)$\emph{-identifying}). The formal definition of tolerant identifying codes is presented in the following.


Let us denote for $u$ and $v$ in $\Z^2$:
$$\S_{r,\Delta}(u,v)=\left(B_r(u)\setminus B_{r+\Delta}(v)\right)\cup
\left(B_r(v)\setminus B_{r+\Delta}(u)\right).$$
A subset $C\subseteq \Z^2$ is an $(r,\Delta)$-tolerant identifying code (or in short $(r,\Delta)$-identifying code), if for every $u\in \Z^2$ we have $B_r(u)\cap C\neq \emptyset$ and for all distinct vertices $u$ and $v$:
$$\S_{r,\Delta}(u,v) \cap C \neq \emptyset.$$
Clearly, this formal definition coincides with the informal one described above.
We denote the smallest possible density $D(C)$ of an $(r,\Delta)$-identifying code $C$ by $D(r,\Delta)$. An $(r,0)$-identifying code is simply an identifying code in the graph with vertex set $\Z^2$ and Euclidean radius $r$, studied in \cite{JL11}. For other results on robustness for identifying codes, see \cite{HKL06,HL07,RUPTS03,S02}. For comparing the sizes of regular $r$-identifying codes and $(r, \Delta)$-identifying codes with $\Delta > 0$ for small $r$, we refer to Table~\ref{table:small}. \smallskip

The following result is useful, when we bound the density of a code
from below. 
For $S\subseteq \mathbb{Z}^2$ and $v\in \mathbb{Z}^2$ we
denote a translate $v+S=\{v+s\mid s\in S\}$.

\begin{proposition}[\cite{JL11}]\label{prop:lowpattern} Let $S$ be a set of $k$ different
vertices of $\mathbb{Z}^2$. If a code $C\subseteq \mathbb{Z}^2$ is
such that  $|(v+S)\cap C|\ge m$ for all $v\in \mathbb{Z}^2$, then
$$D(C)\ge \frac{m}{k}.$$
\end{proposition}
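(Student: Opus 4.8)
The plan is to use a double-counting (averaging) argument over all translates $v+S$ whose base point $v$ ranges over a large square $Q_n$, and then let $n\to\infty$. The underlying intuition is that each translate is guaranteed to contain at least $m$ codewords, while each individual codeword can be counted only a bounded number of times; averaging then forces the density of $C$ to be at least $m/k$. Concretely, I would introduce the quantity
$$\Sigma_n = \sum_{v\in Q_n} |(v+S)\cap C|.$$
By the hypothesis $|(v+S)\cap C|\ge m$ for every $v\in\Z^2$, each summand is at least $m$, so immediately $\Sigma_n \ge m\,|Q_n| = m(2n+1)^2$. This is the easy lower bound on $\Sigma_n$.

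Next I would evaluate $\Sigma_n$ from the opposite direction by exchanging the order of summation. Writing the membership $v+s\in C$ and summing first over $s\in S$, we obtain
$$\Sigma_n = \sum_{v\in Q_n} |\{s\in S : v+s\in C\}| = \sum_{s\in S} |\{v\in Q_n : v+s\in C\}| = \sum_{s\in S} |C\cap(Q_n+s)|.$$
Thus $\Sigma_n$ is a sum of $k$ terms, each counting the codewords of $C$ lying in a translated square $Q_n+s$.

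The key step, and the point requiring the most care, is to control the boundary discrepancy between each translated window $Q_n+s$ and the original window $Q_n$. Setting $M=\max_{s\in S}\|s\|_\infty$, which is a fixed constant because $S$ is finite, every translate satisfies $Q_n+s\subseteq Q_{n+M}$, and hence $|C\cap(Q_n+s)|\le |C\cap Q_{n+M}|$. Combining this with the lower bound on $\Sigma_n$ gives
$$m(2n+1)^2 \le \Sigma_n \le k\,|C\cap Q_{n+M}|,$$
so that $|C\cap Q_{n+M}| \ge \tfrac{m}{k}(2n+1)^2$.

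Finally I would divide by $|Q_{n+M}|=(2n+2M+1)^2$ and pass to the limit. Since
$$\frac{|C\cap Q_{n+M}|}{|Q_{n+M}|} \ge \frac{m}{k}\cdot\frac{(2n+1)^2}{(2n+2M+1)^2},$$
and the last fraction tends to $1$ as $n\to\infty$ (the constant shift $M$ being negligible against $n$), taking $\limsup$ on both sides yields $D(C)\ge m/k$; here one uses that replacing the index $n$ by $n+M$ leaves the $\limsup$ unchanged. I expect the only genuinely delicate point to be precisely this boundary estimate: one must confirm that translating the averaging window by a bounded amount perturbs its size only by a lower-order term that washes out in the density limit. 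Everything else is routine double counting.
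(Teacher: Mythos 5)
Your double-counting argument is correct: the exchange of summation giving $\Sigma_n=\sum_{s\in S}|C\cap(Q_n+s)|$, the containment $Q_n+s\subseteq Q_{n+M}$, and the observation that a constant shift of the window index does not change the $\limsup$ together yield $D(C)\ge m/k$ with no gaps. Note that the paper itself states this proposition without proof, citing \cite{JL11}; the argument you give is precisely the standard averaging proof used there, so you have reproduced the intended approach.
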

In the sequel, we denote by $\C_r(u)=\{v\in \mathbb{R}^2 \  | \ d(u,v)=r\}$ the circle centered at $u\in \mathbb{Z}^2$ with radius $r$.

We start the paper in Section~\ref{SectionFirstExample} with a simple example of tolerant identifying codes with Euclidean balls. Then, in Sections~\ref{SectionGeneral1} and \ref{sec:modulo}, we present lower and upper bounds for $(r,\Delta)$-identifying codes for general values of $r$ and $\Delta$. In particular, we obtain infinite families of optimal codes (see Section~\ref{sec:goodcode}). Finally, in Section~\ref{SectionSmall}, we end the paper by considering codes with some specific (small) values of $r$ and $\Delta$.

%
%
%
%
%
%
%
%
%
%
%
\section{First example} \label{SectionFirstExample}

The first nontrivial ball of $\Z^2$ has radius $1$ and the next one has radius $\sqrt{2}$. In this section, we study this first nontrivial case where $r=1$ and $r+\Delta=\sqrt{2}$.
The pairs of vertices that are the 'most' difficult to separate (in the sense that there are few vertices that distinguish them) are pairs of vertices at distance $1$. Also it is good to start by studying $\S_{1,\sqrt{2}-1}(u,v)$ when $v-u=(1,0)$. We will call this set the {\em horizontal pattern}, and for the precise case $(r,\Delta)=(1,\sqrt{2}-1)$ this set is shown in Figure~\ref{fig:basic}. One can notice  that it has only two elements, that means, because of Proposition~\ref{prop:lowpattern}, that the density of an $(1,\sqrt{2}-1)$-identifying code is at least $\frac{1}{2}$. By symmetry, we also know the {\em vertical pattern} depicted in Figure~\ref{fig:basicvert}.

 \begin{figure}[h]
    \begin{center}
      \subfloat[][ \label{fig:basic}]{
      \begin{tikzpicture}[scale=0.6, baseline =-40]
\draw[mygrid] (-2.9,-0.9) grid (1.9,0.9);
\foreach \I in {-2,...,1}\foreach \J in {0}
	{\node[gridnode](\I\J) at (\I,\J) {};}
\foreach \pos in {(-2, 0), (1, 0)}
	\node[pattern] at \pos {};
	
\node[centernode] at (0,0) {};
\node[labelnode] at (0,0.5) {$v$};
\node[centernode] at (-1,0) {};
\node[labelnode] at (-1,0.5) {$u$};
\end{tikzpicture}
      }\hfil
        \subfloat[][ \label{fig:basicvert}]{
      \begin{tikzpicture}[scale=0.6]
\draw[mygrid] (-0.9,-2.9) grid (0.9,1.9);
\foreach \I in {-2,...,1}\foreach \J in {0}
	{\node[gridnode](\I\J) at (\J,\I) {};}
\foreach \pos in {(0,-2), (0, 1)}
	\node[pattern] at \pos {};
	
\node[centernode] at (0,0) {};
\node[labelnode] at (0.5,0) {$v$};
\node[centernode] at (0,-1) {};
\node[labelnode] at (0.5,-1) {$u$};
\end{tikzpicture}
      }\hfil
      \subfloat[][\label{fig:basicdiag}]{
      \begin{tikzpicture}[scale=0.6]
\draw[mygrid] (-2.9,-2.9) grid (1.9,1.9);
\foreach \I in {-2,...,1}\foreach \J in {-2,...,1}
	{\node[gridnode](\I\J) at (\I,\J) {};}
\foreach \pos in {(-2, -1), (-1,-2),(1, 0),(0,1)}
	\node[pattern] at \pos {};
	
\node[centernode] at (0,0) {};
\node[labelnode] at (0,-0.5) {$v$};
\node[centernode] at (-1,-1) {};
\node[labelnode] at (-1,-0.5) {$u$};
\end{tikzpicture}}
\hfil
      \subfloat[][\label{fig:basicantidiag}]{
      \begin{tikzpicture}[scale=0.6]
\draw[mygrid] (-2.9,-2.9) grid (1.9,1.9);
\foreach \I in {-2,...,1}\foreach \J in {-2,...,1}
	{\node[gridnode](\I\J) at (\I,\J) {};}
\foreach \pos in {(0,-2), (-1,1),(1, -1),(-2,0)}
	\node[pattern] at \pos {};
	
\node[centernode] at (-1,0) {};
\node[labelnode] at (0,-0.5) {$u$};
\node[centernode] at (0,-1) {};
\node[labelnode] at (-1,-0.5) {$v$};
\end{tikzpicture}
      }
      \caption{The set $\S_{1,\sqrt{2}-1}(u,v)$ when (a) $v-u=(1,0)$ (horizontal pattern), (b) $v-u=(0,1)$ (vertical pattern),  (c)  $v-u=(1,1)$ (diagonal pattern) and (d) $v-u=(1,-1)$ (anti-diagonal pattern).}
    \end{center}
  \end{figure}
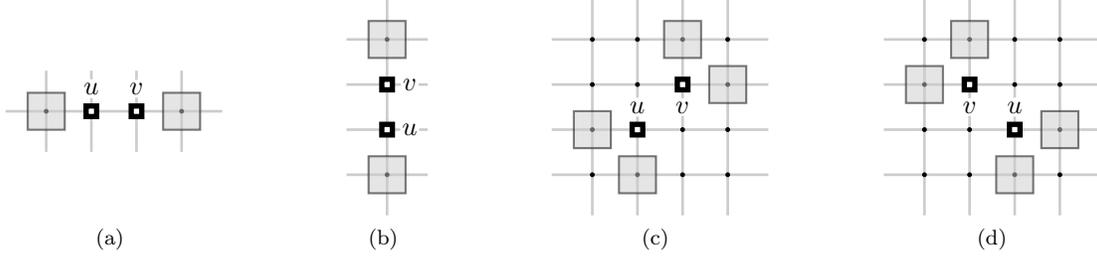

It is easy to find a code with density $\frac{1}{2}$ that is intersecting all the sets $\S_{1,\sqrt{2}-1}(u,v)$ for  $v-u=(1,0)$ or  $v-u=(0,1)$. One can for example take  as a code all the vertices $(x,y)$ such that $x+y\equiv 0 \bmod 2$. But this code will not always intersect the {\em diagonal pattern} $\S_{1,\sqrt{2}-1}(u,v)$ with $v-u=(1,1)$, that is shown on Figure~\ref{fig:basicdiag}.
If we take the code $C$ that is depicted on Figure~\ref{fig:basiccode}, we can show that it is always intersecting the diagonal pattern and the {\em anti-diagonal} pattern ($v-u=(-1,1)$).
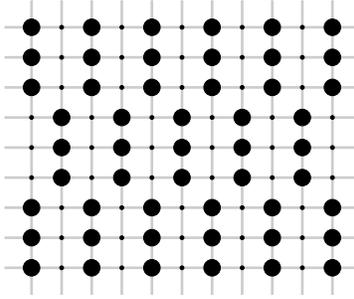
\begin{figure}[h]
\begin{center}
\begin{tikzpicture}[scale=0.4, rotate = 90]
\draw[mygrid] (-4.9,-5.9) grid (4.9,5.9);
\foreach \I in {-4,...,4}\foreach \J in {-5,...,5}
	{\node[gridnode](\I\J) at (\I,\J) {};}
	
\foreach \I in {-4,...,-2,2,3,4} \foreach \J in {-5,-3,...,5}
	{\node[code] at (\I,\J) {};}

\foreach \I in {-1,0,1} \foreach \J in {-4,-2,...,4}
	{\node[code] at (\I,\J) {};}
\end{tikzpicture}
\caption{\label{fig:basiccode} A $(1,\sqrt{2}-1)$-identifying code of optimal density $\frac{1}{2}$.}
\end{center}
\end{figure}
To show that $C$ is a $(1,\sqrt{2}-1)$-identifying code, it remains to show that it is a $1$-dominating set (this is clearly true) and  to check that it is intersecting all the other sets $\S_{1,\sqrt{2}-1}(u,v)$.  For this last point, we can notice that $\S_{1,\sqrt{2}-1}(u,v)$, when $d(u,v)>\sqrt{2}$, has always three vertices forming an $L$-pattern: \Lpat (up to orientation) and that $C$ is intersecting all the $L$-patterns. We can now conclude that $C$ has optimal density and so:
\begin{equation*}
 \D(1,\sqrt{2}-1)=\frac{1}{2} \textrm{.}
 \end{equation*}

\section{General Results} \label{SectionGeneral1}
\subsection{Existence of a code}

We give a necessary and sufficient condition to have an $(r,\Delta)$-identifying code for given values of $r$ and $\Delta$:

\begin{proposition}
There exists an $(r,\Delta)$-identifying code if and only if $\S_{r,\Delta}((0,0),(-1,0))$ is nonempty.
\end{proposition}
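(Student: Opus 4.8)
The plan is to observe first that an $(r,\Delta)$-identifying code exists if and only if the full grid $C=\Z^2$ is one: the separation condition for a pair $u,v$ can be met by any code only if $\S_{r,\Delta}(u,v)$ is nonempty in the first place, and $\Z^2$ dominates trivially since $u\in B_r(u)$. Thus the existence question is equivalent to asking whether $\S_{r,\Delta}(u,v)\neq\emptyset$ for every pair of distinct vertices. The forward implication is then immediate: applying the separation requirement of any $(r,\Delta)$-identifying code to the pair $(0,0),(-1,0)$ forces $\S_{r,\Delta}((0,0),(-1,0))\cap C\neq\emptyset$, hence $\S_{r,\Delta}((0,0),(-1,0))\neq\emptyset$. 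The real content is the converse, for which I must show that nonemptiness for this single closest pair propagates to all pairs.

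For the converse I would use translation invariance, $\S_{r,\Delta}(u,v)=u+\S_{r,\Delta}((0,0),v-u)$, to reduce to proving $\S_{r,\Delta}((0,0),w)\neq\emptyset$ for every nonzero $w\in\Z^2$. Introduce the quantity $\rho(w)=\max_{p\in B_r(0)\cap\Z^2} d(p,w)$; since a maximizing $p$ lies in $B_r(0)\setminus B_{r+\Delta}(w)$, the condition $\rho(w)>r+\Delta$ already forces $\S_{r,\Delta}((0,0),w)\neq\emptyset$. Because $B_r(0)\cap\Z^2$ is invariant under the reflection $x\mapsto -x$, one has $\rho(-w)=\rho(w)$, and the hypothesis $\S_{r,\Delta}((0,0),(-1,0))\neq\emptyset$ translates exactly into $\rho((-1,0))=\rho((1,0))>r+\Delta$. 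It therefore remains to establish the monotonicity statement $\rho(w)\ge\rho((1,0))$ for every nonzero $w$.

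The key step, and the one I expect to be the main obstacle, is a coordinatewise monotonicity lemma: if $w=(a,b)$ and $w'=(a',b')$ satisfy $|a|\le|a'|$ and $|b|\le|b'|$, then $\rho(w)\le\rho(w')$. To prove it I would fix a maximizer $p^\ast$ of $d(\cdot,w)$ over $B_r(0)\cap\Z^2$; exploiting that this set is invariant under sign changes of each coordinate, the identity $(p_x+a)^2-(p_x-a)^2=4ap_x$ shows that at a maximizer each coordinate of $p^\ast$ must be of opposite sign to (or zero against) the corresponding coordinate of $w$, so $p^\ast$ sits in the quadrant opposite to $w$. Placing $p^\ast$ there, enlarging $|a|,|b|$ to $|a'|,|b'|$ can only increase each squared coordinate difference, whence $\rho(w)=d(p^\ast,w)\le d(p^\ast,w')\le\rho(w')$.

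Finally I would close by symmetry. Given any nonzero $w$, reflections and a coordinate swap, all of which preserve $B_r(0)\cap\Z^2$ and hence the value of $\rho$, bring it to the form $(a,b)$ with $a\ge 1$ and $b\ge 0$, which dominates $(1,0)$ coordinatewise. The lemma then gives $\rho(w)\ge\rho((1,0))>r+\Delta$, so $\S_{r,\Delta}((0,0),w)\neq\emptyset$ for every nonzero $w$; combined with the trivial domination, this shows $\Z^2$ is $(r,\Delta)$-identifying and completes the proof.
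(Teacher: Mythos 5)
Your proof is correct and follows essentially the same route as the paper: reduce existence to $\Z^2$ itself being a code, observe that nonemptiness of $\S_{r,\Delta}((0,0),(-1,0))$ yields a witness in the closed first quadrant, and propagate it to every pair $u,v$ by coordinatewise monotonicity of the Euclidean distance together with the reflection symmetries of $B_r(0)\cap\Z^2$. Your $\rho$-maximizer formalism is just a more explicit packaging of the paper's one-line translation argument (and in fact justifies a small point the paper leaves implicit, namely that a suitable witness with positive abscissa and nonnegative ordinate exists).
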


\begin{proof}
By definition, if there exists an $(r,\Delta)$-identifying code $C$, then there should be an element of $C$ in  $\S_{r,\Delta}((0,0),(-1,0))$. Therefore, this set is nonempty.

For the other side, assume that $\S_{r,\Delta}((0,0),(-1,0))$ is nonempty.  Then by symmetry, $\S_{r,\Delta}((0,0),(0,-1))$ is also nonempty as well as all the sets $\S_{r,\Delta}(u,v)$ with $d(u,v)=1$. Let $u,v$ be two vertices of $\Z^2$, with $v=u+(x,y)$. By symmetry, we can assume that $x>0$ and $y\geq 0$. Since $\S_{r,\Delta}((0,0),(-1,0))$ is nonempty, it necessarily contains a vertex $(x',y')$ with $x'>0$ and $y'\geq 0$, and then $v+(x',y')$ is in $\S_{r,\Delta}(u,v)$. Hence, all the sets  $\S_{r,\Delta}(u,v)$ are nonempty and the whole set $\Z^2$ is an $(r,\Delta)$-identifying code.
 \end{proof}

We consider the following definition:
$$\Delta_m(r)=\sup \{\Delta | \text{ there is an } (r,\Delta)\text{-identifying code}\}.$$

\noindent Note that if $\Delta\geq 1$, then clearly, for any radius $r$, $\S_{r,\Delta}((0,0),(-1,0))$ is empty and so there is no $(r,\Delta)$-identifying code. Hence, $\Delta_m(r)$ is well-defined and $\Delta_m(r)\leq 1$.
Furthermore, for a fixed $r$, an $(r,\Delta)$-identifying code for some $\Delta\geq 0$, is also an $(r,\Delta')$-identifying code for any $0\leq \Delta'\leq \Delta$. Therefore, for any $0\leq \Delta < \Delta_m(r)$ there is an $(r,\Delta)$-identifying code whereas for any $\Delta \geq \Delta_m(r)$ there is no $(r,\Delta)$-identifying code.
When $r$ is an integer, $\Delta_m(r)= 1$, because $(r,0) \in \S_{r,\Delta}((0,0),(-1,0))$ for $\Delta <1$.  When $r$ is not an integer, we have:

\begin{proposition}\label{prop:maxdelta}
When $r\to +\infty$, we have
$$\Delta_m(r) \geq 1-\sqrt{\frac{2}{r}}+O\left(\frac{1}{r}\right).$$
\end{proposition}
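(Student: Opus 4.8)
The plan is to exploit the existence criterion just established: since an $(r,\Delta)$-identifying code exists if and only if $\S_{r,\Delta}((0,0),(-1,0))$ is nonempty, we have
$$\Delta_m(r) = \sup\acc{\Delta \mid \S_{r,\Delta}((0,0),(-1,0)) \neq \emptyset}.$$
By definition $\S_{r,\Delta}((0,0),(-1,0))$ contains every lattice point $(x,y)$ with $x^2 + y^2 \le r^2$ and $(x+1)^2 + y^2 > (r+\Delta)^2$. Hence it suffices to exhibit a \emph{single} lattice point $(x,y)\in B_r((0,0))$ that is far from $(-1,0)$: if $\sqrt{(x+1)^2+y^2} = r + g$, then $(x,y)\in \S_{r,\Delta}((0,0),(-1,0))$ for every $\Delta < g$, so $\Delta_m(r) \ge g$. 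The whole task thus reduces to producing one point inside the ball whose ``gain'' $g$ satisfies $g \ge 1 - \sqrt{2/r} + O(1/r)$.

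Geometrically, the best candidate lies near the rightmost point $(r,0)$ of the circle $\C_r((0,0))$, where the shift by $(-1,0)$ adds almost a full unit to the distance. I would therefore take
$$x = \floor{r}, \qquad y = \floor{\sqrt{r^2 - \floor{r}^2}},$$
i.e.\ the lattice point on the vertical line $X=\floor{r}$ lying just inside the circle. It automatically belongs to $B_r((0,0))$, it subtends a small angle with the positive $x$-axis (so the gain is nearly $1$), and it sits close to the circle (so little is lost radially).

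The computation then runs as follows. Put $s = \sqrt{r^2-\floor{r}^2}$; from $r^2-\floor{r}^2 = (r-\floor{r})(r+\floor{r}) \le 2r$ we get $s \le \sqrt{2r}$, while $y = \floor{s}$. Writing $\rho^2 := x^2+y^2 = r^2 - (s-\floor{s})(s+\floor{s})$ and using $0\le s-\floor{s}<1$ together with $s+\floor{s}\le 2s\le 2\sqrt{2r}$ gives $\rho^2 \ge r^2 - 2\sqrt{2r}$. Since $(x+1)^2+y^2 = \rho^2 + 2\floor{r} + 1 \ge \rho^2 + 2r - 1$, this yields
$$(x+1)^2 + y^2 \ge (r+1)^2 - 2\sqrt{2r} - 2.$$
Taking the square root and expanding $\sqrt{1-t}$ to first order (here $t = O(r^{-3/2})$) produces $\sqrt{(x+1)^2+y^2} \ge r + 1 - \sqrt{2/r} + O(1/r)$, hence $g \ge 1 - \sqrt{2/r} + O(1/r)$, which is the claim.

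The only delicate part is the asymptotic bookkeeping: I must confirm that the additive slack $-2$ and the term $(s-\floor{s})(s+\floor{s})$ contribute only $O(1/r)$ once the square root is taken — which they do, being divided by a quantity of order $r$ — and that the first-order expansion of the square root is legitimate since the relative perturbation is $O(r^{-3/2})$. No pigeonhole over several horizontal lines is required for the stated bound, although choosing $y$ so that the fractional part of $\sqrt{r^2-y^2}$ is small would sharpen the constant in front of $\sqrt{2/r}$.
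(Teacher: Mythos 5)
Your proposal is correct and follows essentially the same route as the paper: both take the witness point $u=(\lfloor r\rfloor,\lfloor\sqrt{r^2-\lfloor r\rfloor^2}\rfloor)$, reduce the claim to lower-bounding $d((-1,0),u)-r$ via the nonemptiness of $\S_{r,\Delta}((0,0),(-1,0))$, and use the bound $\sqrt{r^2-\lfloor r\rfloor^2}\le\sqrt{2r}$ followed by a first-order expansion of the square root. The only difference is cosmetic bookkeeping (you bound $x^2+y^2$ directly rather than expanding $(\sqrt{\alpha(2r-\alpha)}-1)^2$ as the paper does), and your asymptotic analysis is sound.
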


\begin{proof}
Consider the largest abscissa $\lfloor r \rfloor$ of a vertex of $B_r((0,0))$, and let $u$ be the vertex of $B_r((0,0))$ with the largest ordinate in the column $\lfloor r \rfloor$.
Then $u$ has coordinates $(\lfloor r \rfloor,  \lfloor \sqrt{r^2 - \lfloor r \rfloor^2} \rfloor)$.
Let $\alpha=r-\lfloor r \rfloor$, then we have $u=(r-\alpha,\lfloor \sqrt{\alpha(2r-\alpha)} \rfloor)$.
Denote $\Delta_0 = d((-1,0),u)-r$. Then, we have $\Delta_0>0$. For $\Delta < \Delta_0$, $u\in \S_{r,\Delta}((0,0),(-1,0))$, so $\Delta_m(r)\geq \Delta_0$. We now compute a lower bound of $\Delta_0$, using the fact that $\alpha \in [0,1[$:

\begin{eqnarray*}
r+\Delta_0 & = &  \left( (r-\alpha +1)^2+ (\lfloor \sqrt{\alpha(2r-\alpha)} \rfloor)^2\right)^{\frac{1}{2}}\\
&\geq & \left( (r-\alpha +1)^2+ (\sqrt{\alpha(2r-\alpha)}-1)^2\right)^{\frac{1}{2}}\\
&\geq & (r^2 + 2r - 2\sqrt{\alpha(2r-\alpha)} +2-2\alpha)^{\frac{1}{2}}\\
&\geq & (r^2+2r- 2\sqrt{2r})^{\frac{1}{2}}\\
&=& r\left(1+\frac{2}{r} -  \frac{2\sqrt{2}}{r^{3/2}}\right)^{\frac{1}{2}}\\
\end{eqnarray*}

As $r\to +\infty$, using Taylor series, we have:

\begin{eqnarray*}
r+\Delta_0  &\geq & r\left(1+\frac{1}{r} -  \frac{\sqrt{2}}{r\sqrt{r}}+ O\left(\frac{1}{r^2}\right)\right).\\
\end{eqnarray*}

And so:
\begin{equation*}
\Delta_m(r) \geq \Delta_0 \geq 1- \sqrt{\frac{2}{r}} + O\left(\frac{1}{r}\right).
\end{equation*}

\end{proof}

In the following, we will consider only values $(r,\Delta)$ such that there is an $(r,\Delta)$-identifying code.

\subsection{Study of the horizontal pattern $\S_{r,\Delta}((0,0),(-1,0))$}

As said before, the pair of vertices that are the most difficult to identify are vertices at distance $1$. Thus, it is important to have a good knowledge of the horizontal pattern $\S_{r,\Delta}((0,0),(-1,0))$ (by symmetry, that will also give us knowledge on the vertical pattern). In what follows we will use the following notations:

\begin{itemize}
\item $h_{r,\Delta}(x)=\sqrt{r^2-x^2} -\sqrt{(r+\Delta)^2-(x+1)^2}$, defined for $x\in [0,\lfloor r+\Delta-1 \rfloor]$ is the signed vertical distance at abscissa $x$ between the circles $\C_r((0,0))$ and $\C_{r+\Delta}((-1,0))$. When $x\geq 0$ and $h_{r,\Delta}(x)\leq 0$, there cannot be any vertex with abscissa $x$ in $\S_{r,\Delta}((0,0),(-1,0))$. When $h_{r,\Delta}(x)> 1$, there is always a vertex with abscissa $x$ in $\S_{r,\Delta}((0,0),(-1,0))$. Finally, when $h_{r,\Delta}(x)\in ]0,1]$, there is at most one vertex with abscissa $x$ in $\S_{r,\Delta}((0,0),(-1,0))$.
That leads to the next definitions:

\item $x_0(r,\Delta)$ (or $x_0$ when the context is clear) is the smallest nonnegative abscissa of an element of $\S_{r,\Delta}((0,0),(-1,0))$. It is at least the ceiling of the positive solution of $h_{r,\Delta}(x)=0$. We have:
\begin{equation*}
x_0(r,\Delta) \geq \left \lceil {\frac{\Delta(2r+\Delta)-1}{2}}\right \rceil = r\Delta +O(1) \text{ as } r\to +\infty.
\end{equation*}

\item $x_1(r,\Delta)$ (or $x_1$ when the context is clear) is the floor of the positive solution of $h_{r,\Delta}(x)=1$. The exact value of $x_1$ is:
\begin{equation}\label{eq:x1}
x_1(r,\Delta) =\left \lfloor { \frac{1}{4}\left(-2+\Delta^2+2r\Delta+ \sqrt{-4+4\Delta^2-\Delta^4+8r\Delta -4r\Delta^3+8r^2-4r^2\Delta^2}\right)}\right \rfloor.
\end{equation}

As $r\to +\infty$, we obtain:
\begin{equation*}
x_1(r,\Delta) = r\left(\frac{\Delta}{2}+\frac{\sqrt{2-\Delta^2}}{2}\right) + O(1).
\end{equation*}

We will also need the value of $x_1$ when $\Delta$ is close to $1$. Assume that $\Delta =1 -\epsilon(r)$ with $\epsilon(r)$ of order $\frac{1}{r^{\alpha}}$ for some real number $\alpha$ as $r\to +\infty$. By Proposition~\ref{prop:maxdelta}, we always have an $(r,\Delta)$-identifying code if $\epsilon(r)\geq \sqrt{\frac{2}{r}}+O(\frac{1}{r})$. Hence, we assume that $0<\alpha\leq \frac{1}{2}$. Then, as $r\to +\infty$:

\begin{equation}\label{eq:x1eps}
x_1(r,\Delta) = r-\frac{r\epsilon(r)^2}{2} + O(r\epsilon(r)^3) + O(1).
\end{equation}

In the special case $\alpha=\frac{1}{2}$, we obtain that $r-x_1$ is bounded.

\item $m(r,\Delta)$ (or $m$ when the context is clear) is the ceiling of the value of $h_{r,\Delta}$ in abscissa $\lfloor r+\Delta -1 \rfloor$. The exact value of $m$ is:
\begin{equation}\label{eq:m}
m(r,\Delta)= \left\lceil \sqrt{r^2-\lfloor r+\Delta -1 \rfloor^2}-\sqrt{(r+\Delta)^2-\lfloor r+\Delta\rfloor^2} \right\rceil.
\end{equation}

We have, when $r\to +\infty$:

\begin{equation} \label{eq:mApprox}
m(r,\Delta)\leq 2\sqrt{r}+O(1) \textrm{.}
\end{equation}

\end{itemize}

\noindent Because $h_{r,\Delta}$ is a strictly increasing function on $[0,\lfloor r+\Delta-1 \rfloor]$, $m(r,\Delta)$ is an upper bound of $h_{r,\Delta}$. Furthermore, there are, in $\S_{r,\Delta}((0,0),(-1,0))$, for an integer $x\in[0,\lfloor r+\Delta-1 \rfloor]$:
\begin{itemize}
\item no vertex of abscissa $x$ if $x< x_0$,
\item at most one vertex of abscissa $x$ if $x_0 \leq x\leq x_1$,
\item at least one and at most $m$ vertices of abscissa $x$ if $x>x_1$.
\end{itemize}

We finish this section by a result that will be used in Section~\ref{sec:modulo}:

\begin{lemma}\label{lem:dist}
Let $u$ and $v$ be two vertices of $\Z^2$, lying on the same vertical or horizontal line.
If $d(u,v)\leq 4x_0(r,\Delta)+1$, then there are two vertices $u'$ and $v'$ of $\Z^2$, at distance $1$, such that $\S_{r,\Delta}(u',v')\subseteq \S_{r,\Delta}(u,v)$.
\end{lemma}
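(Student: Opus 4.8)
The plan is to reduce to a horizontal pair and then exhibit an explicit unit pair lying ``inside'' $\S_{r,\Delta}(u,v)$. By the symmetries of $\Z^2$ (rotation by a right angle exchanges the horizontal and vertical cases, and $\S_{r,\Delta}$ is symmetric in its two arguments) I may assume $u=(0,0)$ and $v=(k,0)$, where $k=d(u,v)$ is a positive integer with $k\le 4x_0+1$; here I abbreviate $x_0=x_0(r,\Delta)$. Write $\S_{r,\Delta}((0,0),(k,0))=L\cup R$, with left lune $L=B_r((0,0))\setminus B_{r+\Delta}((k,0))$ and right lune $R=B_r((k,0))\setminus B_{r+\Delta}((0,0))$. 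For an integer $a$ I will take $u'=(a,0)$ and $v'=(a+1,0)$. The key observation is that $\S_{r,\Delta}((a,0),(a+1,0))=(a+1,0)+\S_{r,\Delta}((0,0),(-1,0))$, so by the very definition of $x_0$ its right lune $R'=B_r((a+1,0))\setminus B_{r+\Delta}((a,0))$ contains only lattice points of abscissa at least $a+1+x_0$, and its left lune $L'=B_r((a,0))\setminus B_{r+\Delta}((a+1,0))$ only lattice points of abscissa at most $a-x_0$. It then suffices to choose $a$ so that $R'\subseteq R$ and $L'\subseteq L$.

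Next I would prove the inclusion $R'\subseteq R$ by a squared-distance comparison. A lattice point $(x,y)\in R'$ satisfies $(x-a-1)^2+y^2\le r^2$ and $(x-a)^2+y^2>(r+\Delta)^2$, together with $x\ge a+1+x_0$. To place it in $R$ I need $(x-k)^2+y^2\le r^2$ and $x^2+y^2>(r+\Delta)^2$. The second is immediate: for $a\ge 0$ one has $x\ge a+1+x_0\ge a/2$, hence $x^2\ge(x-a)^2$ and therefore $x^2+y^2\ge(x-a)^2+y^2>(r+\Delta)^2$. For the first, note $(x-k)^2\le(x-a-1)^2$ holds exactly when $x\ge\frac{k+a+1}{2}$; since the smallest abscissa occurring in $R'$ is $a+1+x_0$, the whole lune satisfies this as soon as $a+1+x_0\ge\frac{k+a+1}{2}$, i.e. $a\ge k-2x_0-1$. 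Thus $R'\subseteq R$ whenever $a\ge\max(0,\,k-2x_0-1)$.

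Finally I would invoke the reflection $\sigma(x,y)=(k-x,y)$, which fixes $\S_{r,\Delta}((0,0),(k,0))$ (exchanging $L$ and $R$) and carries the pair $\{(a,0),(a+1,0)\}$ to $\{(k-a-1,0),(k-a,0)\}$, sending $L'$ to the right lune of the latter pair. Applying the previous paragraph with $a$ replaced by $a'=k-a-1$ then shows $L'\subseteq L$ whenever $a\le\min(2x_0,\,k-1)$. Consequently, any integer $a$ with
$$\max(0,\,k-2x_0-1)\le a\le\min(2x_0,\,k-1)$$
gives $\S_{r,\Delta}(u',v')=L'\cup R'\subseteq L\cup R=\S_{r,\Delta}(u,v)$ with $d(u',v')=1$. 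Such an integer exists precisely because the four endpoint inequalities all reduce to trivial facts except $k-2x_0-1\le 2x_0$, which is our hypothesis $k\le 4x_0+1$. The main obstacle is exactly this bookkeeping: a naive estimate using the continuous ``tips'' of the lunes (situated near abscissa $\frac{\Delta(2r+\Delta)-1}{2}$) loses the integer rounding encoded in $x_0$ and only yields a bound of the shape $k\le 2\Delta(2r+\Delta)-1$, which need not reach $4x_0+1$. Working throughout with the lattice definition of $x_0$, rather than with $\Delta(2r+\Delta)$, is what makes the existence of $a$ equivalent to the sharp hypothesis $k\le 4x_0+1$.
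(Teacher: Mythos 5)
Your proof is correct and takes essentially the same route as the paper's: both reduce to a horizontal pair, place a unit pair near the midpoint of the segment $uv$, and verify the lune inclusions by comparing distances to the shifted centres, using the fact that the right lune of a unit pair lies at abscissa at least $x_0$ beyond its right endpoint (so that $k\le 4x_0+1$ forces each such point to be at least as close to the far centre). The only cosmetic difference is that the paper fixes the centred unit pair and treats the odd and even cases of $d(u,v)$ separately, whereas you parametrise the unit pair by $a$ and show the admissible range is nonempty exactly under the hypothesis.
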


\begin{proof}
We assume first that $d(u,v)$ is odd: $d(u,v)=2k+1$ with $k\leq 2x_0$.
Without loss of generality, we can assume that $u=(-k-1,0)$ and that $v=(k,0)$.
Let $u'=(-1,0)$ and $v'=(0,0)$.
We will show that $\S_{r,\Delta}(u',v')\subseteq \S_{r,\Delta}(u,v)$.
Let $w=(x,y) \in \S_{r,\Delta}(u',v')$. By symmetry we can assume that $w\in B_r(v')\setminus B_{r+\Delta}(u')$, and that $y\geq 0$. Then clearly $d(w,u)\geq d(w,u')>r+\Delta$ so $w\notin B_{r+\Delta}(u)$.
If $x\geq k$, then $d(w,v)\leq d(w,v')\leq r$. Otherwise, $x\geq x_0$ and $k\leq 2x_0$, so $k-x\leq x$, and we also have $d(w,v)\leq d(w,v')$.
Hence, finally $w\in B_r(v)\setminus B_{r+\Delta}(u)\subseteq \S_{r,\Delta}(u,v)$.
The case $d(u,v)$ even is the same.
\end{proof}

\subsection{Lower bound}

The bound used in \cite{JL11} when $\Delta=0$ is still valid here:

\begin{equation}\label{eq:lowerfixed}
D(r,\Delta)\geq D(r,0)\geq \frac{1}{3.22r+4}.
\end{equation}

This bound is good for fixed value of $\Delta$ when $r$ grows. But when $\Delta$ approaches $1$ as $r$ tends to infinity, we will have a better bound. For this we will use the following proposition. This lower bound is sharp in some cases as we will see in the next section. It is a direct consequence of Proposition~\ref{prop:lowpattern}:

\begin{proposition}\label{prop:lowerbound}
We have
\[
\D(r,\Delta)\geq \frac{1}{\vert \S_{r,\Delta}((0,0),(-1,0)) \vert} \textrm{.}
\]
\end{proposition}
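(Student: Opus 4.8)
The plan is to invoke Proposition~\ref{prop:lowpattern} with a cleverly chosen translate of a single pattern set. The key observation is that the horizontal pattern $\S_{r,\Delta}((0,0),(-1,0))$ must be intersected by any $(r,\Delta)$-identifying code, and by translation-invariance of the grid $\Z^2$ the \emph{same} constraint applies to every translated copy of this pattern. So I would set $S = \S_{r,\Delta}((0,0),(-1,0))$ and $k = |\S_{r,\Delta}((0,0),(-1,0))|$, and argue that any $(r,\Delta)$-identifying code $C$ satisfies $|(v+S)\cap C|\geq 1$ for every $v\in\Z^2$; then Proposition~\ref{prop:lowpattern} with $m=1$ yields the claimed bound $\D(r,\Delta)\geq \frac{1}{k}$.

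The main step is verifying the hypothesis $|(v+S)\cap C|\geq m=1$ for all $v$. First I would note that, by the definition of an $(r,\Delta)$-identifying code, for the specific pair $u=v$ and $v'=v+(-1,0)$ we must have $\S_{r,\Delta}(v, v+(-1,0))\cap C\neq\emptyset$. The point is to recognize that $\S_{r,\Delta}(v, v+(-1,0))$ is exactly the translate $v + \S_{r,\Delta}((0,0),(-1,0)) = v+S$; this follows because the ball operator commutes with translation, i.e.\ $B_\rho(v+w)=v+B_\rho(w)$, so each of the two set differences defining $\S_{r,\Delta}$ translates accordingly. Hence $(v+S)\cap C\neq\emptyset$, which is precisely $|(v+S)\cap C|\geq 1$.

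The only mild subtlety to handle is that Proposition~\ref{prop:lowpattern} requires $S$ to be a set of $k$ \emph{distinct} vertices, so I would remark that $k=|\S_{r,\Delta}((0,0),(-1,0))|$ is by definition the number of distinct vertices in this finite pattern, and that we are working under the standing assumption (stated just before the lower-bound subsection) that an $(r,\Delta)$-identifying code exists, which guarantees $k\geq 1$ so the bound is meaningful. I do not anticipate a serious obstacle here: the entire proof is a one-line application of an already-established counting proposition, with the only real content being the identification of the shifted pattern with the separating set for an adjacent pair. The proof can therefore be written in just a few sentences.

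\begin{proof}
Let $C$ be any $(r,\Delta)$-identifying code and write $S=\S_{r,\Delta}((0,0),(-1,0))$, $k=|S|$. Since the ball operator satisfies $B_\rho(v+w)=v+B_\rho(w)$, for every $v\in\Z^2$ we have
\[
\S_{r,\Delta}(v,v+(-1,0)) = v+\S_{r,\Delta}((0,0),(-1,0)) = v+S.
\]
By the definition of an $(r,\Delta)$-identifying code applied to the distinct vertices $v$ and $v+(-1,0)$, the set $\S_{r,\Delta}(v,v+(-1,0))\cap C$ is nonempty, so $|(v+S)\cap C|\geq 1$ for all $v\in\Z^2$. Applying Proposition~\ref{prop:lowpattern} with $m=1$ gives $D(C)\geq\frac{1}{k}$. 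As this holds for every $(r,\Delta)$-identifying code $C$, we conclude $\D(r,\Delta)\geq\frac{1}{|\S_{r,\Delta}((0,0),(-1,0))|}$.
\end{proof}
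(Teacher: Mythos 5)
Your proof is correct and is exactly the argument the paper intends: the paper states the bound as ``a direct consequence of Proposition~\ref{prop:lowpattern}'', and your identification of the translated horizontal pattern $v+\S_{r,\Delta}((0,0),(-1,0))$ with $\S_{r,\Delta}(v,v+(-1,0))$, followed by an application of that proposition with $m=1$, is precisely that consequence spelled out. No gaps.
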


One can write the exact value of $\vert \S_{r,\Delta}((0,0),(-1,0)) \vert$:
\begin{equation*}
\vert \S_{r,\Delta}((0,0),(-1,0)) \vert = 4\sum_{x=x_0(r,\Delta)}^{\lfloor r +\Delta -1 \rfloor}\left(\lfloor\sqrt{r^2-x^2}\rfloor- \lfloor\sqrt{(r+\Delta)^2-(x+1)^2}\rfloor\right)
+ \delta \cdot (4\lfloor\sqrt{r^2-\lfloor r \rfloor^2}\rfloor +2)
\end{equation*}
\noindent where $\delta=1$ if $\lfloor r \rfloor =\lfloor r+\Delta\rfloor$ and $0$ otherwise.

To obtain a concrete  lower bound of $\D(r,\Delta)$, we estimate $|\S_{r,\Delta}((0,0),(-1,0))|$ using previous notation and noticing that $\lfloor\sqrt{r^2-x^2}\rfloor- \lfloor\sqrt{(r+\Delta)^2-(x+1)^2}\rfloor$ is between $\lfloor h_{r,\Delta}(x) \rfloor$ and $\lceil h_{r,\Delta}(x) \rceil$:

\begin{eqnarray*}
\vert \S_{r,\Delta}((0,0),(-1,0)) \vert &\leq& 4((x_1-x_0+1) +m\cdot(\lfloor r+ \Delta -1\rfloor-x_1))+ \delta \cdot (4\lfloor\sqrt{r^2-\lfloor r \rfloor^2}\rfloor +2) \textrm{.}\\
\end{eqnarray*}

Assume that $\Delta =1 -\epsilon(r)$ with $\epsilon(r)$ of order $\frac{1}{r^{\alpha}}$ as $r\to \infty$, with $\alpha\leq \frac{1}{2}$. Using the equations~\eqref{eq:x1eps} and \eqref{eq:mApprox}, we can show that the order of $\vert \S_{r,\Delta}((0,0),(-1,0)) \vert$ is at most $r\sqrt{r}\epsilon(r)^2$ and so the density  $D(r,\Delta)$ has order at least $\frac{1}{r\sqrt{r}\epsilon(r)^2}$.


In the particular case where $\alpha=\frac{1}{2}$, i.e., $\Delta$ is really close to $\Delta_m$, we obtain that $D(r,\Delta)$ has order at least $\frac{1}{\sqrt{r}}$ which is better than the order in the general case.


\subsection{Upper bound}\label{sec:consgeneral}

To obtain upper bounds, we construct codes with the following basic sets, defined for a positive integer $k$:
\begin{itemize}
\item $L_k^{v}=\{(x,y) \in \Z^2 | \ x\equiv 0\bmod k\}$,
\item $L_k^{h}=\{(x,y) \in \Z^2 | \ y \equiv 0\bmod k\}$.
\end{itemize}

\begin{proposition}
Let $k=\lfloor r\rfloor-x_1(r,\Delta)$. Then $C_{r,\Delta}=L_k^v\cup L_k^h$ is an $(r,\Delta)$-identifying code, and we have:
\begin{equation*}
D(r,\Delta)\leq \frac{2}{k} \textrm{.}
\end{equation*}
\end{proposition}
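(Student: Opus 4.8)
The plan is to verify the three defining properties of an $(r,\Delta)$-identifying code for $C_{r,\Delta}=L_k^v\cup L_k^h$ with $k=\lfloor r\rfloor-x_1$: the density bound, domination, and separation, with the last being the real work. The density is immediate, since $L_k^v$ and $L_k^h$ each have density $\frac1k$, so a union bound gives $D(C_{r,\Delta})\le \frac2k$. Domination is also easy: for any $u\in\Z^2$ the horizontal segment $\{(u_x+i,u_y)\mid |i|\le\lfloor r\rfloor\}$ lies in $B_r(u)$ and consists of $2\lfloor r\rfloor+1\ge k$ consecutive lattice points, so it meets $L_k^v$; hence $B_r(u)\cap C_{r,\Delta}\neq\emptyset$.

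For separation I would show that every set $\S_{r,\Delta}(u,v)$ contains $k$ consecutive nonempty columns (or rows). Because $C_{r,\Delta}$ is invariant under the reflections $x\mapsto -x$, $y\mapsto -y$ and the swap $x\leftrightarrow y$, and these isometries carry $\S_{r,\Delta}(u,v)$ to $\S_{r,\Delta}$ of the transformed pair, I may assume $v=u+(a,b)$ with $a\ge 1$ and $b\ge 0$; the remaining case $a=0$, $b\ge1$ becomes the previous one after the swap $x\leftrightarrow y$ and is then settled through $L_k^h$. I would work with the piece $B_r(v)\setminus B_{r+\Delta}(u)\subseteq\S_{r,\Delta}(u,v)$, calling a column \emph{exposed} if it contains a vertex of this set. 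The claim is that all $k$ columns of abscissa $v_x+j$ with $x_1<j\le\lfloor r\rfloor$ are exposed. As these are $k$ consecutive integers, exactly one satisfies $v_x+j\equiv 0\bmod k$, and the vertex there lies on $L_k^v\subseteq C_{r,\Delta}$, which finishes the separation.

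To prove the per-column claim, fix $j$ with $x_1<j\le\lfloor r\rfloor$ and write $s=\sqrt{r^2-j^2}$ for the half-height of $B_r(v)$ at abscissa $v_x+j$. If $a+j>r+\Delta$ then $B_{r+\Delta}(u)$ does not reach this abscissa and $(v_x+j,v_y)$ already lies in $B_r(v)\setminus B_{r+\Delta}(u)$. Otherwise let $t=\sqrt{(r+\Delta)^2-(a+j)^2}$ be the half-height of $B_{r+\Delta}(u)$ there; the part of the column of $B_r(v)$ lying strictly above $B_{r+\Delta}(u)$ is the interval $(u_y+t,\,v_y+s]$, of length $b+s-t$. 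Using $a\ge1$ one gets $t\le\sqrt{(r+\Delta)^2-(j+1)^2}$, hence $b+s-t\ge b+h_{r,\Delta}(j)\ge h_{r,\Delta}(j)>1$ since $j>x_1$; an interval of length $>1$ contains a lattice point, the sought vertex, provided this interval really lies inside the column of $B_r(v)$.

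The main obstacle is exactly that proviso: the argument collapses if $B_{r+\Delta}(u)$ swallows the whole column of $B_r(v)$, i.e. if $t\ge b+s$. I would rule this out by proving $t<s$ for $j>x_1$. Indeed $(r+\Delta)^2-(a+j)^2\le(r+\Delta)^2-(j+1)^2$, and the difference $\big((r+\Delta)^2-(j+1)^2\big)-(r^2-j^2)=\Delta(2r+\Delta)-1-2j$ is negative as soon as $2j>\Delta(2r+\Delta)-1$; this holds because $j>x_1\ge x_0\ge\frac{\Delta(2r+\Delta)-1}{2}$, using the stated lower bound on $x_0$. Then $t<s\le b+s$, the exposed cap is nonempty and contained in $B_r(v)$, and the column is exposed. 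The only delicate points are the relation $x_0\le x_1$ and the two borderline pairs $(a,b)=(1,0)$ and $(0,1)$ (the horizontal and vertical patterns themselves), for which I would fall back on the already established structure of $\S_{r,\Delta}((0,0),(-1,0))$, where every column with abscissa $x>x_1$ carries a vertex by the very definition of $x_1$.
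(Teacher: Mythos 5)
Your proof is correct and follows essentially the same route as the paper: both arguments show that for $a\ge 1$, $b\ge 0$ every column of abscissa $v_x+j$ with $x_1<j\le\lfloor r\rfloor$ meets $B_r(v)\setminus B_{r+\Delta}(u)$ because the relevant vertical interval has length at least $h_{r,\Delta}(j)>1$ (the paper packages this once and for all as the containment $v+X_1\subseteq\S_{r,\Delta}(u,v)$, where $X_1$ is the first-quadrant part of the horizontal pattern), and then one of these $k$ consecutive columns hits $L_k^v$. The only cosmetic remark is that your containment proviso guards the wrong end of the interval: $t<b+s$ already follows from your length bound $b+s-t>1$, and the genuine degenerate case is $u_y+t<v_y-s$ (i.e.\ $b>s+t$, the $u$-ball lying entirely below the $v$-column), where the exposed set is the whole column and trivially contains the lattice point $(v_x+j,v_y)$.
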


\begin{proof}

Let $X_1=\S_{r,\Delta}((0,0),(-1,0))\cap \{(x,y) \in \Z^2 | \ x\geq 0, y\geq 0\}$ be the vertices of $\S_{r,\Delta}((0,0),(-1,0))$ that lie in the first quadrant. In the same way, let $X_2 = \S_{r,\Delta}((0,0),(-1,0))\cap \{ (x,y)  \in \Z^2| \ x\geq 0, y\leq 0\}$, and for the vertical pattern: $X_3=\S_{r,\Delta}((0,0),(0,-1))\cap \{(x,y)  \in \Z^2| \ x\geq 0, y\geq 0\}$ and $X_4=\S_{r,\Delta}((0,0),(0,-1))\cap \{(x,y)  \in \Z^2| \ x\leq 0, y\geq 0\}$.
For any pair of vertices $u$ and $v$ of $\Z^2$, with $u\neq v$, there is $i\in \{1,2,3,4\}$ such that $u+X_i\subset \S_{r,\Delta}(u,v)$ or $v+X_i\subset \S_{r,\Delta}(u,v)$ .

In $X_1$, by definition of $x_1(r,\Delta)$, there are vertices with abscissas between $x_1+1$ and $\lfloor r \rfloor$, so there are vertices with $k$ consecutive abscissas in $X_1$.
This implies that $(u+X_1)\cap C_{r,\Delta} \neq \emptyset$ for all $u\in \Z^2$.
This is also true for $X_2$, $X_3$, $X_4$ (by considering the ordinates and horizontal lines for $X_3$ and $X_4$).
Finally, for any pair of vertices $u$ and $v$,  $\S_{r,\Delta}(u,v)\cap C_{r,\Delta} \neq \emptyset$. Moreover, $C_{r,\Delta}$ is an $r$-dominating set and so it is an $(r,\Delta)$-identifying code.
Clearly, $C_{r,\Delta}$ has density $\frac{2}{k}$.
\end{proof}

As $r$ grows, this leads to the following upper bound:
\begin{equation}\label{eq:upperfixed}
D(r,\Delta) \leq \frac{4}{r(2-\Delta-\sqrt{2-\Delta^2})+K}
\end{equation}
where $K$ is a constant. Combining (\ref{eq:lowerfixed}) and (\ref{eq:upperfixed}), we know that an optimal $(r,\Delta)$-identifying code, for $\Delta$ fixed, has order $\frac{1}{r}$ as $r\to +\infty$.

Assume that $\Delta=1-\epsilon(r)$, with $\epsilon(r)$ of order $\frac{1}{r^{\alpha}}$ as $r\to \infty$, and $\alpha\leq \frac{1}{2}$.  Then, we obtain that $D(r,\Delta)$ has order at most $\frac{1}{r\epsilon(r)^2}$. In the particular case where $\Delta$ is really close to $\Delta_m$ ($\alpha=\frac{1}{2}$), the result is trivial. As we will see in the next section, for infinite family of $(r,\Delta)$ there are optimal codes of density $\frac{1}{2}$ so we cannot expect in this case to have a general upper bound of order better than a constant.

\section{Better constructions for given values of $(r,\Delta)$}\label{sec:modulo}
\subsection{General construction}
The construction of Section~\ref{sec:consgeneral} does not use the full symmetry of the set  $\S_{r,\Delta}((0,0),(-1,0))$. We can often construct better codes using diagonal lines that utilize the symmetry. For this we need to have more information about $\S_{r,\Delta}((0,0),(-1,0))$ and $\S_{r,\Delta}((0,0),(-1,-1))$ that we cannot compute in the general case. The following construction of Proposition~\ref{prop:method} should be seen as a method to construct $(r,\Delta)$-identifying codes for a given value of $(r,\Delta)$.

 We say that a set $U$ of $\Z^2$ is {\em intersecting all the diagonal} (resp. {\em anti-diagonal, horizontal and vertical}) {\em lines modulo $k$}  if for all $i\in\{0,1,\ldots,k-1\}$, there is an element $(x,y)$ of $U$ such that $y-x\equiv i \bmod k$ (resp. $x+y\equiv i\bmod k$, $y\equiv i\bmod k$ and $x \equiv i \bmod k$). As an example, the set $\S_{3,\sqrt{10}-3}((0,0),(-1,0))$ (see Figure~\ref{fig:modulo}) is intersecting all the diagonal and anti-diagonal lines modulo 6 whereas the set $\S_{3,\sqrt{10}-3}((0,0),(-1,-1))$ is intersecting all the horizontal and vertical lines modulo 8.

 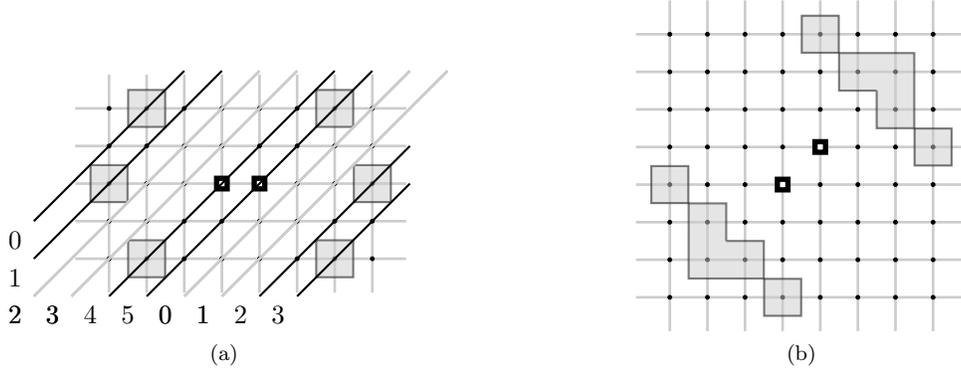
\begin{figure}[h]
 \begin{center}
 \subfloat[][\label{fig:diagmod6}]{
 \begin{tikzpicture}[scale=0.5]
\draw[mygrid] (-4.9,-2.9) grid (3.9,2.9);
\foreach \I in {-4,...,3}\foreach \J in {-2,...,2}
	{\node[gridnode](\I\J) at (\I,\J) {};}
\foreach \pos in {(-3, -2),(-4,0),(-3,2), (2,-2),(2,2),(3,0)}
	\node[pattern] at \pos {};
	
\node[centernode] at (0,0) {};
\node[centernode] at (-1,0) {};

\draw[thick, xshift=-3 cm] (-3,-1)--(1,3);
\node at (-3.5-3,-1.5) {\color{roug}$0$};
\draw[thick, xshift=-2 cm] (-4,-2)--(1,3);
\node at (-4.5-2,-2.5) {\color{roug}$1$};

\foreach \I / \d in {1/4,2/5}
{\draw[thick, xshift=\I cm] (-5,-3)--(1,3);
\node at (-5.5+\I,-3.5) {\color{roug} $\d$};
}

\foreach \I / \d in {0/3,-1/2,3/0,4/1}{
\draw[mygrid,xshift=\I cm] (-5,-3)--(1,3);
\node at (-5.5+\I,-3.5) {$\d$};
\draw[mygrid,xshift=\I cm] (-5,-3)--(1,3);
\node at (-5.5+\I,-3.5) {$\d$};
}
\draw[thick, xshift=5 cm] (-5,-3)--(-1,1);
\node at (-5.5+5,-3.5) {\color{roug}$2$};
\draw[thick, xshift=6 cm] (-5,-3)--(-2,0);
\node at (-5.5+6,-3.5) {\color{roug}$3$};
\end{tikzpicture}
}\hfil
\subfloat[][\label{fig:hormod8}]{
\begin{tikzpicture}[scale=0.5]
\draw[mygrid] (-4.9,-4.9) grid (3.9,3.9);
\foreach \I in {-4,...,3}\foreach \J in {-4,...,3}
	{\node[gridnode](\I\J) at (\I,\J) {};}
\foreach \pos in {(-1,-4),(-4,-1),(0,3),(3,0)}
	\node[pattern] at \pos {};
	
\node[centernode] at (0,0) {};
\node[centernode] at (-1,-1) {};

\draw[patternpath, shift = {(1,1)}] (0.5,0.5)--(-0.5,0.5)--(-0.5,1.5)--(1.5,1.5)--(1.5,-0.5)--(0.5,-0.5)--(0.5,0.5);
\draw[patternpath, shift = {(-2,-2)}, rotate = 180] (0.5,0.5)--(-0.5,0.5)--(-0.5,1.5)--(1.5,1.5)--(1.5,-0.5)--(0.5,-0.5)--(0.5,0.5);
\end{tikzpicture}
 }
 \caption{\label{fig:modulo} In (a), the set $\S_{3,\sqrt{10}-3}((0,0),(-1,0))$ is intersecting all the diagonal and anti-diagonal lines modulo 6. In (b), the set $\S_{3,\sqrt{10}-3}((0,0),(-1,-1))$ is intersecting all the horizontal and vertical lines modulo 8.}
 \end{center}
 \end{figure}

 Let $diag(U)$ be the maximum $k$ such that the set $U$ is intersecting all the diagonal lines modulo $k$.

 \begin{lemma}\label{lem:ball}
Let $u$ be a vertex in $\Z^2$. Then we have
\[ diag(B_r(u)) = 4\left \lfloor \frac{r}{\sqrt{2}} \right\rfloor+2\delta +1 \textrm{,}\]
where $\delta = 0$ if ${\left \lfloor \frac{r}{\sqrt{2}} \right \rfloor}^2 + \left(\left\lfloor\frac{r}{\sqrt{2}}\right\rfloor +1\right)^2 \leq r^2$ and $\delta = 1$ otherwise.
 \end{lemma}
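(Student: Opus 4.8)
The plan is to reduce the computation of $diag(B_r(u))$ to understanding the set of values taken by the linear form $(x,y)\mapsto y-x$ on the lattice points of the ball. By translation invariance of the Euclidean ball we may assume $u=(0,0)$. Write $V=\{\,y-x \mid (x,y)\in B_r((0,0))\cap\Z^2\,\}$ and $M=\max V$. The form $y-x$ is negated by the central symmetry $(x,y)\mapsto(-x,-y)$, which preserves $B_r((0,0))$, so $V$ is symmetric about $0$ and $\min V=-M$. The first key claim is that $V$ has no gaps, i.e.\ $V=\{-M,-M+1,\ldots,M\}$; granting this, the conclusion follows from a residue count. Indeed, $B_r((0,0))$ intersects all diagonal lines modulo $k$ exactly when $y-x$ is surjective onto $\Z/k\Z$, and a block of $2M+1$ consecutive integers is surjective onto $\Z/k\Z$ if and only if $k\le 2M+1$. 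Hence $diag(B_r((0,0)))=2M+1$, and it remains only to compute $M$.

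To compute $M$, set $n=\floor{r/\sqrt2}$, so that $2n^2\le r^2<2(n+1)^2$. The form $y-x$ is extremal in the direction $(-1,1)$, so I would test the lattice points closest to that ray. The point $(-n,n)$ lies in the ball (since $2n^2\le r^2$) and gives $y-x=2n$; the point $(-n-1,n+1)$ gives $y-x=2n+2$ but has squared norm $2(n+1)^2>r^2$, so it is excluded. The only borderline case is $y-x=2n+1$, attained by $(-n-1,n)$ and $(-n,n+1)$, both of squared norm $n^2+(n+1)^2$; thus $M=2n+1$ or $M=2n$ according to whether or not $n^2+(n+1)^2\le r^2$. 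Writing $\delta\in\{0,1\}$ for this extra unit gives $M=2n+\delta$ and therefore $diag(B_r((0,0)))=2M+1=4n+2\delta+1=4\floor{r/\sqrt2}+2\delta+1$, with $\delta$ governed by the position of $(-n-1,n)$ relative to $\C_r((0,0))$, i.e.\ by the sign of $r^2-\bigl(n^2+(n+1)^2\bigr)$.

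What remains is the no-gap claim, which I regard as the main obstacle and would prove by exhibiting an explicit lattice point on each diagonal $y-x=c$ for $0\le c\le M$ (the negative range then follows by central symmetry). For even $c=2t\le 2n$ take $(-t,t)$, of squared norm $2t^2\le 2n^2\le r^2$. For odd $c=2t+1\le M$ take $(-t-1,t)$, of squared norm $t^2+(t+1)^2$; since $t\mapsto t^2+(t+1)^2$ is increasing, it suffices to check the largest odd value not exceeding $M$, which is exactly the borderline case already resolved in the computation of $M$. Hence every integer in $[-M,M]$ is realised, $V$ is the full interval, and the lemma follows. The only delicate points are verifying these boundary memberships and fixing the direction of the inequality that pins down $\delta$; everything else is a routine residue argument.
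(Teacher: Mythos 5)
Your proof is correct and follows essentially the same route as the paper's: both identify the extremal diagonal via the points $\left(\mp\lfloor r/\sqrt2\rfloor,\pm\lfloor r/\sqrt2\rfloor\right)$, settle the borderline diagonal by testing whether $(-n-1,n)$ with $n=\lfloor r/\sqrt2\rfloor$ lies in the ball, and count $2M+1$ consecutive diagonals; you are somewhat more explicit than the paper about the no-gap claim (the paper simply asserts that all diagonals between $v$ and $w$ meet $B_r(u)$, whereas you exhibit the witnesses $(-t,t)$ and $(-t-1,t)$), which is a welcome addition. One point worth flagging: your computation yields $\delta=1$ exactly when $n^2+(n+1)^2\le r^2$, which agrees with the paper's own proof and with direct checks (e.g.\ $r=\sqrt5$ gives $n=1$, the point $(-2,1)$ in the ball, and $diag=7=4n+2+1$), but this is the opposite of the convention printed in the lemma statement; the two cases for $\delta$ in the statement appear to be swapped, and your version is the correct one.
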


\begin{proof}
It is enough to show the result for $u=(0,0)$.
Let $v=\left(- \left \lfloor \frac{r}{\sqrt{2}}\right \rfloor,\left \lfloor \frac{r}{\sqrt{2}} \right \rfloor\right)$ and
$w=\left(\left \lfloor \frac{r}{\sqrt{2}} \right \rfloor,-\left \lfloor \frac{r}{\sqrt{2}} \right \rfloor\right)$.
Then $v,w\in B_r(u)$ and all the diagonal lines between $v$ and $w$ (included $v$ and $w$) are intersecting $B_r(u)$.
There are $4\left \lfloor \frac{r}{\sqrt{2}} \right\rfloor+1$ such diagonal lines.
There can be one more diagonal before $v$, if the vertex $v+(-1,0)$ is in $B_r(u)$. That corresponds to the condition $\left \lfloor \frac{r}{\sqrt{2}} \right \rfloor ^2 + \left(\left \lfloor\frac{r}{\sqrt{2}}\right \rfloor +1\right) ^2 \leq r^2$ and in this case there is also one more diagonal after $w$.
\end{proof}

Note that the set $B_r((0,0))$ is intersecting all the diagonal lines modulo $s$, for $s\leq diag(B_r((0,0)))$. This is not always the case for a disconnected set: the set $\S_{3,\sqrt{10}-3}((0,0),(-1,0))$ is not intersecting all the diagonal lines modulo 4 but $diag(\S_{3,\sqrt{10}-3}((0,0),(-1,0))=6$ (see Figure~\ref{fig:modulo}).

  \begin{lemma}\label{lem:line}
If $\S_{r,\Delta}((0,0),(-1,0))$ is intersecting all the diagonal lines modulo $s$,  with $s\leq  diag(B_r((0,0)))$, then for any pair of vertices $u,v$ in $\Z^2$, that lie on the same horizontal or vertical line, $\S_{r,\Delta}(u,v)$ is intersecting all the diagonal lines modulo $s$.
 \end{lemma}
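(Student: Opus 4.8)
The plan is to reduce to a single horizontal pair and then split on the distance $d(u,v)$. Translating a set by $t=(t_1,t_2)$ shifts the diagonal class $y-x$ of every point by the same constant $t_2-t_1$, so the property "intersects all diagonal lines modulo $s$'' is invariant under translation. The reflection $(x,y)\mapsto(y,x)$ across $y=x$ negates every diagonal class modulo $s$ while exchanging horizontal and vertical pairs, so it carries the hypothesis on $\S_{r,\Delta}((0,0),(-1,0))$ to the analogous statement for $\S_{r,\Delta}((0,0),(0,-1))$ and reduces the vertical case to the horizontal one. Hence it suffices to treat an unordered horizontal pair, which after translation we take to be $u=(0,0)$, $v=(d,0)$ with $d\ge 1$ an integer.

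\textbf{Small distance.} If $d\le 4x_0(r,\Delta)+1$, then Lemma~\ref{lem:dist} gives two vertices $u',v'$ at distance $1$ with $\S_{r,\Delta}(u',v')\subseteq\S_{r,\Delta}(u,v)$. The distance-$1$ set $\S_{r,\Delta}(u',v')$ is a translate of $\S_{r,\Delta}((0,0),(-1,0))$, which by hypothesis intersects all diagonal lines modulo $s$; by translation invariance so does $\S_{r,\Delta}(u',v')$, hence so does the larger set $\S_{r,\Delta}(u,v)$.

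\textbf{Large distance.} Assume $d>4x_0+1$ and write $\S_{r,\Delta}(u,v)=A_d\cup B_d$ with $A_d=B_r((0,0))\setminus B_{r+\Delta}((d,0))$ and $B_d=B_r((d,0))\setminus B_{r+\Delta}((0,0))$. Subtracting the two inequalities defining the removed lens $B_r((0,0))\cap B_{r+\Delta}((d,0))$ gives $x\ge \tfrac d2-\tfrac{\Delta(2r+\Delta)}{2d}$, and since the definition of $x_0$ yields $\Delta(2r+\Delta)\le 2x_0+1$, the removed part lies in $\{x\ge t_d\}$ with $t_d=\tfrac d2-\tfrac{2x_0+1}{2d}$. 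Thus $A_d\supseteq B_r((0,0))\cap\{x<t_d\}$, and by the symmetry $(x,y)\mapsto(d-x,-y)$ exchanging the centres, $B_d\supseteq B_r((d,0))\cap\{x>d-t_d\}$. This same symmetry sends a point of diagonal class $c$ to one of class $-c-d$, so $B_d$ meets the diagonal line $c$ exactly when the cap of $A_d$ meets the line $-c-d$. The cap $B_r((0,0))\cap\{x<t_d\}$ contains the top-left extreme vertex of the ball and meets a contiguous block $[c_{\min}^{(1)},c_{\max}]$ of diagonal values; pushing $x$ just below $t_d$ and $y$ down to $-\sqrt{r^2-t_d^2}$ gives $c_{\min}^{(1)}\le -t_d-\sqrt{r^2-t_d^2}\le\tfrac{1-d}{2}$ once the slack $\sqrt{r^2-t_d^2}$ is accounted for. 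Via the class relation $c\mapsto -c-d$, the block covered by $B_d$ then abuts (indeed overlaps) that covered by $A_d$, so $A_d\cup B_d$ meets every diagonal line of $B_r((0,0))$. By Lemma~\ref{lem:ball} these form a contiguous run of $diag(B_r((0,0)))\ge s$ values, hence a complete residue system modulo $s$. When $t_d\ge r$ (that is, $d\gtrsim 2r$) the cap is already the whole ball and the conclusion is immediate.

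\textbf{Main obstacle.} The crux is the large-distance case, specifically checking that the two caps leave no diagonal line uncovered. The delicate point is lattice rounding: the continuous inequality $c_{\min}^{(1)}\le\tfrac{1-d}2$ holds with slack $\Theta(\sqrt{r^2-t_d^2})$, which comfortably absorbs the $O(1)$ gap between a diagonal line and its nearest lattice point \emph{as long as} $d$ stays away from $2r$. The narrow window where this slack degenerates is precisely where $t_d\ge r$ and the cap becomes the full ball, so the two sub-arguments must be made to meet cleanly at that boundary; tracking the term $\delta$ from Lemma~\ref{lem:ball} and the parity of $d$ (as in Lemma~\ref{lem:dist}) is the bookkeeping that one has to get right.
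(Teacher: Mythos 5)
Your reductions (to a horizontal pair via the reflection $(x,y)\mapsto(y,x)$, then the small-distance case via Lemma~\ref{lem:dist} and the very-large-distance case where one difference set is a whole ball) match the paper's proof and are fine. The problem is the middle range, which is the heart of the lemma, and there your argument has a genuine gap: everything you establish about the caps $B_r((0,0))\cap\{x<t_d\}$ and its mirror image is a statement about \emph{continuous} diagonal classes (real intersections of lines with convex regions), but the conclusion you need is that every diagonal class in a run of length at least $diag(B_r((0,0)))$ contains a \emph{lattice point} of $\S_{r,\Delta}(u,v)$. You assert without proof that the cap "meets a contiguous block $[c_{\min}^{(1)},c_{\max}]$ of diagonal values" -- for lattice points this is not automatic (a diagonal line can cross a convex region in a segment of length less than $\sqrt{2}$ and contain no lattice point, and this is exactly what happens near the short end of the cap) -- and you then explicitly defer the resolution ("lattice rounding", "bookkeeping that one has to get right") rather than carrying it out. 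Identifying the obstacle is not the same as overcoming it, and as written the proof does not close.

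The paper resolves precisely this point with a quantitative estimate that your sketch lacks: writing $u=(-k',0)$, $v=(k,0)$ with $k=\lfloor d(u,v)/2\rfloor$, it computes the two intersection abscissas $x_u$, $x_v$ of the diagonal $y=x$ with $\C_{r+\Delta}(u)$ and $\C_r(v)$ and shows $x_v-x_u\geq k/2>1$, so the chord of $B_r(v)\setminus B_{r+\Delta}(u)$ on that diagonal has length at least $\sqrt{2}$ and hence contains a lattice point; since the gap only widens on the diagonals between $y=x$ and $w_v=v+(\lfloor r/\sqrt{2}\rfloor,-\lfloor r/\sqrt{2}\rfloor)$, every one of those diagonals is hit, and the symmetric argument from $w_u$ to $y=x+1$ completes a run of consecutive covered diagonals of length at least $diag(B_r((0,0)))\geq s$. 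To repair your write-up you would need an analogous per-diagonal chord-length bound (your quantity $\sqrt{r^2-t_d^2}$ measures the height of the cap at $x=t_d$, not the length of the diagonal chords near $c_{\min}^{(1)}$, so it does not directly give one), after which your cap decomposition would essentially coincide with the paper's argument.
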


 \begin{proof}
Notice first that, if $u$ and $v$ lie on the same horizontal line and if $\S_{r,\Delta}(u,v)$ is intersecting all the diagonal lines modulo $s$, then it is also intersecting all the anti-diagonal lines modulo $s$ (because of the vertical symmetry of $\S_{r,\Delta}(u,v)$). Hence, $\S_{r,\Delta}(u',v')$, where $u'$ and $v'$ are the images of $u$ and $v$ by a rotation of $\frac{\pi}{2}$ centered in $(0,0)$, is also intersecting all the diagonal and anti-diagonal lines modulo $s$. Therefore, we just need to prove the proposition when the vertices lie on the same horizontal line.

Let us assume that $\S_{r,\Delta}((0,0),(-1,0))$ is intersecting all the diagonal lines modulo $s$,  with $s\leq  diag(B_r((0,0)))$.
Let $u,v$ be lying on the same horizontal line. If $d(u,v)\leq 4x_0+1$, by Lemma~\ref{lem:dist}, $\S_{r,\Delta}(u,v)$ is containing a set isomorphic to $\S_{r,\Delta}((0,0),(-1,0))$, and so is intersecting all the diagonal lines modulo $s$. If $d(u,v)> \lfloor r\rfloor +\lfloor r+\Delta \rfloor$, then $B_r(v)\setminus B_{r+\Delta}(u)=B_r(v)$. The set $B_r(v)$ is intersecting all diagonal lines modulo $s\leq diag(B_r(v))=diag(B_r((0,0)))$ and so is the set $\S_{r,\Delta}(u,v)$.

Therefore, we can now assume that $4x_0+2\leq d(u,v)\leq \lfloor r\rfloor +\lfloor r+\Delta \rfloor$. Without loss of generality, we assume that $u=(-k',0)$ and $v=(k,0)$ with $k=\left\lfloor\frac{d(u,v)}{2}\right\rfloor$, $k+k'=d(u,v)$ and $k'\geq k$. We have $ \Delta(2r+\Delta) \leq 2x_0+1 \leq k \leq \lfloor r \rfloor$. Let $w_u$ be the vertex $u+\left(-\left \lfloor \frac{r}{\sqrt{2}} \right \rfloor,\left \lfloor \frac{r}{\sqrt{2}} \right \rfloor\right)$ and $w_v$ be the vertex $v+\left(\left \lfloor \frac{r}{\sqrt{2}} \right \rfloor,-\left \lfloor \frac{r}{\sqrt{2}} \right \rfloor\right)$. We will show that $\S_{r,\Delta}(u,v)$ is intersecting all the diagonal lines between $w_u$ and $w_v$, which is more than $s$ consecutive diagonal lines.

Consider the diagonal line $D$ defined by $y=x$. We want to prove that all the diagonal lines between $D$ and $w_v$ are intersecting $B_r(v)\setminus B_{r+\Delta}(u)$. Since $k\leq \lfloor r \rfloor$, all those diagonal lines are intersecting $B_r(v)$ in a vertex of $\Z^2$ with positive abscissa. If $D$ is not intersecting $B_{r+\Delta}(u)$ with a nonnegative abscissa, then we are done. Hence, we can assume that $D$ is intersecting the circle $\C_{r+\Delta}(u)$ in a point with nonnegative abscissa $x_u$ (not necessarily an integer point). Let $x_v$ be the nonnegative abscissa of the intersection between $D$ and the circle $\C_r(v)$. Then if the distance $d$ between the two points $(x_u,x_u)$ and $(x_v,x_v)$  is more than $\sqrt{2}$, we are sure that there is at least one integer point in $D$ that lies in $B_r(v)\setminus B_{r+\Delta}(u)$. Since the distance between the two circles on each diagonal on the right of $D$ will be also greater than $\sqrt{2}$, there will an integer
  point for all the diagonal lines between $D$ and $w_v$ in $B_r(v)\setminus B_{r+\Delta}(u)$.

We know that the distance $d$ is $\sqrt{2}(x_v-x_u)$. Hence, we need to show that $x_v-x_u\geq 1$. We have:

\begin{eqnarray*}
x_v-x_u& = & \frac{k+k'}{2} + \frac{1}{2}(\sqrt{2r^2-k^2}- \sqrt{2r^2-k'^2+4r\Delta+2\Delta^2})\\
\end{eqnarray*}

Either $x_0=0$ and then $\Delta=0$,$k=k'=1$ and we are done, or $x_0\geq 1$ and then, $k\geq 3$ and:
\begin{eqnarray*}
x_v-x_u& \geq & k +\frac{1}{2}(\sqrt{2r^2-k^2}- \sqrt{2r^2-k^2+2k})\\
&=& k +\frac{2r^2-k^2-(2r^2-k^2+2k)}{2(\sqrt{2r^2-k^2}+ \sqrt{2r^2-k^2+2k})}\\
& \geq & k -\frac{k}{2\sqrt{2r^2-k^2}}\\
& \geq & \frac{k}{2} > 1.\\
\end{eqnarray*}

Now we can do the same for $B_r(u)\setminus B_{r+\Delta}(v)$ and then show that there exists a vertex in $B_r(u)\setminus B_{r+\Delta}(v)$ that lies in each diagonal between $w_u$ and the diagonal defined by $y=x+1$, completing the proof.
\end{proof}

The following proposition gives, for given $r$ and $\Delta$, a method to construct better $(r,\Delta)$-identifying codes than in the previous section. When $r$ and $\Delta$ are given, it is relatively easy to check with the patterns that the code constructed in the proposition is an $(r,\Delta)$-identifying code.  However, the general proof is very technical even for restricted values of $r$ and the outline of the proof is given in Appendix. Nevertheless, we believe that the method can be used for any value of $(r,\Delta)$.

\begin{proposition}\label{prop:method}
Let $s\leq diag(B_r((0,0)))$ and $t$ be integers. Let $L^d_s=\{(x,y) | y-x\equiv 0\bmod s\}$.
Assume that the following conditions hold:
\begin{enumerate}[(a)]
\item $\S_{r,\Delta}((0,0),(-1,0))$ is intersecting all the diagonal lines modulo $s$,
\item $\S_{r,\Delta}((0,0),(-1,-1))$ is intersecting all the horizontal lines modulo $t$,
\item $r$ is not too close to an integer :  $\sqrt{\lfloor r \rfloor^2+4}\leq r<\lfloor r \rfloor +1$.
\end{enumerate}

\noindent Then the code $C=L^d_s \cup L^h_t$ is an $(r,\Delta)$-identifying code of density $\frac{1}{s}+\frac{1}{t}-\frac{1}{st}$.
\end{proposition}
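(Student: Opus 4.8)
The plan is to check the two defining properties of an $(r,\Delta)$-identifying code, namely $r$-domination and separation of every pair of distinct vertices, and then read off the density by inclusion--exclusion. The density is immediate: $L^d_s$ has density $\frac1s$, $L^h_t$ has density $\frac1t$, and their intersection $\{(x,y)\mid y\equiv 0\bmod t,\ x\equiv y\bmod s\}$ has density $\frac1{st}$, so $D(C)=\frac1s+\frac1t-\frac1{st}$. Domination is also quick: since $s\leq diag(B_r((0,0)))$, the remark following Lemma~\ref{lem:ball} says that every ball $B_r(u)$ meets all diagonal lines modulo $s$, in particular $L^d_s\subseteq C$, so $B_r(u)\cap C\neq\emptyset$ for all $u$.

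The heart is separation, which I would organise by the direction of $v-u$. If $u$ and $v$ lie on a common horizontal or vertical line, then hypothesis (a) together with Lemma~\ref{lem:line} shows that $\S_{r,\Delta}(u,v)$ meets all diagonal lines modulo $s$, hence meets $L^d_s\subseteq C$. Symmetrically, for $u,v$ on a common diagonal or anti-diagonal line I would invoke the exact analogue of Lemma~\ref{lem:line} obtained by interchanging the roles of ``diagonal'' and ``horizontal'' (using Lemma~\ref{lem:dist} along a diagonal direction): hypothesis (b) then forces $\S_{r,\Delta}(u,v)$ to meet all horizontal lines modulo $t$, hence $L^h_t\subseteq C$. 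The anti-diagonal case reduces to the diagonal one via the reflection $(x,y)\mapsto(x,-y)$, which preserves $L^h_t$ and exchanges the two diagonal directions.

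This leaves the pairs in generic position, $v-u=(a,b)$ with $a,b\neq 0$ and $|a|\neq|b|$, which is the technical core and the step I expect to be the main obstacle. Here the clean modular coverage given by the line lemmas is no longer available: $\S_{r,\Delta}(u,v)$ need not contain a whole copy of either basic pattern, only a ``corner'' of it. Mirroring the quadrant decomposition $X_1,\dots,X_4$ of Section~\ref{sec:consgeneral}, the strategy is to show that $\S_{r,\Delta}(u,v)$ always contains a translate of a corner of \emph{either} the horizontal pattern \emph{or} the diagonal pattern that is long enough --- spanning at least $s$ consecutive diagonals, or at least $t$ consecutive horizontal lines --- to be forced to meet $L^d_s$ or $L^h_t$. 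The role of hypothesis (c), $\sqrt{\lfloor r\rfloor^2+4}\leq r<\lfloor r\rfloor+1$, is precisely to keep these corners from degenerating: it guarantees $\sqrt{r^2-\lfloor r\rfloor^2}\geq 2$, so the extreme column of $B_r$ is at least two high and the boundary between $B_r(v)$ and $B_{r+\Delta}(u)$ cannot skip a residue class, leaving no short uncovered run of lines.

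The real difficulty is uniformity over all offsets $(a,b)$: one must control the shape of $B_r(v)\setminus B_{r+\Delta}(u)$ as the direction and magnitude of $v-u$ vary and verify that in every regime some corner spans a full period. I would split this into the far regime $d(u,v)>\lfloor r\rfloor+\lfloor r+\Delta\rfloor$, where $B_r(v)\setminus B_{r+\Delta}(u)=B_r(v)$ already meets all diagonals modulo $s$ by Lemma~\ref{lem:ball}, and the intermediate regime, where the estimate on $x_v-x_u$ from the proof of Lemma~\ref{lem:line} must be adapted to oblique separations. This last verification is the lengthy, case-laden part, which is why the full argument is deferred to the appendix.
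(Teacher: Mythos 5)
Your density computation, the domination argument, and the treatment of pairs on a common horizontal or vertical line via Lemma~\ref{lem:line} all coincide with the paper's proof, and you correctly identify that condition (c) forces the extreme column of $B_r((0,0))$ to reach ordinate at least $2$. The gap is in the remaining case $v=u+(x,y)$ with $x\geq 1$ and $|y|\geq 1$, which you explicitly leave undone; moreover, the plan you sketch for it aims at the wrong target. Hypothesis (b) only guarantees that $\S_{r,\Delta}((0,0),(-1,-1))$ meets \emph{every residue class} of horizontal lines modulo $t$; it does not give $t$ consecutive horizontal lines, so searching for a ``corner spanning at least $t$ consecutive horizontal lines'' (and similarly $s$ consecutive diagonals from hypothesis (a)) asks for more than the hypotheses provide. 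Likewise, the ``analogue of Lemma~\ref{lem:line} with diagonal and horizontal interchanged'' that you invoke for pairs on a common diagonal is neither stated nor proved, and the paper does not use such a statement.

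What the paper actually does for these pairs is different in mechanism. Writing $k\geq 2$ for the largest ordinate in column $\lfloor r\rfloor$ of $B_r((0,0))$, it picks one representative $(x_i,y_i)$ of $\S_{r,\Delta}((0,0),(-1,-1))$ in each horizontal residue class modulo $t$, normalizes them so that $x_iy_i\geq 0$, and observes that the resulting set $E$ satisfies $E+(0,-i)\subseteq \S_{r,\Delta}((0,0),(-1,-2i))$ and $E+(0,-i)\subseteq\S_{r,\Delta}((0,0),(-1,-2i-1))$ for all $0\leq i\leq k$. This transports the full residue system into $\S_{r,\Delta}(u,v)$ for every offset with $x\geq 1$ and $1\leq|y|\leq 2k+1$, whether or not $u$ and $v$ lie on a common diagonal, and it is exactly where condition (c) is consumed. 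The steep offsets $|y|>2k+1$ are then handled by the same distance estimate along the line $y=x$ as in Lemma~\ref{lem:line}, showing $\S_{r,\Delta}(u,v)$ contains $diag(B_r((0,0)))\geq s$ consecutive diagonal lines and hence meets $L^d_s$. Your far-regime observation matches the second half of this; the residue-transport argument is the missing first half, and without it intermediate offsets such as $v-u=(1,2)$ are covered by nothing in your outline.
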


We can sometimes improve the code of the previous proposition by removing the vertices in the intersection between the horizontal and diagonal lines. The reason is that in the horizontal pattern, for any vertex there is another vertex in the same horizontal line. Hence, if the distance between any pair of vertices in the same line is not $s$, then if one vertex of the code is missing in a diagonal line, the horizontal pattern will intersect the code in the same horizontal line with the other vertex. The same holds for the diagonal pattern with the diagonal lines. As  an example, in the case $(r,\Delta)=(\sqrt{5},3-\sqrt{5})$, the horizontal pattern is intersecting all the diagonal lines modulo 4, the diagonal pattern is intersecting all the horizontal lines modulo $6$, and the code $C=(L^d_4 \cup L^h_6)\setminus (L^d_4 \cap L^h_6)$ is a $(\sqrt{5},3-\sqrt{5})$-identifying code of density $\frac{1}{3}$. This is not always working, as in the case $(r,\Delta)=(\sqrt{41},5\sqrt{2
 }-\sqrt{41})$.


\subsection{Optimal constructions} \label{sec:goodcode}
If in the diagonal pattern there are  all the diagonal and anti-diagonal lines modulo $s$, we do not need to put horizontal lines in the code of Proposition~\ref{prop:method}, and then we obtain a code of density $\frac{1}{s}$. This is in particular the case when $s$ is fixed and $r$ is large enough:

\begin{proposition}\label{prop:optimal} Let $s$ be a fixed integer.

{\bf (i)} There exists $r_0\in
\mathbb{N}$ such that for all $r\ge r_0$ and all $\Delta\in
[0,\Delta_m(r)]$ the set $\S_{r,\Delta}((0,0),(-1,-1))$ contains
all the diagonal and anti-diagonal lines modulo $s$ in the first
quadrant.

{\bf (ii)} If, furthermore, the set $\S_{r,\Delta}((0,0),(-1,0))$ contains all the diagonal lines modulo $s$, then there is an
$(r,\Delta)$-identifying code of density $\frac{1}{s}$.
\end{proposition}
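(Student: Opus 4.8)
The plan is to establish the geometric statement (i) by a direct estimate on the diagonal pattern, and then to derive (ii) from (i), from Lemma~\ref{lem:line} and from the hypothesis on the horizontal pattern, after a short translation reduction. The common engine in both parts is the fact that a ``crescent'' $B_r(a)\setminus B_{r+\Delta}(b)$ that hugs the circle $\C_r(a)$ over an arc of angular length $\Theta(1)$ meets, once $r$ is large compared with $s$, \emph{every} residue class of $y-x$ and of $x+y$ modulo $s$.

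For (i), the part of $\S_{r,\Delta}((0,0),(-1,-1))$ lying in the first quadrant is exactly $\bigl(B_r((0,0))\setminus B_{r+\Delta}((-1,-1))\bigr)\cap\{x\ge0,\,y\ge0\}$. The decisive identity is $d(p,(-1,-1))^2-d(p,(0,0))^2=2(x+y+1)$ for $p=(x,y)$, so a point of $B_r((0,0))$ lies outside $B_{r+\Delta}((-1,-1))$ essentially when $x+y>r\Delta+\frac{\Delta^2}{2}-1$. Since $\Delta\le\Delta_m(r)<1$ while $x+y$ increases from about $r$ near the axes up to $r\sqrt2$ on the diagonal along the first–quadrant arc of $\C_r((0,0))$, this region is a crescent hugging $\C_r((0,0))$ across essentially the whole first–quadrant arc. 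Along a sub-arc bounded away from the two axes it has angular extent $\Theta(1)$, hence arc length $\Theta(r)$, and both $y-x$ and $x+y$ sweep through intervals of length $\Theta(r)$; a counting argument then places a lattice point of the crescent in each of the $s$ residue classes of $y-x$ and of $x+y$ once $r$ is large enough, uniformly in $\Delta$. This yields a single $r_0$ valid for all $\Delta\in[0,\Delta_m(r)]$.

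For (ii), take $C=L^d_s$, which has density $\frac1s$ and is $r$-dominating since $s\le diag(B_r((0,0)))$ for $r\ge r_0$ by Lemma~\ref{lem:ball} (enlarging $r_0$ if necessary), so $B_r(u)$ always meets the line $y-x\equiv0$. For separation I would reduce to single direction vectors: because $\S_{r,\Delta}(u,v)=u+\S_{r,\Delta}((0,0),v-u)$, we have $\S_{r,\Delta}(u,v)\cap L^d_s\neq\emptyset$ as soon as $\S_{r,\Delta}((0,0),v-u)$ meets the diagonal line of residue $(u_x-u_y)\bmod s$; hence it suffices to show that for every nonzero $\vec w$ the pattern $\S_{r,\Delta}((0,0),\vec w)$ meets \emph{all} diagonal lines modulo $s$. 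For horizontal and vertical $\vec w$ this is Lemma~\ref{lem:line} combined with the hypothesis on $\S_{r,\Delta}((0,0),(-1,0))$, the vertical case following from the reflection $(x,y)\mapsto(y,x)$, which preserves the family of diagonal lines modulo $s$. For $\vec w=(\pm1,\pm1)$ it is statement (i): directly for $(1,1)$, and for the anti-diagonal directions through the reflection $(x,y)\mapsto(x,-y)$, which carries the anti-diagonal pattern to the diagonal one while exchanging the diagonal and anti-diagonal line families — this is precisely why (i) asserts both families. Finally, every remaining $\vec w$ satisfies $\|\vec w\|\ge\sqrt5$, so its crescent is comfortably thick — its radial width exceeds $1$ over a sub-arc of length $\Theta(r)$ — and wide, and the same argument as in (i) shows it meets all diagonal lines modulo $s$.

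The main obstacle is uniform control of the crescent in the two tight regimes. In (i), as $\Delta$ approaches $\Delta_m(r)$ the crescent becomes thin, and one must use Proposition~\ref{prop:maxdelta} (which gives $1-\Delta_m(r)\gtrsim\sqrt{2/r}$) to guarantee that it still captures a lattice point in every residue class. In (ii), the bookkeeping that turns ``angular span $\Theta(1)$'' into ``a lattice point on each of $s$ consecutive diagonals'' is exactly the Section~\ref{SectionGeneral1}-style analysis via $x_0$, $x_1$, $m$ and the fractional parts of $\sqrt{r^2-x^2}$: routine for fixed $s$ and large $r$, but it is the genuinely technical heart, and the reason the general proof of Proposition~\ref{prop:method} is deferred to the appendix.
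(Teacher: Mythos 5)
Your overall architecture matches the paper's for part (ii) (take $C=L^d_s$, handle collinear pairs with Lemma~\ref{lem:line}, reduce the rest to unit displacements), but the heart of part (i) --- producing a lattice point of the crescent in \emph{every} residue class of $y-x$ modulo $s$ --- is exactly where your argument has a gap. You deduce this from ``$y-x$ sweeps through an interval of length $\Theta(r)$'' plus an unspecified ``counting argument''. That mechanism fails for the diagonal family: a line $y-x=c$ meets the first-quadrant arc of $\C_r((0,0))$ only near the direction in which it is almost \emph{radial}, and there the crescent has radial width about $\sqrt2-\Delta$ (about $\sqrt2-1\approx0.41$ in the worst case $\Delta=1$), so the chord it cuts on that line is shorter than the spacing $\sqrt2$ of lattice points along a diagonal; concretely, the range of $x+y$ swept by the line $y-x=c$ inside the crescent has length about $2-\sqrt2\Delta<2$ for all $|c|=O(\sqrt r)$, while consecutive lattice points on that line differ by $2$ in $x+y$. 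Hence most of the $\Theta(r)$ diagonal lines crossed by the crescent contain \emph{no} lattice point of it, and no pigeonhole over the sweep can close the argument. The correct mechanism --- which the paper extracts by rotating the lattice by $\pi/4$ and reusing the quantity $m(r,\Delta)$ of \eqref{eq:m} --- is the flatness of $\C_r$ near its $(1,1)$-extremal point: the outermost anti-diagonal $x+y\approx\lfloor\sqrt2\,r\rfloor$ carries $\Theta(\sqrt r)$ \emph{contiguous} lattice points of the crescent, lying on $\Theta(\sqrt r)$ consecutive diagonals, which exceeds $s$ for $r$ large. (Your picture is fine for the anti-diagonal family, which crosses the crescent transversally over a $\Theta(r)$-range of $x+y$; it is only the diagonal family that needs the $\sqrt r$-flatness.) Your closing remark defers this to ``the Section~\ref{SectionGeneral1}-style analysis via $x_0$, $x_1$, $m$'', which is indeed the right toolbox, but as stated the step you call routine is the one that is actually false in the form you give it.

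Two further points. First, you misquote Proposition~\ref{prop:maxdelta}: it asserts $\Delta_m(r)\ge1-\sqrt{2/r}+O(1/r)$, i.e.\ $1-\Delta_m(r)\lesssim\sqrt{2/r}$, not $\gtrsim$; and no such control is needed, since $\S_{r,\Delta}(u,v)\supseteq\S_{r,1}(u,v)$ for every $\Delta\le1$, so the paper simply proves (i) at $\Delta=1$ and gets uniformity over $[0,\Delta_m(r)]$ for free --- your worry about the crescent thinning as $\Delta\to\Delta_m(r)$ is resolved by this monotonicity, not by any bound on $\Delta_m$. Second, in (ii) your separate treatment of displacements $\vec w$ with $\|\vec w\|\ge\sqrt5$ re-imports the same gap (the radial width there is still below $\sqrt2$ where the diagonals are radial, e.g.\ $(a+b)/\sqrt2-\Delta$ for $\vec w=(a,b)$); the paper avoids this entirely by noting that if $v=u+(a,b)$ with $a,b\ge1$ then every point of the first quadrant of $v$ outside $B_{r+\Delta}(v+(-1,-1))$ is also outside $B_{r+\Delta}(u)$, so the first-quadrant part of the translated diagonal pattern is already contained in $\S_{r,\Delta}(u,v)$ and only the four unit diagonal displacements (plus the collinear case via Lemma~\ref{lem:line}) ever need analysis.
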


\begin{proof} {\bf Claim (i):} Since $\Delta_m(r)\leq 1$,  everything that is
contained in $\S_{r,1}((0,0),(-1,-1))$ is also contained in
$\S_{r,\Delta}((0,0),(-1,-1))$. Therefore, for the first claim, we
assume that $\Delta=1$. We consider the vertex set $\mathbb{Z}^2$
being partitioned into two subsets (see Figure~\ref{fig:diaglattice}) of `even
vertices' $\mathbb{Z}^2_{e}=\{(i,j)\in \mathbb{Z}^2\mid i+j\equiv 0
\bmod 2\}$ and `odd vertices'
$\mathbb{Z}^2_{o}=\{(i,j)\in\mathbb{Z}^2\mid i+j\equiv 1\bmod 2\}$.
Clearly, $\mathbb{Z}^2_{e}$ can be considered as $\mathbb{Z}^2$
rotated (clockwise) by $\pi/4$ where  the unit length
between closest vertices being $\sqrt{2}$ (instead of $1$). Now the
diagonal lines (resp. anti-diagonal lines) of the original lattice
$\mathbb{Z}^2$ are vertical
 (resp. horizontal) lines of the new lattice $\mathbb{Z}^2_e$.

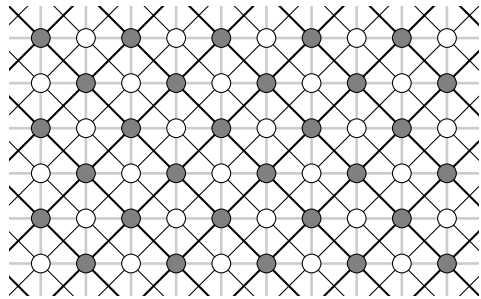
\begin{figure}[h]
\begin{center}
\begin{tikzpicture}[scale=0.6]

\clip (-3.7,-2.7) rectangle (6.7,3.7);

\draw[mygrid] (-6.9,-5.9) grid (6.9,5.9);

\draw[rotate = 45,thick] (-15.9,-15.9) grid [xstep=1.414,ystep =1.414] (15.9,15.9);
\draw[shift = {(1,0)}, rotate = 45] (-10.9,-10.9) grid [xstep=1.414,ystep =1.414] (10.9,10.9);

\foreach \I in {-6,-4,...,6}
	\foreach \J in {-5,-3,...,5}
	{\node[evennode] at (\I,\J) {};
	\node[oddnode] at (\I,\J-1) {};}

\foreach \I in {-5,-3,...,5}
	\foreach \J in {-5,-3,...,5}
	{\node[oddnode] at (\I,\J) {};
	\node[evennode] at (\I,\J-1) {};}
\end{tikzpicture}
\end{center}
\caption{\label{fig:diaglattice} Partition of $\Z^2$ in 'even' and 'odd vertices'}
\end{figure}

1) Let us first consider the anti-diagonal lines and our focus is only on
the first quadrant. The result of the equation~\eqref{eq:x1}, when applied to the new lattice
(where the radius is accordingly $r/\sqrt{2}$ and
$\Delta=1/\sqrt{2}$), implies that there are at least
$r/\sqrt{2}-x_1(r/\sqrt{2},1/\sqrt{2})$
horizontal lines of $\mathbb{Z}^2_e$ (i.e. anti-diagonal lines of the original lattice) which gives at least one vertex to
$\S_{r,1}((0,0),(-1,-1))$. Between these lines, there are the
horizontal lines (anti-diagonal in the original lattice) of the odd
lattice $\mathbb{Z}^2_o$. Hence, the number of consecutive diagonals
in $\mathbb{Z}^2$ contributing a vertex to
 $\S_{r,1}((0,0),(-1,-1))$
 is approaching infinity as $r$ grows.
 Trivially, these contain $s$ consecutive
 anti-diagonal lines intersecting $\S_{r,1}((0,0),(-1,-1))$ in the first
 quadrant, when $r$ is large enough.

 2) Now we consider the diagonal lines (again in the first quadrant). Applying the result on $m(r,\Delta)$ in the equation~\eqref{eq:m} we see  that there is an horizontal line in $\mathbb{Z}^2_e$ (which is an anti-diagonal line in $\mathbb{Z}^2)$ such that the number of its vertices in  $\S_{r,1}((0,0),(-1,-1))$ tends to infinity as $r$ grows. 
 Similarly, there is a vertical line in $\mathbb{Z}^2_o$ with growing number of intersecting vertices. Consequently, there are $s$ consecutive diagonal
 lines intersecting $\S_{r,1}((0,0),(-1,-1))$ in the first quadrant, when $r$ is large enough.

This result implies by symmetry and translation that $\S_{r,\Delta}(u,v)$ contains all the diagonal lines modulo $s$ in the first quadrant (resp. second quadrant) for any $u,v$ with $v=u+(1,1)$ (resp. $v=u+(1,-1)$), $u,v\in \Z^2$.

\medskip

{\bf Claim (ii):} Let us now assume that $\S_{r,\Delta}((-1,0),(0,0))$
contains all the diagonal lines modulo $s$.
Let $C=L^d_s$ (see Proposition~\ref{prop:method}). We show
that $C$ is an $(r,\Delta)$-identifying code. Let
$u=(x_u,y_u)\in\mathbb{Z}^2$ and $v=(x_v,y_v)\in\mathbb{Z}^2$ be two
distinct vertices. Without loss of generality, it suffices to
consider the following two cases:

(a) If $|x_u-x_v|\ge 1$ and $|y_v-y_u|\ge 1$, then by the first claim
there is a diagonal line providing at least one
element of $C$ to $\S_{r,\Delta}(u,v)$.

(b) Let now  $y_v=y_u$  or  $x_v=x_u$, meaning that $u$ and $v$ lie in the same vertical or horizontal line. Then by Lemma~\ref{lem:line}, $\S_{r,\Delta}(u,v)$ is intersecting all the diagonal lines modulo $s$. Hence we are done.
\end{proof}

\medskip

It is in general hard to get good values for $s$ and $t$ in Proposition~\ref{prop:method}. In this section, we give infinite families of values $(r,\Delta)$ for which $s=2,4,6$ or $8$ and $|\S_{r,\Delta}((0,0),(-1,0))|=s$, leading by Proposition~\ref{prop:optimal} to infinite families with optimal codes.

We first start with $s=2$. In this particular case, there is always a code with density $\frac{1}{2}$:

\begin{proposition}\label{prop:code12}
If $r$ is an integer and if $r+\Delta\geq \sqrt{r^2+2r-1}$, then $|\S_{r,\Delta}((0,0),(-1,0))|=2$, and $D(r,\Delta)=\frac{1}{2}$.
\end{proposition}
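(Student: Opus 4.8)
The plan is to prove Proposition~\ref{prop:code12} in two stages: first establish that $|\S_{r,\Delta}((0,0),(-1,0))|=2$ under the stated hypotheses, and then deduce $D(r,\Delta)=\tfrac12$ by combining the lower bound from Proposition~\ref{prop:lowerbound} with the optimal construction from Proposition~\ref{prop:optimal} (claim (ii)) for $s=2$. Since $r$ is an integer, recall that $\Delta_m(r)=1$, so the hypothesis $r+\Delta\geq\sqrt{r^2+2r-1}$ together with $\Delta<1$ pins $\Delta$ into a narrow window near $1$; this is exactly the regime where the horizontal pattern is smallest.

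First I would analyze $\S_{r,\Delta}((0,0),(-1,0))=\left(B_r((0,0))\setminus B_{r+\Delta}((-1,0))\right)\cup\left(B_r((-1,0))\setminus B_{r+\Delta}((0,0))\right)$. By the vertical symmetry of the configuration (reflection across the line $x=-\tfrac12$), the two halves are congruent, so it suffices to count the vertices of $B_r((0,0))\setminus B_{r+\Delta}((-1,0))$ and double. I would use the notation $x_0(r,\Delta)$ and $h_{r,\Delta}$ from the earlier section. The key computation is to show that the only integer point in this half lies at abscissa $x=r=\lfloor r\rfloor$, namely the point $(r,0)$. The hypothesis $r+\Delta\geq\sqrt{r^2+2r-1}=\sqrt{(r+1)^2+0^2-(\dots)}$ should be reorganized to read $(r+\Delta)^2\geq (r+1)^2 + 0^2 - \text{(something)}$; more precisely, I would check that $(r+\Delta)^2\geq (r+1)^2 - 1 = r^2+2r$, wait — the stated bound gives $(r+\Delta)^2\ge r^2+2r-1$, which is exactly the condition that the point $(r,0)$, at distance $\sqrt{(r+1)^2}=r+1$ from $(-1,0)$, together with its neighbours, leaves no room for a second point. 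Concretely, I would show that $x_0(r,\Delta)=r$: any candidate point $(x,y)\in B_r((0,0))$ with $x<r$ satisfies $x\le r-1$, and then $d((-1,0),(x,y))^2=(x+1)^2+y^2\le r^2\le (r+\Delta)^2$ forces it into $B_{r+\Delta}((-1,0))$, so it contributes nothing. The bound $r+\Delta\geq\sqrt{r^2+2r-1}$ is what guarantees $(x+1)^2+y^2\le(r+\Delta)^2$ for all such points, the tightest case being $(r-1, y)$ with $y$ as large as possible, i.e. $y^2\le r^2-(r-1)^2=2r-1$.

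Having shown the only surviving abscissa is $x=r$, I would check that at $x=r$ there is exactly one point, $(r,0)$, since $\lfloor\sqrt{r^2-r^2}\rfloor=0$, and that $(r,0)\in\S_{r,\Delta}$ because $d((-1,0),(r,0))=r+1>r+\Delta$ for $\Delta<1$. This gives one point in the right half; by symmetry the left half contributes $(-r-1,0)$, wait — actually the reflection across $x=-\tfrac12$ sends $(r,0)$ to $(-r-1,0)$, so the full set is $\{(r,0),(-r-1,0)\}$ and $|\S_{r,\Delta}((0,0),(-1,0))|=2$. Then Proposition~\ref{prop:lowerbound} yields $D(r,\Delta)\geq\tfrac12$. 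For the matching upper bound, since $\S_{r,\Delta}((0,0),(-1,0))$ consists of the two points $(r,0)$ and $(-r-1,0)$ which lie on diagonals $y-x\equiv -r\equiv 0$ and $y-x\equiv r+1\equiv 1\pmod 2$, it intersects all diagonal lines modulo $s=2$; so by Proposition~\ref{prop:optimal}(ii) there is an $(r,\Delta)$-identifying code of density $\tfrac12$. Combining the two bounds gives $D(r,\Delta)=\tfrac12$.

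The main obstacle I anticipate is the careful bookkeeping in the first stage: verifying that no integer point with $0\le x\le r-1$ survives requires checking the extreme point in each column, and the inequality $(r-1+1)^2+y^2\le(r+\Delta)^2$ is tightest precisely at $x=r-1$ with the maximal admissible $y$. I would need the chain $(x+1)^2+y^2 = (x+1)^2 + (r^2-x^2) = r^2 + 2x + 1 \le r^2+2(r-1)+1 = r^2+2r-1 \le (r+\Delta)^2$, where the last step is exactly the hypothesis; this is clean and confirms the bound is sharp. One must also double-check the edge concern that for $x=r$ itself the point $(r,0)$ does escape $B_{r+\Delta}((-1,0))$, which holds since $(r+1)^2 > r^2+2r = (r+\Delta)^2$ would need $\Delta\ge 1$ to fail, and $\Delta<1$ is automatic because $r$ is an integer forces $\Delta<\Delta_m(r)=1$ for a code to exist. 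The rest is a direct invocation of the earlier propositions and carries no real difficulty.
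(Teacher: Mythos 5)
Your first stage (counting $|\S_{r,\Delta}((0,0),(-1,0))|=2$) is essentially the paper's own argument and is correct: every $(x,y)\in B_r((0,0))$ with $x\le r-1$ satisfies $(x+1)^2+y^2\le r^2+2x+1\le r^2+2r-1\le (r+\Delta)^2$, so only $(r,0)$ survives in the right half, and symmetry gives $\{(r,0),(-r-1,0)\}$. The lower bound $D(r,\Delta)\ge\frac12$ via Proposition~\ref{prop:lowerbound} is also exactly what the paper does.

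The gap is in your second stage. Proposition~\ref{prop:optimal} is an asymptotic statement: claim (i) asserts the existence of an $r_0$ such that the diagonal/anti-diagonal coverage of $\S_{r,\Delta}((0,0),(-1,-1))$ holds for all $r\ge r_0$, and the proof of claim (ii) invokes claim (i) to handle pairs $u,v$ with $|x_u-x_v|\ge 1$ and $|y_u-y_v|\ge 1$. So invoking Proposition~\ref{prop:optimal}(ii) only yields a density-$\frac12$ code for $r$ large enough, not for every integer $r$; the diagonal-lines-mod-$2$ property of the two-point horizontal pattern does not by itself control the diagonal pattern $\S_{r,\Delta}(u,v)$ with $v=u+(1,1)$ for small $r$. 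The paper explicitly acknowledges this ("we know by Proposition~\ref{prop:optimal} that it will be true for $r$ large enough, but we will next construct a code that is an $(r,\Delta)$-identifying code for any $r$") and then builds an explicit code of density $\frac12$ valid for all integers $r$, using vertical half-density lines $L_{0.5}^{(o)}$, $L_{0.5}^{(e)}$ placed according to a period-$(4r+2)$ set $U$, with a separate hand check of the diagonal and anti-diagonal patterns via the four vertices $u+(-r,0)$, $u+(0,-r)$, $v+(0,r)$, $v+(r,0)$. To close your proof you would either need to supply such an explicit construction for small $r$ or make $r_0$ effective and verify the remaining finitely many cases; as written, the claim $D(r,\Delta)=\frac12$ is only established asymptotically.
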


\begin{proof}
The first part of the proposition is not difficult: if $r$ is an integer, then for any $\Delta<1$, $(-r-1,0)$ and $(r,0)$ are in $\S_{r,\Delta}((0,0),(-1,0))$ and they are the only vertices with ordinate $0$. If $r+\Delta\geq\sqrt{r^2+2r-1}$ and if $(x,y)$ is a vertex of $B_r((0,0))$ with $y\neq 0$, then $x\leq r-1$ and $(x+1)^2+y^2\leq r^2+2r-1 \leq (r+\Delta)^2 $ and so $(x,y)\in B_{r+\Delta}((-1,0))$. Clearly, $\S_{r,\Delta}((0,0),(-1,0))$ is intersecting all the diagonal lines modulo 2.

For the second part of the proposition, we know by Proposition~\ref{prop:optimal} that it will be true for $r$ large enough, but we will next construct a code that is an $(r,\Delta)$-identifying code for any $r$. Figure~\ref{fig:code0.5} gives the construction for $r=4$.

\begin{figure}[h]
\begin{center}
\begin{tikzpicture}[scale=0.5]
\draw[mygrid] (-8.9,-3.9) grid (14.9,3.9);
\foreach \I in {-8,...,14}\foreach \J in {-3,...,3}
	{\node[gridnode](\I\J) at (\I,\J) {};}

\foreach \I in {-8,-6,-3,-1,0,2,4,5,7,10,12}\foreach \J in {-2,0,2}
	\node[code] at (\I,\J) {};
\foreach \I in {-7,-5,-4,-2,1,3,6,8,9,11,13,14}\foreach \J in {-3,-1,1,3}
	\node[code] at (\I,\J) {};	
\draw[ball] (-7.6,-1.6) rectangle  (10.6,0.6);
\node[labelnode] at (1.9,-1) {\color{roug}$X$};
\draw[ball] (-7.4,-1.4) rectangle  (1.4,-0.6);

\end{tikzpicture}
\caption{\label{fig:code0.5} Optimal code of density $\frac{1}{2}$ for $r=4$ when $\sqrt{23}-4\leq \Delta <1$.}
\end{center}
\end{figure}
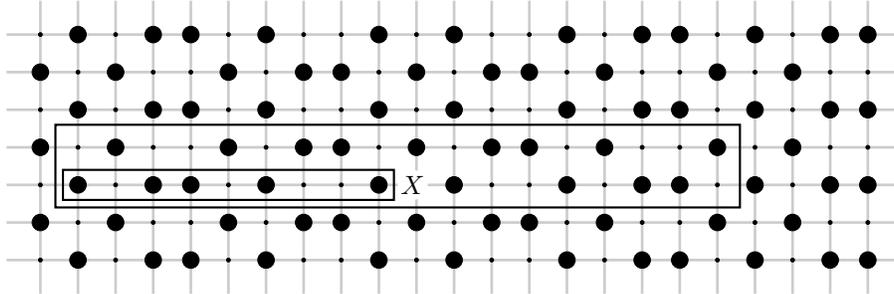

We construct the code with  the following vertical lines of density $1/2$: $L_{0.5}^{(o)}=\{(0,y)\in \Z^2 | y \text{ odd}\}$ and $L_{0.5}^{(e)}=\{(0,y) \in \Z^2 | y \text{ even}\}$. If $r$ is odd, let $X=\{0,1,\ldots,2r\}$. If $r$ is even, let $X=\{0,2,4,\ldots,r-2\}\cup \{r-1\} \cup \{r+1,r+3,\ldots,2r-3\} \cup \{2r\}$  (see Figure~\ref{fig:code0.5}).
Then we construct the subset of $\Z^2$:
\[
U=\{(x,0)| x\equiv i \bmod 4r+2, i \leq 2r \text{ and } x\in X\}\cup \{(x,0) | x\equiv i \bmod 4r+2, i\geq 2r+1 \text{ and } i-(2r+1)\notin X\} \textrm{.}
\]
Finally, we define $\C=\bigcup_{u\in U}(u+L_{0.5}^{(o)}) \cup \bigcup_{u\notin U} (u+L_{0.5}^{(e)})$.

To show that this is an identifying code, we only need to check that it is intersecting the diagonal pattern, for both orientations. Indeed, there is no free $L$-pattern \Lpat in the code (each $L$-pattern contains a vertex of $\C$).  Since $\S_{r,\Delta}(u,v)$  is containing an $L$-pattern whenever $d(u,v)>\sqrt{2}$, it is intersecting $\C$. Furthermore, it is clear that $\C$ is intersecting the horizontal and the vertical pattern, so only the case $d(u,v)=\sqrt{2}$ remains.

We will just prove the result for $v=u+(1,1)$ and $r$ is even, the other cases are similar.
In $\S_{r,\Delta}(u,v)$, there are at least the following four vertices, namely : $A=u+(-r,0)$, $B=u+(0,-r)$, $C=v+(0,r)$ and $D=v+(r,0)$. Because we use the vertical lines $L_{0.5}^{(o)}$ and $L_{0.5}^{(e)}$,
$B$ (resp. $C$) is in the code if and only if $u$ (resp. $v$) is in the code.
Let $(x,y)$ be the coordinates of $A$. If both $A$ and $u$ are not in the code, then either $y$ is even and $x\equiv 2r-1\bmod 4r+2$, or $y$ is odd and  $x\equiv 4r\bmod 4r+2$.
In both cases, it is easy to check that $v$ is in the code, finishing the proof.
\end{proof}

We now consider the case when $\S_{r,\Delta}((0,0),(-1,0))$ has four elements:

\begin{proposition}\label{prop:code38}
Let $(r,\Delta)$ be such that $|\S_{r,\Delta}((0,0),(-1,0))|=4$. There is an $(r,\Delta)$-identifying code of density $\frac{3}{8}$.
\end{proposition}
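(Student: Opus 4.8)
The value $\tfrac38=\tfrac14+\tfrac16-\tfrac1{24}$ signals the code $C=L^d_4\cup L^h_6$, i.e.\ diagonal lines modulo $s=4$ overlaid with horizontal lines modulo $t=6$, as in Proposition~\ref{prop:method}. Note however that hypothesis (c) of that proposition can fail here (already for $r=\sqrt5$, where $r<\sqrt{\lfloor r\rfloor^2+4}=\sqrt8$), so I would prove directly that $C$ is $(r,\Delta)$-identifying under the sole assumption $|\S_{r,\Delta}((0,0),(-1,0))|=4$. The first step is to pin down the horizontal pattern. It is invariant under $(x,y)\mapsto(x,-y)$ and under the ball-exchanging reflection $(x,y)\mapsto(-1-x,y)$, which generate a Klein four-group whose only integer fixed points lie on $y=0$. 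For $x\ge 0$ a vertex $(x,0)$ lies in the pattern iff $r+\Delta-1<x\le r$, an interval of length $1-\Delta<1$ containing at most one integer; since off-axis vertices occur in $\pm y$ pairs, $|\S_{r,\Delta}((0,0),(-1,0))|=4$ forces the pattern to be a single orbit
\[
\S_{r,\Delta}((0,0),(-1,0))=\{(a,b),(a,-b),(-1-a,b),(-1-a,-b)\},\qquad a\ge 0,\ b\ge 1,
\]
with no vertex on the axis.

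Next I would verify condition (a), that these four vertices meet every diagonal line modulo $4$. Their values of $y-x$ are $b-a$, $-b-a$, $b+1+a$ and $-b+1+a$: the two arising from abscissa $a$ differ by $2b$, as do the two from abscissa $-1-a$, while the two groups are offset by the odd number $1+2a$. Hence, provided $b$ is odd, we have $2b\equiv2\pmod4$, so each group realises one parity class $\{c,c+2\}$ of residues and the two classes are opposite, giving all of $\{0,1,2,3\}$ modulo $4$. Establishing that $b$ is \emph{necessarily odd} when $|\S_{r,\Delta}((0,0),(-1,0))|=4$ is the delicate point of this step: it is a statement about the geometry of the thin annulus between $\C_r((0,0))$ and $\C_{r+\Delta}((-1,0))$, which captures exactly one lattice ordinate $b$ on a single abscissa (so that $h_{r,\Delta}$ is squeezed into $]0,1]$ there), and I would deduce the parity of $b$ from this squeezing constraint. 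Since $4\le diag(B_r((0,0)))$ holds here, Lemma~\ref{lem:line} then shows that $\S_{r,\Delta}(u,v)$ meets all diagonal lines modulo $4$ for every pair $u,v$ lying on a common horizontal or vertical line, so such pairs are separated by $L^d_4\subseteq C$.

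The remaining pairs are the \emph{offset} ones, with $|x_u-x_v|\ge1$ and $|y_u-y_v|\ge1$, and this is where the horizontal lines $L^h_6$ are needed and where the main obstacle lies. Here one must analyse the diagonal pattern $\S_{r,\Delta}((0,0),(-1,-1))$ and its translates and show that they meet every horizontal line modulo $6$ (condition (b)); the example $(r,\Delta)=(\sqrt5,3-\sqrt5)$, where this set does hit all horizontal lines modulo $6$, is the model case. Since the diagonal pattern has no closed form in terms of the $x_0,x_1,m$ data of Section~\ref{SectionGeneral1}, I would argue as in the proof of Proposition~\ref{prop:optimal}, rotating $\Z^2$ by $45^\circ$ and counting, in the even and odd sublattices, the ordinates contributed by $B_r\setminus B_{r+\Delta}$; controlling these counts so that six consecutive horizontal residues always appear is the technical heart of the argument and runs parallel to the (admittedly intricate) appendix proof of Proposition~\ref{prop:method}.

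Granting the diagonal coverage modulo $4$ and the horizontal coverage modulo $6$, every pair $\S_{r,\Delta}(u,v)$ meets $C=L^d_4\cup L^h_6$, and $C$ is clearly $r$-dominating, so $C$ is an $(r,\Delta)$-identifying code; its density is $\tfrac14+\tfrac16-\tfrac1{24}=\tfrac38$, proving the proposition. I would finally remark that in favourable instances (such as $(\sqrt5,3-\sqrt5)$) one may delete $L^d_4\cap L^h_6$ to reach density $\tfrac13$, but this refinement is not required for the $\tfrac38$ bound claimed here.
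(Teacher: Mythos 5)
Your plan is built on the code $L^d_4\cup L^h_t$, but it breaks at the step you yourself flag as ``the delicate point'': the ordinate $b$ of the pattern vertex need \emph{not} be odd. Take $r=\sqrt{8}$ and $r+\Delta=\sqrt{11}$. Then $\S_{r,\Delta}((0,0),(-1,0))=\{(2,2),(2,-2),(-3,2),(-3,-2)\}$ has exactly four elements, but $b=2$ is even and the residues of $y-x$ modulo $4$ are $\{0,0,1,1\}$ -- two diagonal classes are missed. This is not a repairable technicality: translating this pattern by $(-1,0)$ gives $\{(1,\pm2),(-4,\pm2)\}$, which avoids $L^d_4$ (all residues of $y-x$ are $1$ or $2$ mod $4$) and avoids $L^h_6$ (ordinates $\pm2\not\equiv 0 \bmod 6$), so your code $L^d_4\cup L^h_6$ fails to separate $(-2,0)$ from $(-1,0)$. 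Your second unproven step (that the diagonal pattern meets all horizontal lines modulo $6$) is likewise only verified in the model case $(\sqrt5,3-\sqrt5)$ and is not established for a general $(r,\Delta)$ with $|\S_{r,\Delta}((0,0),(-1,0))|=4$; since you explicitly defer both steps, the proposal is incomplete even where it is not wrong.

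The paper's proof avoids both obstacles by a different decomposition. Writing $b=2^kb'$ with $b'$ odd, it takes $C=C_1\cup C_2$ where $C_1=\{(x,y): x,y \text{ even}\}$ and $C_2=\{(x,y): y-x\equiv i\bmod 2^{k+2},\ i\in\{2,4,\dots,2^{k+1}\}\}$, of total density $\tfrac14+\tfrac14-\tfrac18=\tfrac38$. The modulus $2^{k+2}$ is calibrated to the $2$-adic valuation of $b$ so that among the four diagonal residues $c+b-a$, $c-b-a$, $c+b+a+1$, $c-b+a+1$ exactly one lands in the prescribed set -- this is precisely the device that handles even $b$, where your fixed modulus $4$ fails. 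The diagonal and anti-diagonal patterns are then hit by $C_1$ via a $2\times2$ square of vertices (no analysis of horizontal lines modulo $t$ in the diagonal pattern is needed), and the remaining pairs are dispatched by Lemma~\ref{lem:dist} (same-line pairs at distance at most $5$) or by the presence of a $2\times2$ square. If you want to salvage your approach, you would at minimum have to let the diagonal modulus depend on $b$ as the paper does, at which point you have essentially reconstructed $C_2$.
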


\begin{proof}
In $\S_{r,\Delta}((0,0),(-1,0))$ there is necessarily a vertex $(a,b)$ with $a\geq b > 0$. Then the other vertices of $\S_{r,\Delta}((0,0),(-1,0))$ are $(a,-b)$,$(-a-1,b)$ and $(-a-1,-b)$.
We write $b=2^kb'$ with $b'$ odd.
We construct the code $C$ in two parts $C_1$ and $C_2$. Define $C_1=\{(x,y)\in \Z^2 | x \text{ and } y \text{ are even}\}$ and $C_2=\{(x,y)\in \Z^2 | (y-x)=i \bmod 2^{k+2} \text{ with } i \in \{2,4,6,\ldots,2^{k+1} \}\}$.
Then the code $C=C_1\cup C_2$ has density $\frac{3}{8}$.

In the diagonal pattern $\S_{r,\Delta}((0,0),(-1,-1))$ there are at least the four vertices $(a,b)$, $(b,a)$, $(a,b-1)$, $(b-1,a)$ and their symmetric images by a central rotation in $(-0.5,-0.5)$. Note that some vertices can be equal (if $a=b$ for example) but in any translation of the pattern, there is always a vertex with both coordinates even. Hence, $C_1$ is intersecting the diagonal pattern. By symmetry, it is also intersecting the anti-diagonal pattern.

We can show that $C_2$ is intersecting the horizontal pattern by noticing that for any $c\in \Z$, among the four values $c+b-a$, $c-b-a$, $c+b+a+1$ and $c-b+a+1$ modulo $2^{k+2}$, there is exactly one of them in $\{2,4,6,\ldots,2^{k+1}\}$.  The same holds by symmetry for the vertical pattern.

Let $u$ and $v$ be a pair of vertices of $\Z^2$ with $d(u,v)>\sqrt{2}$. If $u$ and $v$ are in the same line and $d(u,v)\leq 5$, then by Lemma~\ref{lem:dist}, $\S_{r,\Delta}(u,v)$ is containing a set isomorphic to the horizontal or vertical pattern and so is intersecting $C_2$. Otherwise, $\S_{r,\Delta}(u,v)$ always contains a square of four vertices and so is intersecting $C_1$. Finally $C$ is an $(r,\Delta)$-identifying code.
\end{proof}

In addition, there are some infinite families of $(r,\Delta)$ for which we have optimal codes:

\begin{proposition}
Let $k$ and $i$ be two integers such that $i$ is odd and $i^2<2k+1$. Let $r=\sqrt{k^2+i^2}$ and $r+\Delta=\sqrt{r^2+2k}$.
Then $|\S_{r,\Delta}((0,0),(-1,0))|=4$ and $\S_{r,\Delta}((0,0),(-1,0))$ is intersecting all the diagonal lines modulo 4.
If $r$ is large enough, there is an optimal $(r,\Delta)$-identifying code of density $\frac{1}{4}$.
\end{proposition}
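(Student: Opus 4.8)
The plan is to determine $\S_{r,\Delta}((0,0),(-1,0))$ exactly, then read off its behaviour modulo $4$, and finally combine the lower bound of Proposition~\ref{prop:lowerbound} with the construction of Proposition~\ref{prop:optimal}. The hypothesis $i^2<2k+1$ will be exactly what is needed to make the horizontal pattern as small as possible, and the parity of $i$ will be what spreads its four points across all diagonal residues.

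First I would compute the half $B_r((0,0))\setminus B_{r+\Delta}((-1,0))$. A lattice point $(x,y)$ lies in it iff $x^2+y^2\le r^2=k^2+i^2$ and $(x+1)^2+y^2>(r+\Delta)^2=k^2+i^2+2k$. The first inequality forces $|x|\le\sqrt{k^2+i^2}<k+1$, where the assumption $i^2<2k+1$ enters, so $|x|\le k$. Substituting the largest admissible value $y^2=k^2+i^2-x^2$ into the second inequality, it can hold only if $2x+1>2k$, i.e. $x\ge k$; hence $x=k$. Plugging $x=k$ back, the two inequalities squeeze $y^2$ into $(i^2-1,\,i^2]$, so $y=\pm i$, giving exactly $\{(k,i),(k,-i)\}$. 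Applying the reflection $(x,y)\mapsto(-1-x,y)$, which swaps the two halves of the pattern, yields the other half $\{(-1-k,i),(-1-k,-i)\}$; since $k\ge 1$ these are four distinct points, so $|\S_{r,\Delta}((0,0),(-1,0))|=4$.

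For the diagonal condition I would evaluate $y-x\bmod 4$ on the four points, obtaining $i-k$, $-i-k$, $k+i+1$ and $k+1-i$. Because $i$ is odd we have $2i\equiv 2\pmod 4$, so the two points with $x=k$ give residues $\{i-k,\,i-k+2\}$, a single parity class; the two points with $x=-1-k$ differ from $i-k$ by the odd numbers $2k+1$ and $2k-1$, hence realize the two residues of the opposite parity. Together the four values exhaust $\Z/4\Z$, so $\S_{r,\Delta}((0,0),(-1,0))$ meets every diagonal line modulo $4$.

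It then remains to assemble the two bounds. Proposition~\ref{prop:lowerbound} gives $\D(r,\Delta)\ge 1/|\S_{r,\Delta}((0,0),(-1,0))|=\tfrac14$. Conversely, taking $s=4$ and $r$ large enough, Proposition~\ref{prop:optimal}(i) provides the diagonal/anti-diagonal condition on $\S_{r,\Delta}((0,0),(-1,-1))$, and since we have just shown that $\S_{r,\Delta}((0,0),(-1,0))$ contains all diagonal lines modulo $4$, Proposition~\ref{prop:optimal}(ii) yields an $(r,\Delta)$-identifying code of density $\tfrac14$, which matches the lower bound and is therefore optimal. I expect the only delicate step to be the \emph{exact} count in the second paragraph: one must verify that no abscissa other than $x=k$ contributes (in particular ruling out the regime where the vertical line misses $B_{r+\Delta}((-1,0))$ entirely, which is excluded by $x\le k$), and that the squeeze $i^2-1<y^2\le i^2$ admits only $y=\pm i$; everything downstream is a routine appeal to the already-established propositions.
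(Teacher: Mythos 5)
Your proposal is correct and follows essentially the same route as the paper: pin down the four points $(k,\pm i)$, $(-k-1,\pm i)$ of the horizontal pattern via the inequalities $x^2+y^2\le r^2$ and $(x+1)^2+y^2>(r+\Delta)^2$ (the paper restricts to the quadrant $x>0$, $y\ge 0$ and symmetrizes, while you reflect one half; the computation is the same), check that the four values of $y-x$ cover all residues modulo $4$ using that $i$ is odd, and conclude with Propositions~\ref{prop:lowerbound} and~\ref{prop:optimal}. The point you flag as delicate (ruling out abscissas other than $x=k$) is handled correctly and matches the paper's treatment.
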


\begin{proof}
To prove that $|\S_{r,\Delta}((0,0),(-1,0))|=4$, we show that $X_1=\S_{r,\Delta}((0,0),(-1,0))\cap\{(x,y)|x>0,y\geq 0\}$ has only one vertex, and that this vertex has strictly positive ordinate.
Let $(x,y) \in X_1$, we have $x\leq \lfloor r \rfloor = k$. If $x=k$ then $y\leq i$. If $y<i$ then $(x+1)^2+y^2\leq (k+1)^2+i^2-1\leq r^2+2k = (r+\Delta)^2$, so $(x,y)\notin X_1$. If $y=i$ then $(x+1)^2+y^2=r^2+2k+1>(r+\Delta)^2$ and so $(k,i)\in X_1$.
If $x<k$, then $(x+1)^2+y^2\leq r^2+2k-1\leq (r+\Delta)^2$ and again $(x,y)\notin X_1$. Therefore, the only vertex in $X_1$ is $(k,i)$, and $i>0$. This implies that $|\S_{r,\Delta}((0,0),(-1,0))|=4$
and the four vertices of $\S_{r,\Delta}((0,0),(-1,0))$ are $(k,i)$, $(k,-i)$, $(-k-1,i)$ and $(-k-1,-i)$.
It remains to show that the four values $i-k$, $-i-k$, $i+k+1$, $-i+k+1$ are different modulo 4. By adding $i+k$, it is the same to show that the four values $0$, $2i$, $2k+1$, $2k+1+2i$ are different modulo 4, which is clear. The last claim is a direct consequence of Proposition~\ref{prop:optimal}.
\end{proof}

\begin{proposition}
Let $k$ be an odd integer not divisible by 3 and let $r=2k^2+1$, $r+\Delta=\sqrt{r^2+2r-3}$.
Then $|\S_{r,\Delta}((0,0),(-1,0))|=6$ and $\S_{r,\Delta}((0,0),(-1,0))$ is intersecting all the diagonal lines modulo $6$.
If $r$ is large enough, there is an optimal $(r,\Delta)$-identifying code of density $\frac{1}{6}$.
\end{proposition}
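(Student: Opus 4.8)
The plan is to follow the same route as the preceding ($s=4$) proposition: first pin down $X_1=\S_{r,\Delta}((0,0),(-1,0))\cap\{(x,y)\mid x>0,\ y\geq 0\}$, then recover the whole pattern by symmetry, and finally check the diagonal condition modulo $6$. Since $r=2k^2+1$ is an integer, a vertex $(x,y)$ with $x\geq 0$ lies in $\S_{r,\Delta}((0,0),(-1,0))$ exactly when it belongs to $B_r((0,0))\setminus B_{r+\Delta}((-1,0))$, i.e. when $x^2+y^2\leq r^2$ and $(x+1)^2+y^2> r^2+2r-3$. I would first observe that, using $x^2+y^2\leq r^2$, the second inequality gives $r^2+2x+1> r^2+2r-3$, so $x>r-2$ and hence $x\in\{r-1,r\}$. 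The column $x=r$ yields only $(r,0)=(2k^2+1,0)$. For $x=r-1$ the constraints become $2r-3<y^2\leq 2r-1$; writing $2r-1=4k^2+1$, $2r-2=4k^2$ and $2r-3=4k^2-1$, the only perfect square in the range $(2r-3,2r-1]$ is $2r-2=(2k)^2$, so the unique vertex is $(r-1,2k)=(2k^2,2k)$. Thus $X_1=\{(2k^2+1,0),(2k^2,2k)\}$.

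The set $\S_{r,\Delta}((0,0),(-1,0))$ is invariant under the reflections $y\mapsto -y$ and $(x,y)\mapsto(-1-x,y)$, so it is the union of the orbits of the vertices of $X_1$ under the resulting group of order $4$. The vertex $(2k^2+1,0)$ lies on the axis $y=0$ and has an orbit of size $2$, namely $(2k^2+1,0)$ and $(-2k^2-2,0)$, whereas $(2k^2,2k)$ has the full orbit of size $4$, giving $(2k^2,\pm 2k)$ and $(-2k^2-1,\pm 2k)$. Hence $|\S_{r,\Delta}((0,0),(-1,0))|=6$. The crux is then the diagonal condition: I would compute $y-x$ modulo $6$ for these six vertices. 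Because $k$ is odd and not divisible by $3$, we have $k\equiv \pm 1\pmod 6$, whence $2k^2\equiv 2\pmod 6$ while $2k\equiv 2$ or $2k\equiv 4\pmod 6$. Substituting $2k^2\equiv 2$ into the six values $-2k^2-1,\ 2k^2+2,\ 2k-2k^2,\ -2k-2k^2,\ 2k+2k^2+1,\ -2k+2k^2+1$ and splitting into the two cases $2k\equiv 2$ and $2k\equiv 4$, a direct check shows that in each case the six residues are exactly $\{0,1,2,3,4,5\}$. This is where the hypotheses on $k$ enter, and it is the main (though short) obstacle: for $k$ even or divisible by $3$ some of these residues coincide and the diagonal lines are no longer all covered.

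Finally I would assemble the conclusion. Since $|\S_{r,\Delta}((0,0),(-1,0))|=6$, Proposition~\ref{prop:lowerbound} gives $\D(r,\Delta)\geq \frac{1}{6}$. On the other hand, $r=2k^2+1\to\infty$ as $k\to\infty$, so for $r$ large enough part (i) of Proposition~\ref{prop:optimal} guarantees that $\S_{r,\Delta}((0,0),(-1,-1))$ contains all diagonal and anti-diagonal lines modulo $6$ in the first quadrant; combined with the fact just proved that $\S_{r,\Delta}((0,0),(-1,0))$ contains all diagonal lines modulo $6$, part (ii) of Proposition~\ref{prop:optimal} produces an $(r,\Delta)$-identifying code of density $\frac{1}{6}$. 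Together with the lower bound, this code is optimal, which is the claim.
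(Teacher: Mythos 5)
Your proposal is correct and follows essentially the same route as the paper: identify $X_1=\{(r,0),(r-1,2k)\}$, recover the six vertices by symmetry, check the six residues of $y-x$ modulo $6$, and conclude via Proposition~\ref{prop:optimal} together with the lower bound of Proposition~\ref{prop:lowerbound}. You merely supply the details (the bound $x>r-2$, the perfect-square argument for $y^2\in(2r-3,2r-1]$, and the case split $2k\equiv 2$ or $4 \pmod 6$) that the paper leaves implicit.
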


\begin{proof}
Let again  $X_1=\S_{r,\Delta}((0,0),(-1,0))\cap\{(x,y)|x>0,y\geq 0\}$. The only vertices of $X_1$ are $(r,0)$ and $(r-1,2k)$. This implies that  $|\S_{r,\Delta}((0,0),(-1,0))|=6$ and that the $6$ vertices of $\S_{r,\Delta}((0,0),(-1,0))$ are $(r,0)$,$(-r-1,0)$, $(r-1,2k)$,$(r-1,-2k)$,$(-r,2k)$,$(-r,-2k)$.
Then the six values $-r,r+1,2k-r+1,-2k-r+1,2k+r,-2k+r$ are all different modulo $6$, and we can conclude with Proposition~\ref{prop:optimal}.
\end{proof}

\begin{proposition}
Let $k\geq 18$ be an integer such that $k\equiv2\bmod 16$ and let $L=\left(\frac{k}{2}\right)^2-1$.
Let $r=\sqrt{L^2+8}$ and $r+\Delta=\sqrt{L^2+2L+4}$.
Then $|\S_{r,\Delta}((0,0),(-1,0))|=8$ and $\S_{r,\Delta}((0,0),(-1,0))$ is intersecting all the diagonal lines modulo $8$.
If $r$ is large enough, there is an optimal $(r,\Delta)$-identifying code of density $\frac{1}{8}$.
\end{proposition}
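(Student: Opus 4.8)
The plan is to follow the template established by the previous three propositions (for $s=2,4,6$), since the claim has exactly the same structure: first compute $\S_{r,\Delta}((0,0),(-1,0))$ explicitly to show it has $8$ elements, then verify it intersects all diagonal lines modulo $8$, and finally invoke Proposition~\ref{prop:optimal} to conclude the existence of a density-$\frac{1}{8}$ code for large $r$. The last step is immediate once the first two are established, so all the work is in the explicit computation of the horizontal pattern.

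First I would set $L=\left(\frac{k}{2}\right)^2-1$, so that $r^2=L^2+8$ and $(r+\Delta)^2=L^2+2L+4=r^2+2L-4$. The key is to locate $X_1=\S_{r,\Delta}((0,0),(-1,0))\cap\{(x,y)\mid x>0,\,y\geq 0\}$. For a vertex $(x,y)\in B_r((0,0))$ with $x>0,y\geq 0$, membership in $X_1$ means $(x+1)^2+y^2>(r+\Delta)^2=r^2+2L-4$, i.e. $2x+1>2L-4-(r^2-x^2-y^2)$, so one studies how much slack $r^2-x^2-y^2$ the vertex has inside the ball. Since $\lfloor r\rfloor=L$ (as $r=\sqrt{L^2+8}$ and $8<2L+1$ for the range of $k$), the candidate abscissas are $x=L$ and $x=L-1$, possibly a couple more. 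I would tabulate, for each such $x$, the largest $y$ with $x^2+y^2\leq L^2+8$ and check the inequality $(x+1)^2+y^2>r^2+2L-4$ directly. The arithmetic identities coming from $L=\frac{k^2}{4}-1$ and $k\equiv 2\bmod 16$ should force exactly four vertices in the first (open) quadrant region $X_1$; together with their reflections this yields $|\S_{r,\Delta}((0,0),(-1,0))|=8$, and I would record the eight explicit coordinates.

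With the eight coordinates in hand, the modulo-$8$ claim reduces to checking that the values $y-x$ over these eight vertices hit all residues $0,1,\dots,7$, exactly as in the $s=4$ and $s=6$ propositions where one adds a constant and checks the resulting short list is a complete residue system. The congruence condition $k\equiv 2\bmod 16$ is presumably what makes the relevant offsets (which will involve $L=\frac{k^2}{4}-1$ and the $y$-coordinates of the pattern vertices) distinct modulo $8$; I would verify this by reducing each $y-x$ modulo $8$ using $k\equiv 2$.

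The main obstacle is the first step: unlike the $s=4,6$ cases where only one or two abscissas contributed, here with $|\S_{r,\Delta}((0,0),(-1,0))|=8$ there are \emph{four} vertices in the quadrant, so several abscissas near $L$ must be examined and the case analysis of which $(x,y)$ satisfy the strict inequality $(x+1)^2+y^2>r^2+2L-4$ is more delicate. The hinge conditions $k\geq 18$ and $k\equiv 2\bmod 16$ must be used precisely to pin down the lattice points and to guarantee that no extra vertices sneak in (so that the pattern has exactly $8$ and not more elements) while simultaneously making the eight residues $y-x\bmod 8$ a complete system. Getting these number-theoretic constraints to do both jobs at once is where the genuine work lies; the density conclusion itself is a routine application of Proposition~\ref{prop:optimal}.
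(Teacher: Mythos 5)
Your overall strategy coincides with the paper's: determine the first-quadrant part $X_1$ of the horizontal pattern explicitly, check the residues $y-x\bmod 8$, and finish with Proposition~\ref{prop:optimal}. However, there are two problems. First, a concrete miscount: you assert that $X_1$ should contain \emph{four} vertices which ``together with their reflections'' give $8$. The symmetry group of $\S_{r,\Delta}((0,0),(-1,0))$ is generated by $(x,y)\mapsto(x,-y)$ and $(x,y)\mapsto(-1-x,y)$ and has order four, so a vertex of $X_1$ with $x>0$ and $y>0$ has an orbit of size four, not two. What actually holds (and what the paper proves) is that $X_1$ consists of exactly \emph{two} vertices, $(L,2)$ and $(L-2,k)$, each with positive ordinate, giving $2\times 4=8$ in total; in the $s=6$ proposition the count $6=2+4$ arises because one of the two quadrant vertices lies on the axis $y=0$. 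Hunting for four vertices in the quadrant would either fail or lead you to $|\S_{r,\Delta}((0,0),(-1,0))|=16$.

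Second, the step you defer to ``tabulate and check'' is the entire content of the proposition, and you never say how $k\equiv 2\bmod 16$ enters. Writing the membership condition as $2x+1>2L-4+(r^2-x^2-y^2)$ with $r^2-x^2-y^2\ge 0$ shows only $x\in\{L-2,L-1,L\}$ can contribute; at $x=L$ one finds only $(L,2)$ and at $x=L-2$ only $(L-2,k)$ (using $k^2=4L+4$). The delicate case is $x=L-1$, where a vertex $(L-1,y)$ of the pattern would require $2L+4<y^2\le 2L+7$, i.e.\ a perfect square among $2L+5$, $2L+6$, $2L+7$. Since $k\equiv 2\bmod 16$ forces $L=(k/2)^2-1\equiv 0\bmod 16$, these three numbers are congruent to $5,6,7$ modulo $16$, and no square is congruent to $5$, $6$ or $7$ modulo $16$; this is precisely where the congruence hypothesis does its work (it is then reused, via $L\equiv 0$ and $k\equiv 2$ modulo $8$, to check that the eight values of $y-x$ form a complete residue system modulo $8$). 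Without this argument, and with the symmetry count corrected, your proposal remains an outline rather than a proof.
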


\begin{proof}
We first show that the vertices of $X_1=\S_{r,\Delta}((0,0),(-1,0))\cap\{(x,y)|x>0,y\geq 0\}$ are $(L,2)$ and $(L-2,k)$. Those two vertices clearly are in $X_1$.
It is also clear that there are no other vertices with abscissa $L$ or at most $L-2$.
It remains to show that there are no vertices with abscissa $L-1$. If there will be a vertex $(L-1,y)$ in $X_1$ then necessarily, $y^2\leq 2L+7$. But $2L\equiv 0 \bmod 16$, so $2L+5$, $2L+6$ and $2L+7$ are not squares of integers, and so $y^2\leq 2L+4$. Then $(x+1)^2+y^2\leq (r+\Delta)^2$, a contradiction. A simple computation shows that the eight vertices of $\S_{r,\Delta}((0,0),(-1,0))$ are on different diagonal lines modulo $8$, and again we conclude with Proposition~\ref{prop:optimal}.
\end{proof}

\section{Study of small values of $(r,\Delta)$} \label{SectionSmall}

In the previous sections, we have considered $(r,\Delta)$-identifying codes with general (large) values of $r$. In this section, we focus on some specific small values of $r$. As one can expect, fixing the values $r$ and $\Delta$ leads to more efficient constructions of $(r, \Delta)$-identifying codes as well as better lower bounds. In particular, it should be mentioned that the methods for obtaining lower bounds in these special cases drastically differ from the ones used in the general cases. The results of this section as well as some previously known results have been gathered in Table~\ref{table:small}.

\begin{table}[ht]
  \begin{displaymath}
\begin{array}{|c|c|c|c|c|c|c|c|c|}
\hline
   r \backslash r+\Delta & 1&\sqrt{2}&2& \sqrt{5}&\sqrt{8} &3&\sqrt{10} \\ \hline
    1 & 0.35&0.5^{b,c}&X& X&X&X&X\\ \hline
     \sqrt{2}& - &\frac{2}{9}^a&[\frac{16}{57},\frac{1}{3}]& X&X&X&X\\ \hline
      2 & - & - &[0.15,0.17]^a&0.5^{b,c}&0.5^{b,c}&X&X\\ \hline
       \sqrt{5} & - & - & -&0.125^a&[0.17,\frac{2}{9}]&[0.25^b,\frac{1}{3}^d]&X\\ \hline
        \sqrt{8} & - & -& - & -&0.125^a&[\frac{1}{7},\frac{4}{21}]&[0.25^b,0.375^d]\\ \hline
\end{array}
\end{displaymath}
\begin{multicols}{2}
X: No $(r,\Delta)$-identifying code \\
a: See \cite{JL11} and the references therein\\
b: Trivial lower bound of Proposition~\ref{prop:lowerbound} \\
c: Code of Proposition~\ref{prop:code12} \\
d: Code from Subsection~\ref{sec:goodcode}\\
\end{multicols}
\caption{Bounds for small values of $r$.}
\label{table:small}
\end{table}

\subsection{Case $(r,\Delta)=(\sqrt{2},2-\sqrt{2})$}

In Figures~\ref{fig:hor2} and~\ref{fig:diag2}, the horizontal and diagonal patterns for $(r,\Delta)=(\sqrt{2},2-\sqrt{2})$ are shown. That directly gives, with Proposition~\ref{prop:lowerbound}, the lower bound $D(\sqrt{2},2-\sqrt{2})\geq \frac{1}{4}$. We can improve this lower bound using discharging methods. We first sketch a proof of an easy improvement on this bound to give an idea of the used method.

\begin{proposition}\label{prop:low2f}
We have
\[D(\sqrt{2},2-\sqrt{2})\geq \frac{4}{15} \textrm{.}\]
\end{proposition}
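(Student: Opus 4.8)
The goal is to improve the trivial lower bound $D(\sqrt{2},2-\sqrt{2})\geq \frac14$ (which follows from Proposition~\ref{prop:lowerbound} and the four-element horizontal pattern) to $\frac{4}{15}$. The natural tool is a discharging argument: I would assign to each codeword an initial charge of $1$, then redistribute charge among codewords so that every ball $B_{\sqrt2}(u)$ receives total charge at least some threshold. If every closed neighbourhood $N[u]=B_{\sqrt2}(u)$ ends up with charge at least $c$, and each codeword distributes all of its charge into balls it belongs to, a counting/density estimate forces $D(C)\ge c/(\text{ball size})$ or, more directly in the style of Proposition~\ref{prop:lowpattern}, one shows $|(v+S)\cap C|\ge m$ on average for a suitable window $S$ of $15$ cells with $m=4$. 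Concretely I expect to identify a translation-invariant weighting pattern $S$ of $15$ vertices so that every $(\sqrt2,2-\sqrt2)$-identifying code meets $v+S$ in at least $4$ points for every $v$, which by Proposition~\ref{prop:lowpattern} yields exactly $D\ge 4/15$.

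\textbf{Key steps.} First I would tabulate precisely the constraints forced by the identifying conditions at the local scale: from Figures~\ref{fig:hor2} and~\ref{fig:diag2} read off the horizontal, vertical, diagonal and anti-diagonal patterns $\S_{\sqrt2,2-\sqrt2}(u,v)$ for the four nearest-neighbour displacements, plus the $r$-domination condition $B_{\sqrt2}(u)\cap C\neq\emptyset$. Second, I would suppose for contradiction that a code $C$ has density below $\frac{4}{15}$, and set up the discharging: each codeword starts with charge $1$ and sends fixed fractions of its charge to nearby vertices (or to nearby ``deficient'' balls) according to the local configuration. Third, using the separation constraints from step one, I would verify a discharging invariant --- that after redistribution no codeword is left overcharged and every relevant region carries enough charge --- which translates, via a clean averaging over large squares $Q_n$, into $\frac{|C\cap Q_n|}{|Q_n|}\ge \frac{4}{15}-o(1)$. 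Because the proposition is only a ``sketch'' meant to illustrate the method before the sharper bound, I would keep the discharging rules as simple as possible, ideally reducing the whole argument to exhibiting the single window $S$ of size $15$ guaranteeing $|C\cap(v+S)|\ge4$.

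\textbf{Main obstacle.} The delicate part is choosing the discharging rules (equivalently, the shape of the $15$-vertex window $S$) so that the separation requirements genuinely force four codewords in every translate, without slack that would only yield the weaker bound $\frac14$. The patterns here are small, so the combinatorial case analysis is finite but must be exhaustive: I expect the hard step to be ruling out the dense local configurations that simultaneously satisfy all four directional separation patterns yet place only three codewords in a candidate $15$-cell window. I would handle this by examining a small number of local configurations around a vertex of $C$ and showing each either forces an extra nearby codeword or propagates a deficiency that the averaging cannot sustain; once the window $S$ is correctly calibrated, the conclusion $D\ge 4/15$ is immediate from Proposition~\ref{prop:lowpattern}.
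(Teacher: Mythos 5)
Your proposal correctly guesses that the argument is a discharging/averaging argument, but it stops at the level of a plan and, more importantly, the concrete reduction you say you would aim for is almost certainly unattainable. You hope to exhibit a $15$-vertex window $S$ with $|C\cap(v+S)|\ge 4$ for every $v$ and then invoke Proposition~\ref{prop:lowpattern} directly. The paper does something structurally different: it uses a $12$-vertex square annulus $F$ (Figure~\ref{fig:frame3}) and shows that the \emph{average} number of codewords per translate of $F$ is at least $\tfrac{16}{5}$, whence $D\ge \tfrac{16}{5}\cdot\tfrac{1}{12}=\tfrac{4}{15}$. The fact that the relevant per-window count is the non-integer $\tfrac{16}{5}$ is the whole point: no single fixed window with an integer guarantee $m$ is exhibited, and the bound is obtained by the averaging variant of Proposition~\ref{prop:lowpattern} (as in \cite{H04,JL11}), not by the plain ``every translate contains at least $m$ codewords'' statement. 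Your fallback discharging description (codewords start with charge $1$ and push charge to deficient balls) is also not the scheme used; the paper assigns to each frame $u+F$ the charge $|C\cap(u+F)|$ and has frames of charge $\ge 4$ donate $\tfrac15$ to each neighbouring frame of charge $3$.

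The substantive missing content is the finite combinatorial core that makes the discharging close: (i) every translate of $F$ contains at least $3$ codewords, proved by noting $F$ contains two disjoint horizontal patterns and then ruling out the two-codeword configurations via the corner/diagonal patterns; (ii) the classification, up to symmetry, of the exactly-three-codeword frames into the five configurations of Figure~\ref{fig:possframe3}; and (iii) the verification that each of these five configurations forces a specific neighbouring frame ($d(u,v)=1$) to contain at least $4$ codewords. Without (i)--(iii) there is no invariant for the discharging to preserve, and without the correct frame and the averaging lemma the target constant $\tfrac{4}{15}$ cannot be reached. So while your high-level instinct (local case analysis plus charge redistribution) matches the paper, the proposal as written has a genuine gap at every step where an actual object or verification is required.
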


\begin{figure}[h]
\begin{center}
\subfloat[][\label{fig:hor2}]{
\begin{tikzpicture}[scale=0.6]
\draw[mygrid] (-2.9,-1.9) grid (1.9,1.9);
\foreach \I in {-2,...,1}\foreach \J in {-1,...,1}
	{\node[gridnode](\I\J) at (\I,\J) {};}
\foreach \pos in {(-2, 1), (-2,-1), (1, 1),(1,-1)}
	\node[pattern] at \pos {};
	
\node[centernode] at (0,0) {};
\node[labelnode] at (0.3,0.5) {$v$};
\node[centernode] at (-1,0) {};
\node[labelnode] at (-1.3,0.5) {$u$};
\end{tikzpicture}
}
\hfil
\subfloat[][\label{fig:diag2}]{
\begin{tikzpicture}[scale=0.6]
\draw[mygrid] (-2.9,-2.9) grid (1.9,1.9);
\foreach \I in {-2,...,1}\foreach \J in {-2,...,1}
	{\node[gridnode](\I\J) at (\I,\J) {};}

\draw[patternpath] (0.5,0.5)--(-0.5,0.5)--(-0.5,1.5)--(1.5,1.5)--(1.5,-0.5)--(0.5,-0.5)--(0.5,0.5);
	
\draw[patternpath, rotate = 180, shift = {(1,1)}] (0.5,0.5)--(-0.5,0.5)--(-0.5,1.5)--(1.5,1.5)--(1.5,-0.5)--(0.5,-0.5)--(0.5,0.5);
\node[centernode] at (0,0) {};
\node[labelnode] at (0,-0.5) {$v$};
\node[centernode] at (-1,-1) {};
\node[labelnode] at (-1,-0.5) {$u$};
\end{tikzpicture}}
\hfil
\subfloat[][\label{fig:frame3}]{
\begin{tikzpicture}[scale=0.6]
\draw[mygrid] (-2.9,-2.9) grid (1.9,1.9);
\foreach \I in {-2,...,1}\foreach \J in {-2,...,1}
	{\node[gridnode](\I\J) at (\I,\J) {};}

\begin{scope}[even odd rule]
\filldraw[pattern] (-2.5,-2.5) rectangle (1.5,1.5)
 (-1.5,-1.5) rectangle (0.5,0.5);
\end{scope}
\end{tikzpicture}}
\end{center}
\caption{The set $\S_{\sqrt{2}, 2-\sqrt{2}}(u,v)$ when (a) $v-u=(1,0)$  and (b) $v-u=(1,1)$.  In (c), the frame used to increase the lower bound in the case $(r,\Delta)=(\sqrt{2},2-\sqrt{2})$.}
\end{figure}
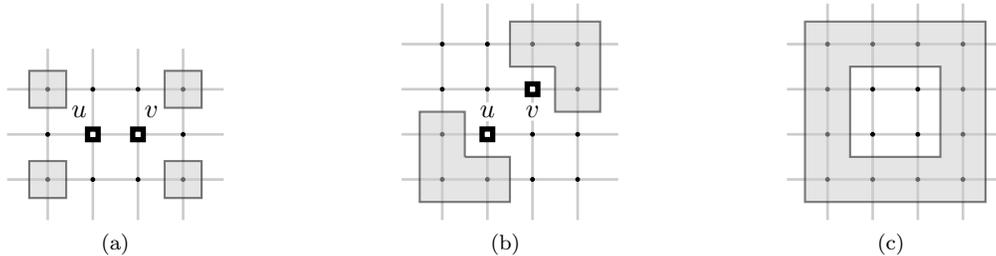

\begin{proof}
To prove this lower bound, we use the frame of Figure~\ref{fig:frame3}. Let $F$ be a fixed set of vertices of $\Z^2$ forming the frame of Figure~\ref{fig:frame3}.
Let $C$ be a $(\sqrt{2},2-\sqrt{2})$-identifying code. In what follows, we show that on average there are at least $16/5$ vertices of the code in every frame. We first note that for any $u\in \Z^2$, $C\cap (u+F)$ contains at least three vertices. Indeed, $F$ contains two disjoint horizontal patterns, so $C\cap (u+F)$ contains at least two vertices. Assume there are only two vertices, then one of the vertices must be in a corner, otherwise one horizontal or vertical pattern is empty. Then the other one must be in the opposite corner, but that implies that a diagonal pattern is empty.

\begin{figure}[h]
\begin{center}
 \subfloat[][\label{fig:F3a}]
{\begin{tikzpicture}[scale=0.45]
\draw[mygrid] (-2.9,-2.9) grid (1.9,1.9);
\foreach \I in {-2,...,1}\foreach \J in {-2,...,1}
	{\node[gridnode](\I\J) at (\I,\J) {};}

\begin{scope}[even odd rule]
\filldraw[pattern] (-2.5,-2.5) rectangle (1.5,1.5)
 (-1.5,-1.5) rectangle (0.5,0.5);
\end{scope}
\foreach \pos in {(-2,1),(-1,-2),(1,-2)}
	{\node[code] at \pos {};}

\end{tikzpicture}}
\hfil
 \subfloat[][\label{fig:F3b}]{
\begin{tikzpicture}[scale=0.45]
\draw[mygrid] (-2.9,-2.9) grid (1.9,1.9);
\foreach \I in {-2,...,1}\foreach \J in {-2,...,1}
	{\node[gridnode](\I\J) at (\I,\J) {};}

\begin{scope}[even odd rule]
\filldraw[pattern] (-2.5,-2.5) rectangle (1.5,1.5)
 (-1.5,-1.5) rectangle (0.5,0.5);
\end{scope}

\foreach \pos in {(-2,1),(-2,-2),(1,1)}
	{\node[code] at \pos {};}
\end{tikzpicture}
}
\hfil
 \subfloat[][\label{fig:F3c}]{
\begin{tikzpicture}[scale=0.45]
\draw[mygrid] (-2.9,-2.9) grid (1.9,1.9);
\foreach \I in {-2,...,1}\foreach \J in {-2,...,1}
	{\node[gridnode](\I\J) at (\I,\J) {};}

\begin{scope}[even odd rule]
\filldraw[pattern] (-2.5,-2.5) rectangle (1.5,1.5)
 (-1.5,-1.5) rectangle (0.5,0.5);
\end{scope}
\foreach \pos in {(-2,1),(-2,0),(1,1)}
	{\node[code] at \pos {};}
\end{tikzpicture}}
\hfil
 \subfloat[][\label{fig:F3d}]{
\begin{tikzpicture}[scale=0.45]
\draw[mygrid] (-2.9,-2.9) grid (1.9,1.9);
\foreach \I in {-2,...,1}\foreach \J in {-2,...,1}
	{\node[gridnode](\I\J) at (\I,\J) {};}

\begin{scope}[even odd rule]
\filldraw[pattern] (-2.5,-2.5) rectangle (1.5,1.5)
 (-1.5,-1.5) rectangle (0.5,0.5);
\end{scope}
\foreach \pos in {(-2,1),(-1,1),(1,0)}
	{\node[code] at \pos {};}
\end{tikzpicture}}
\hfil
 \subfloat[][\label{fig:F3e}]{
\begin{tikzpicture}[scale=0.45]
\draw[mygrid] (-2.9,-2.9) grid (1.9,1.9);
\foreach \I in {-2,...,1}\foreach \J in {-2,...,1}
	{\node[gridnode](\I\J) at (\I,\J) {};}

\begin{scope}[even odd rule]
\filldraw[pattern] (-2.5,-2.5) rectangle (1.5,1.5)
 (-1.5,-1.5) rectangle (0.5,0.5);
\end{scope}
\foreach \pos in {(-2,1),(-1,-2),(1,0)}
	{\node[code] at \pos {};}
\end{tikzpicture}
}
\end{center}
\caption{\label{fig:possframe3} In a $(\sqrt{2},2-\sqrt{2})$-identifying code, there are only those five possibilities for the frame of Figure~\ref{fig:frame3} to have three vertices. For each of them, there is a frame in the neighbourhood with at least four vertices.}
\end{figure}
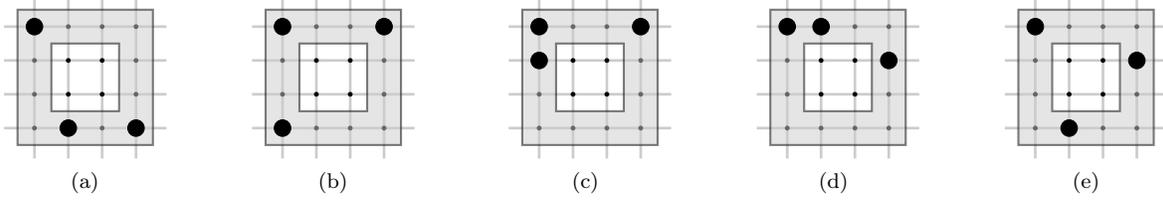

In fact, if $C\cap (u+F)$ contains exactly three vertices, there are, up to obvious symmetry, only five different possibilities for the positions of the three vertices, depicted in Figure~\ref{fig:possframe3}. We can observe that for any of those possibilities, one of the neighbouring frames, i.e. one set $v+F$ with $d(u,v)=1$, is containing at least four vertices of $C$: the frame on the top for case (c) and the frame on the left for the other cases.

We now give, for any $u\in \Z^2$, charge $|C\cap (u+F)|$ to each set $u+F$. We apply the following rule in order to even out the charge among the frames: each set $u+F$ with charge at least 4 gives charge $\frac{1}{5}$ away to the neighbouring sets $v+F$, with $d(u,v)=1$ which have charge 3. By the previous remark, after this process, each set $u+F$ will have charge at least $3+\frac{1}{5}=\frac{16}{5}$.

That means that on average, the number of vertices of $C$ in a frame $u+F$ is at least $\frac{16}{5}$. Using the same method as in the proof of Proposition~\ref{prop:lowpattern} (see \cite{H04} and \cite{JL11}), we can conclude that the density of $C$ is at least $\frac{16}{5}\times \frac{1}{|F|} = \frac{4}{15}$.
\end{proof}

We can improve this lower bound by further analysis and more advanced discharging rules. That leads to the following proposition, which is shown in Appendix B:

\begin{proposition}\label{prop:low2}
We have
\[D(\sqrt{2},2-\sqrt{2})\geq \frac{16}{57}\textrm{.}\]
\end{proposition}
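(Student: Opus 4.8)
The plan is to refine the discharging argument used in Proposition~\ref{prop:low2f}. There, the target density $\frac{4}{15}=\frac{16/5}{15}$ came from showing that every frame $u+F$ receives charge at least $\frac{16}{5}$. To improve the bound to $\frac{16}{57}$, we aim to show that on average each frame carries at least $\frac{16}{19}\cdot 4 = \frac{64}{19}$ vertices, since $\frac{64/19}{15}=\frac{64}{285}$ does not match; instead the correct target is that the average charge per frame is at least $\frac{16\cdot 15}{57}=\frac{80}{19}\cdot\frac{1}{\dots}$, so the first step is simply to record that, with $|F|=15$, proving $D\geq\frac{16}{57}$ is equivalent to showing that the average value of $|C\cap(u+F)|$ over all $u$ is at least $\frac{16}{57}\cdot 15 = \frac{80}{19}$. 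Thus the whole proof reduces to a sharper averaging statement about frame charges.

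First I would carry out a finer case analysis of the frames that carry only three vertices. In the weaker proposition it sufficed to find \emph{one} neighbouring frame with at least four vertices and to transfer charge $\frac15$; now I would instead classify, for each of the five configurations of Figure~\ref{fig:possframe3} (up to symmetry), exactly how the forced structure of a $(\sqrt{2},2-\sqrt{2})$-identifying code constrains the \emph{entire neighbourhood} of the frame. The key is that the separation conditions $\S_{\sqrt2,2-\sqrt2}(u,v)\cap C\neq\emptyset$ for the horizontal, vertical, diagonal, and anti-diagonal patterns (Figures~\ref{fig:hor2} and~\ref{fig:diag2}) propagate: a ``3-frame'' of a given type forces specific code/non-code positions in adjacent columns and rows, which in turn forces several neighbouring frames to be heavier, or forces that two neighbouring 3-frames cannot occur simultaneously. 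I would enumerate which patterns of 3-frames can tile a region and show each such 3-frame has enough ``rich'' neighbours to absorb a larger net inflow of charge.

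Second, I would design a more elaborate discharging rule, no longer a single uniform transfer of $\frac15$, but a set of rules calibrated so that after discharging every frame ends with charge at least $\frac{80}{19}$. Concretely, each frame of charge $\geq 4$ sends prescribed (possibly different) amounts of charge to its 3-frame neighbours along the four cardinal directions, the amounts chosen so that (i) no frame of charge $4$ is overdrawn below $\frac{80}{19}$, and (ii) every 3-frame, using the structural constraints from the previous step, collects at least $\frac{80}{19}-3=\frac{23}{19}$ in total. I would verify the rule by checking, configuration by configuration, that the inflow to each 3-frame meets this threshold while the outflow from each heavy frame never drops it below the target; then the averaging lemma (the same limiting argument as in Proposition~\ref{prop:lowpattern}, cf.~\cite{H04,JL11}) yields $D(C)\geq\frac{80}{19}\cdot\frac{1}{15}=\frac{16}{57}$.

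The hard part will be the bookkeeping in the second step: ensuring that a frame of charge exactly $4$ that simultaneously borders several 3-frames does not give away so much charge that it falls below $\frac{80}{19}$. This requires showing, via the identification constraints, that a charge-$4$ frame cannot be adjacent to too many 3-frames at once, or else that such heavy frames are themselves compensated by even heavier neighbours. Getting a globally consistent rule that survives every local configuration—rather than just the five isolated 3-frame pictures—is the crux, and it is presumably why the authors defer the full argument to Appendix~B. My expectation is that the proof proceeds by a careful finite case check over all admissible local neighbourhoods (a bounded window around each frame), which is routine in principle but intricate, and that the value $\frac{16}{57}$ is exactly the optimum achievable by this particular frame and discharging scheme.
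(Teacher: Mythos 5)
Your overall strategy---reuse the frame of Figure~\ref{fig:frame3}, classify the $3$-frames more finely, and design calibrated discharging rules---is indeed the route the paper takes in Appendix~B, but as written your proposal has a concrete arithmetic error and leaves out the one structural idea that makes the calibration possible. First, the frame $F$ has $|F|=12$ vertices (a $4\times4$ block minus its $2\times2$ centre), not $15$; in Proposition~\ref{prop:low2f} the bound is $\frac{16}{5}\cdot\frac{1}{12}=\frac{4}{15}$. Consequently the correct target is an average charge of $\frac{16}{57}\cdot 12=\frac{64}{19}$ per frame, so a $3$-frame must collect $\frac{64}{19}-3=\frac{7}{19}$, and a $4$-frame can afford to give away at most $4-\frac{64}{19}=\frac{12}{19}$ in total. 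Your figures $\frac{80}{19}$ and $\frac{23}{19}$ are unattainable: a $4$-frame adjacent to even one $3$-frame could never both donate $\frac{23}{19}$ and stay above $\frac{80}{19}$, so the scheme you describe cannot close.

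Second, restricting the transfers to the four cardinal neighbours is not enough. The paper's case analysis isolates a type of $3$-frame (an ``$A$'' frame all of whose other neighbours are also in $\mathcal F_3$) whose only adjacent heavy frame cannot supply the required $\frac{7}{19}$ on its own; the forced local structure instead produces a frame with at least five code vertices at distance $\sqrt{5}$, and the discharging rule must send charge along that non-adjacent ``association'' (and similarly the ``$C$'' frames draw a larger share, $\frac{7}{12}\alpha$ with $\alpha=\frac{12}{19}$, from an $\mathcal F_{\geq 5}$ neighbour). Without identifying these associated heavy frames and proving that each can be associated to only boundedly many $3$-frames, the verification that every poor frame reaches $\frac{64}{19}$ while no donor drops below it cannot be completed. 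Finally, your text is a plan rather than a proof: the actual content---the enumeration of neighbourhood configurations for each of the five $3$-frame types and the explicit transfer amounts---is precisely what is missing, and the bound $\frac{16}{57}$ is determined by those specifics, not by the general shape of the argument.
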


\medskip

For the upper bound, we use the code $C$ of Figure~\ref{fig:code2} that has density $\frac{1}{3}$.
To show that $C$ is a $(\sqrt{2},2-\sqrt{2})$-identifying code, we only need, in this particular case, to check that it is a $\sqrt{2}$-dominating set and that the sets $\S_{\sqrt{2},2-\sqrt{2}}(u,v)$ are intersecting $C$ for $d(u,v)\leq \sqrt{2}$. Indeed, for the other pairs of vertices, $\S_{\sqrt{2},2-\sqrt{2}}(u,v)$ is either containing  a set $\S_{\sqrt{2},2-\sqrt{2}}(u',v')$ with $d(u',v')\leq \sqrt{2}$ or the set $B_r(u)$. This is not true for general values of $r$ and $\Delta$.

\begin{figure}[h]
\begin{center}
\begin{tikzpicture}[scale=0.6,rotate = 90]
\draw[mygrid] (-4.9,-5.9) grid (4.9,5.9);
\foreach \I in {-4,...,4}\foreach \J in {-5,...,5}
	{\node[gridnode](\I\J) at (\I,\J) {};}

\foreach \pos in {(-4,-5),(0,-5),(-2,-4),(-1,-4),(-3,-3),(1,-3)}
	{\node[code] at \pos {};}
	
\begin{scope}[shift={(3,3)}]
\foreach \pos in {(-4,-5),(0,-5),(-2,-4),(-1,-4),(-3,-3),(1,-3)}
	{\node[code] at \pos {};}
\end{scope}

\begin{scope}[shift={(0,6)}]
\foreach \pos in {(-4,-5),(0,-5),(-2,-4),(-1,-4),(-3,-3),(1,-3)}
	{\node[code] at \pos {};}
\end{scope}

\begin{scope}[shift={(3,9)}]
\foreach \pos in {(-4,-5),(0,-5),(-2,-4),(-1,-4)}
	{\node[code] at \pos {};}
\end{scope}

\begin{scope}[shift={(6,0)}]
\foreach \pos in {(-4,-5),(-2,-4),(-3,-3)}
	{\node[code] at \pos {};}
\end{scope}

\begin{scope}[shift={(-3,3)}]
\foreach \pos in {(0,-5),(-1,-4),(1,-3)}
	{\node[code] at \pos {};}
\end{scope}

\begin{scope}[shift={(6,6)}]
\foreach \pos in {(-4,-5),(-2,-4),(-3,-3)}
	{\node[code] at \pos {};}
\end{scope}

\begin{scope}[shift={(-3,9)}]
\foreach \pos in {(0,-5),(-1,-4)}
	{\node[code] at \pos {};}
\end{scope}

\draw[ball] (-1.5,-1.5) rectangle (1.5,4.5);
\draw[ball,dashed] (-4.5,-4.5) rectangle (-1.5,1.5);
\end{tikzpicture}
\end{center}
\caption{\label{fig:code2}A $(\sqrt{2},2-\sqrt{2})$-identifying code of density $1/3$.}
\end{figure}
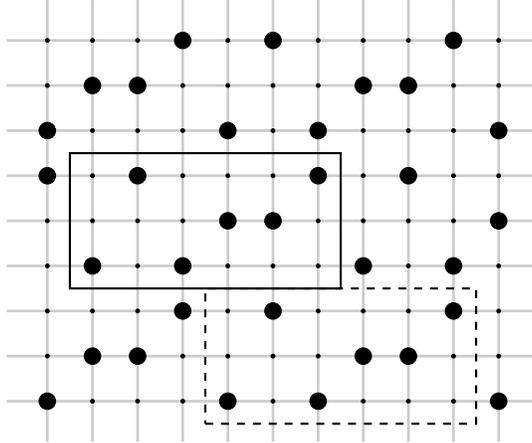

\subsection{Case $(r,\Delta)=(\sqrt{5},\sqrt{8}-\sqrt{5})$}

In Figures~\ref{fig:hor5} and~\ref{fig:diag5}, the horizontal and diagonal patterns for $(r,\Delta)=(\sqrt{5},\sqrt{8}-\sqrt{5})$ are shown. As before, we can improve the straightforward lower bound of $\frac{1}{6}$ using the frame of Figure~\ref{fig:low5} and discharging rules:

\begin{proposition}\label{prop:low5}
We have
\[D(\sqrt{5},\sqrt{8}-\sqrt{5})\geq 0.17. \]
\end{proposition}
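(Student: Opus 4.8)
The plan is to reproduce, at a larger scale, the discharging scheme of Proposition~\ref{prop:low2f}, now with the frame $F$ of Figure~\ref{fig:low5}. Let $C$ be a $(\sqrt5,\sqrt8-\sqrt5)$-identifying code. As in that proof, I would assign to each translate $u+F$ the charge $|C\cap(u+F)|$, redistribute charge between neighbouring frames by a local rule, and show that after redistribution every frame carries charge at least $0.17\,|F|$; the bound $D(C)\ge 0.17$ then follows from the averaging argument behind Proposition~\ref{prop:lowpattern} (see \cite{H04,JL11}).

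First I would record the relevant patterns. For $(r,\Delta)=(\sqrt5,\sqrt8-\sqrt5)$ the horizontal pattern $\S_{\sqrt5,\sqrt8-\sqrt5}((0,0),(-1,0))$ is the six-vertex set $\{(2,0),(2,\pm1),(-3,0),(-3,\pm1)\}$, that is, two vertical triples, and the diagonal pattern has twelve vertices (Figures~\ref{fig:hor5} and~\ref{fig:diag5}). Since $C$ meets every translate of these patterns, I would choose $F$ to contain several pairwise disjoint translates of the horizontal and vertical patterns; counting them yields a uniform bound $|C\cap(u+F)|\ge m_0$ valid for all $u$, with $m_0/|F|$ only recovering the straightforward bound $\frac16$. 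As in Proposition~\ref{prop:low2f}, establishing the exact value of $m_0$ may require a short separate argument excluding the most economical placements, using that a vertical triple of the pattern cannot be covered by a single corner vertex without leaving some diagonal or anti-diagonal pattern unhit.

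Next I would classify the \emph{tight} frames, those $u+F$ achieving $|C\cap(u+F)|=m_0$. In such a frame each disjoint pattern inside $F$ is hit exactly once, and the diagonal constraints pin these hits down to a short explicit list of configurations, up to the symmetries of $F$ (the analogue of Figure~\ref{fig:possframe3}). For each configuration I would exhibit a neighbour $v+F$ with $d(u,v)=1$ that is forced to be \emph{rich}, namely to contain strictly more than $m_0$ code vertices. The discharging rule is then the natural one: each rich frame sends a fixed fractional charge to each deficient neighbour relying on it, the fraction tuned so that every tight frame is topped up to $0.17\,|F|$ while no rich frame falls below that threshold.

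The main obstacle will be this case analysis together with the bookkeeping of the redistribution. Because the horizontal pattern here has six vertices rather than four and $F$ is correspondingly larger, there are far more tight configurations than in the $(\sqrt2,2-\sqrt2)$ case, and one must check that every rich frame retains enough surplus to supply all the tight neighbours that draw charge from it. Choosing the donated fraction and the precise shape of $F$ so that both inequalities hold at once — enough reaching the tight frames, not too much leaving the rich ones — and thereby fixing the averaged charge at $0.17\,|F|$, is the delicate part; once it is verified, the density bound is immediate.
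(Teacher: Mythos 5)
Your overall strategy --- the frame of Figure~\ref{fig:low5}, charges $|C\cap(u+F)|$, and a local redistribution rule --- is exactly the one the paper uses, but what you have written is a plan rather than a proof, and the specific scheme you describe would not reach $0.17$. The frame $F$ has $20$ vertices and the basic count is $|C\cap(u+F)|\ge 3$ (note $3/20=0.15$ is actually \emph{below} the trivial bound $\frac16$, not equal to it as you suggest). To reach $0.17$ every $3$-frame must end with charge at least $\frac{17}{5}$, i.e.\ it must receive at least $\frac25$. Your sketch promises, for each tight configuration, \emph{one} rich neighbour and a fixed donated fraction. If a tight frame relies on a single rich neighbour, that neighbour must donate $\frac25$ per tight neighbour; a donor with exactly $4$ code vertices supporting even two such neighbours then drops to $4-\frac45=\frac{16}{5}<\frac{17}{5}$, and the scheme collapses. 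This is precisely the obstruction that keeps the simpler argument of Proposition~\ref{prop:low2f} at $3+\frac15$ rather than $3+\frac25$.

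The paper's proof gets past this with content your proposal does not supply: (i) it shows that a $3$-frame always has at least \emph{two} neighbours in $\mathcal{F}_{\ge 4}$ (one of the three code vertices must sit in a corner, which forces code vertices in the positions marked $A$ and $B$ of Figure~\ref{fig:low5}, which in turn force the left and top neighbouring frames to be rich); (ii) it splits the tight frames into $\mathcal{F}_{3A}$ (exactly two rich neighbours) and $\mathcal{F}_{3B}$ (at least three) and proves three structural facts, e.g.\ that a frame of $\mathcal{F}_{\ge4}$ with four neighbours in $\mathcal{F}_{3A}$ must in fact contain five code vertices; and (iii) it uses a two-tier rule ($\frac15$ to $\mathcal{F}_{3A}$ neighbours, and $\frac1{10}$ or $\frac15$ to $\mathcal{F}_{3B}$ neighbours depending on the donor's own neighbourhood) so that every donor retains at least $\frac{17}{5}$. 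Without analogues of (i)--(iii) --- in particular without the guarantee of two rich neighbours and the extra facts bounding how much a rich frame can be asked to give --- the ``tuning'' you defer to the end cannot succeed; the gap is not mere bookkeeping but the mathematical core of the argument.
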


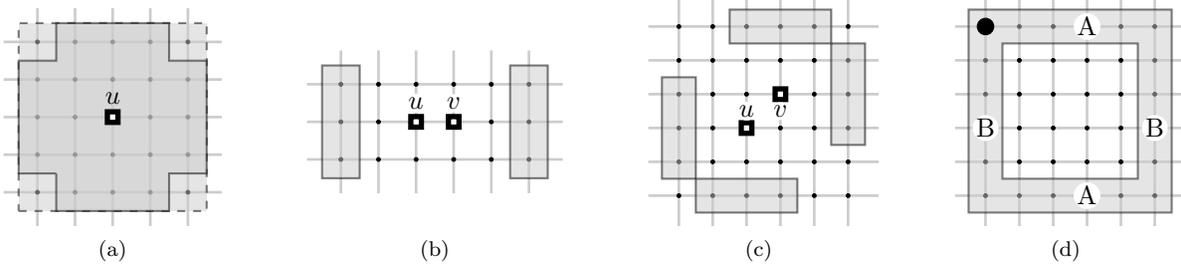
\begin{figure}[ht]
\begin{center}
\subfloat[][\label{fig:b5}]{
\begin{tikzpicture}[scale=0.5]
\draw[mygrid] (-2.9,-2.9) grid (2.9,2.9);
\foreach \I in {-2,...,2}\foreach \J in {-2,...,2}
	{\node[gridnode](\I\J) at (\I,\J) {};}
\draw[patternpath, dashed] (-2.5,-2.5) rectangle (2.5,2.5);
\draw[patternpath] (-1.5,-1.5) -- (-2.5,-1.5) -- (-2.5,1.5) -- (-1.5,1.5) -- (-1.5,2.5)--(1.5,2.5) -- (1.5,1.5) -- (2.5,1.5) -- (2.5,-1.5) -- (1.5,-1.5) -- (1.5,-2.5) -- (-1.5,-2.5) -- (-1.5,-1.5);
\node[centernode] at (0,0) {};
\node[] at (0,0.5) {$u$};
\end{tikzpicture}
}
\hfil
\subfloat[][\label{fig:hor5}]{
\begin{tikzpicture}[scale=0.5, baseline = -40]
\draw[mygrid] (-3.9,-1.9) grid (2.9,1.9);
\foreach \I in {-3,...,2}\foreach \J in {-1,...,1}
	{\node[gridnode](\I\J) at (\I,\J) {};}

\draw[patternpath] (-2.5,-1.5) rectangle (-3.5,1.5);
\draw[patternpath] (2.5,-1.5) rectangle (1.5,1.5);
	
\node[centernode] at (0,0) {};
\node[circle,fill=white, inner sep =0] at (0,0.5) {$v$};
\node[centernode] at (-1,0) {};
\node[circle,fill=white, inner sep =0] at (-1,0.5) {$u$};
\end{tikzpicture}
}
\hfil
\subfloat[][\label{fig:diag5}]{
\begin{tikzpicture}[scale=0.45]
\draw[mygrid] (-3.9,-3.9) grid (2.9,2.9);
\foreach \I in {-3,...,2}\foreach \J in {-3,...,2}
	{\node[gridnode](\I\J) at (\I,\J) {};}

\draw[patternpath] (-3.5,-2.5) rectangle (-2.5,0.5);
\draw[patternpath] (2.5,-1.5) rectangle (1.5,1.5);
\draw[patternpath] (-2.5,-2.5) rectangle (0.5,-3.5);
\draw[patternpath] (-1.5,2.5) rectangle (1.5,1.5);
\node[centernode] at (0,0) {};
\node[labelnode] at (0,-0.5) {$v$};
\node[centernode] at (-1,-1) {};
\node[labelnode] at (-1,-0.5) {$u$};
\end{tikzpicture}}
\hfil
\subfloat[][\label{fig:low5}]{
\begin{tikzpicture}[scale=0.45]
\draw[mygrid] (-3.9,-3.9) grid (2.9,2.9);
\foreach \I in {-3,...,2}\foreach \J in {-3,...,2}
	{\node[gridnode](\I\J) at (\I,\J) {};}

\begin{scope}[even odd rule]
\filldraw[pattern] (-3.5,-3.5) rectangle (2.5,2.5)
 (-2.5,-2.5) rectangle (1.5,1.5);
\end{scope}
\node[labelnode] at (0,2) {A};
\node[labelnode] at (0,-3) {A};
\node[labelnode] at (2,-1) {B};
\node[labelnode] at (-3,-1) {B};
\node[code] at (-3,2) {};
\end{tikzpicture}}
\end{center}
\caption{In (a), the set $B_{r}(u)$ for $r=\sqrt{5}$ and $r=\sqrt{8}$. In (b) and (c), the set $\S_{\sqrt{5}, \sqrt{8}-\sqrt{5}}(u,v)$ when (b) $v-u=(1,0)$ and (c) $v-u=(1,1)$.  In (d), the frame used to increase the lower bound in the case $(r,\Delta)=( \sqrt{5},\sqrt{8}-\sqrt{5})$.
}
\end{figure}

\begin{proof}
We use frame $F$ of Figure~\ref{fig:low5}. The proof is based on showing that on average each such frame contains at least $17/5$ vertices of a code. First one can notice that in a $(\sqrt{5},\sqrt{8}-\sqrt{5})$-identifying code, there are at least three vertices in each translation of frame $F$. Furthermore, if there are only three vertices in one translation $u+F$ of $F$, then, necessarily, one of the three vertices, say $c$, is in a corner and without loss of generality, we can assume it is in the top left corner. Then there must be one vertex of the code in position $c+(3,0)$ or $c+(3,-6)$ (positions $A$ in Figure~\ref{fig:low5}) and one vertex in position $c+(0,-3)$ or $c+(6,-3)$ (positions $B$ in the figure). Then the frame on the left and on the top have necessarily four vertices.
This is not enough to improve the lower bound of $\frac{1}{6}$, but we can use further analysis.
Let $C$ be a $(\sqrt{5},\sqrt{8}-\sqrt{5})$-identifying code.
Let $\mathcal F_{3A}$ (resp. $\mathcal F_{3B}$) be all the sets $u+F$ such  that $|(u+F)\cap C|=3$ and there are exactly two (resp. at least three) neighbouring frames $v+F$ with $d(u,v)=1$ such that $|(v+F)\cap C|\geq 4$. Let $\mathcal F_{i}$ (resp. $\mathcal F_{\geq i}$) be all the sets $u+F$ such  that $|(u+F)\cap C|=i$ (resp. $|(u+F)\cap C|\geq i$).
We have the following facts:
\begin{enumerate}
\item If a set $u+F\in \mathcal F_{\geq 4}$ has four neighbours in $\mathcal F_{3A}$, then it has at least five elements of $C$.
\item A set $u+F\in \mathcal F_{4}$ cannot have three neighbours in $\mathcal F_{3A}$ and one in  $\mathcal F_{3B}$.
\item If a set $u+F\in \mathcal F_{3B}$, then either it has a neighbour in $\mathcal F_{\geq 4}$ which has a neighbour in $\mathcal F_{\geq 4}$, or it has four neighbours in $\mathcal F_{\geq 4}$.
\end{enumerate}

Indeed, let us for example show the first fact. Let us assume there exists $u+F\in \mathcal F_{4}$ with four neighbours in $\mathcal F_{3A}$. Among the four frames $v+F$ with $d(u,v)=\sqrt{2}$ (frames in diagonal), exactly two of them are in $\mathcal F_{\geq 4}$, and they are in diagonal. Without loss of generality, we can assume that $(u+(1,1))+F$ and $(u+(-1,-1))+F$ are in $\mathcal F_{\geq 4}$. Then the four vertices of the code in $u+F$ are fixed. Indeed, each neighbouring frame of $u+F$ has its corner fixed, and it is in $u+F$ (the corner is fixed by the position of the two neighbouring frames in  $\mathcal F_{\geq 4}$). But then a diagonal pattern in $u + F$ is free (it contains no vertex of the code) in $u+F$, a contradiction. The other facts are proved in a similar way.

Now we give charge $|(u+F)\cap C|$ to each set $u+F$. Each set $u+F$ of $\mathcal F_{\geq 4}$ gives charge $\frac{1}{5}$ to each neighbour of $\mathcal F_{3A}$ and gives to each neighbour of $\mathcal F_{3B}$:
\begin{itemize}
\item $\frac{1}{10}$ if $u+F$ has no neighbours in $\mathcal F_{\geq 4}$,
\item $\frac{1}{5}$ if $u+F$ has one neighbour in $\mathcal F_{\geq 4}$.
\end{itemize}

Then a set of $\mathcal F_{\geq 4}$ gives at most $\frac{3}{5}$ if it has four elements of the code, and $\frac{4}{5}$ otherwise, and each set of $\mathcal F_{3A}$ or $\mathcal F_{3B}$ receives at least $\frac{2}{5}$.
After the discharging, each set $u+F$ has at least charge $3+\frac{2}{5}=\frac{17}{5}$. That means that in the pattern $F$, there are in average at least $\frac{17}{5}$ vertices of the code. As before, there are $20$ vertices in $F$ so $D(C)\geq \frac{17}{100}=0.17$.
\end{proof}

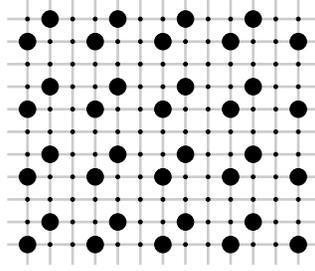
\begin{figure}[h]
\begin{center}
\begin{tikzpicture}[scale=0.3]
\draw[mygrid] (-4.9,-5.9) grid (8.9,5.9);
\foreach \I in {-4,...,8}\foreach \J in {-5,...,5}
	{\node[gridnode](\I\J) at (\I,\J) {};}

\foreach \I in {-4,-1,2,5,8} \foreach \J in {-5,-2,1,4}
	\node[code] at (\I,\J) {};
\foreach \I in {-3,0,3,6} \foreach \J in {-4,-1,2,5}
	\node[code] at (\I,\J) {};
	
\end{tikzpicture}
\end{center}
\caption{\label{fig:code5}A $(\sqrt{5},\sqrt{8}-\sqrt{5})$-identifying code of density $\frac{2}{9}$.}
\end{figure}

We believe that the lower bound in the previous proposition is not the optimal one. We can certainly improve it  with further analysis of this frame, but we think that will lead  to very small improvements.
For upper bound, we use the code of Figure~\ref{fig:code5} of density $\frac{2}{9}$.
Here, the code is clearly a $\sqrt{5}$-dominating set, and even a $2$-dominating set. It is also intersecting the horizontal and diagonal patterns (with rotations). If $u$ and $v$ are distinct vertices of $\Z^2$, we can prove that either $\S_{\sqrt{5},\sqrt{8}-\sqrt{5}}(u,v)$ contains an horizontal or a diagonal pattern, or it contains a ball of radius 2, showing that $C$ is a $(\sqrt{5},\sqrt{8}-\sqrt{5})$-identifying code.

\subsection{Case $(r,\Delta)=(\sqrt{8},3-\sqrt{8})$}

In Figures~\ref{fig:hor8} and~\ref{fig:diag8}, the horizontal and diagonal patterns for $(r,\Delta)=(\sqrt{8},3-\sqrt{8})$ are shown. Again, we can improve the straightforward lower bound of $\frac{1}{8}$ using the frame of Figure~\ref{fig:low8}:

\begin{proposition}\label{prop:low8}
We have
\[D(\sqrt{8},3-\sqrt{8})\geq \frac{1}{7}\textrm{.}\]
\end{proposition}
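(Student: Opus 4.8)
The plan is to reproduce, for the pair $(\sqrt{8},3-\sqrt{8})$, the frame-and-discharging scheme already used in Propositions~\ref{prop:low2f} and~\ref{prop:low5}. First I would record the two basic patterns. A direct computation shows that $\S_{\sqrt{8},3-\sqrt{8}}((0,0),(-1,0))$ consists of the eight vertices $(2,\pm1),(2,\pm2),(-3,\pm1),(-3,\pm2)$ (two columns at abscissas $2$ and $-3$, each missing its ordinate-$0$ entry), which by Proposition~\ref{prop:lowerbound} gives only the trivial bound $\D(\sqrt{8},3-\sqrt{8})\geq\frac18$; I would likewise read off the diagonal pattern $\S_{\sqrt{8},3-\sqrt{8}}((0,0),(-1,-1))$ from Figure~\ref{fig:diag8}, since the diagonal constraints are exactly what forbid the most economical frame configurations. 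Let $F$ be the square annulus of Figure~\ref{fig:low8}. The goal is to show that, \emph{on average}, every translate $u+F$ of $F$ contains at least $\frac{|F|}{7}$ vertices of a $(\sqrt{8},3-\sqrt{8})$-identifying code $C$; by the averaging argument used in the proof of Proposition~\ref{prop:lowpattern} (see \cite{H04,JL11}) this yields $\D(\sqrt{8},3-\sqrt{8})\geq\frac17$.

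Next I would establish the unweighted baseline. Since $F$ carries disjoint translates of the horizontal and vertical patterns on its four sides, every $u+F$ already meets $C$ in several vertices, and for pairs at distance exceeding $\sqrt2$ the set $\S$ always contains a full pattern or a ball, so only the horizontal, vertical and diagonal patterns are relevant to the counting. I would then rule out the degenerate placements as in Proposition~\ref{prop:low2f}: whenever a frame attains the baseline, its code vertices are forced into corner positions, and a corner placement leaves some diagonal pattern uncovered unless extra code vertices appear in prescribed neighbouring cells. This pins down, up to the symmetries of the square, the finite list of \emph{light} frames (those meeting the baseline) together with, for each, a set of neighbours $v+F$ ($d(u,v)=1$) forced to be \emph{heavy}. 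With this in hand I would, as in Proposition~\ref{prop:low5}, assign to each $u+F$ the charge $|(u+F)\cap C|$ and let every heavy frame send a fixed fraction of charge to each light neighbour, distinguishing sub-types of light frames by how many heavy neighbours they have, so that every heavy frame donates no more than its surplus while every light frame is raised to charge at least $\frac{|F|}{7}$; verifying this charge-conservation inequality frame by frame finishes the argument.

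The main obstacle is the combinatorial case analysis rather than any single estimate. Compared with the $(\sqrt{5},\sqrt{8}-\sqrt{5})$ case, each end column of the horizontal pattern here carries \emph{two} vertices instead of one, so a light frame can realise its baseline in many more ways and the forcing relations between a light frame and its neighbours are correspondingly more numerous; getting the discharging to balance to $\frac{|F|}{7}$ without over- or under-shooting is where the delicacy lies. I would manage this by first proving a rigidity lemma fixing the corner occupancy of a light frame, then deriving neighbour-heaviness statements in the spirit of the three numbered facts in the proof of Proposition~\ref{prop:low5}, and only afterwards tuning the transfer fractions. I expect, as the text already suggests for such frames, that pushing the bound below $\frac17$ would require a substantially finer analysis of the same frame.
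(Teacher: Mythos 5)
Your identification of the horizontal pattern is correct and the frame of Figure~\ref{fig:low8} is the right tool, but there are two genuine problems. First, you have misread the frame: it is not a square annulus but two vertical columns, at abscissas $-3$ and $2$, each containing the seven ordinates $-4,\dots,2$, so $|F|=14$; it has no ``four sides'', it contains no copy of the vertical pattern, and the corner/annulus baseline discussion you transplant from Propositions~\ref{prop:low2f} and~\ref{prop:low5} does not apply to it. Second, and more importantly, the proposal never actually establishes the quantitative claim it needs: you state a target average of $|F|/7=2$ code vertices per frame but defer the baseline count, the ``rigidity lemma'' and the transfer fractions entirely, so no inequality is proved. The discharging scaffolding is also unnecessary here, since the required count holds pointwise, not merely on average.

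The paper's argument is a one-line pointwise count. The frame as drawn contains the three translates of the horizontal pattern $\S_{\sqrt{8},3-\sqrt{8}}((0,j),(-1,j))$ for $j\in\{0,-1,-2\}$, whose ordinate sets are $\{-2,-1,1,2\}$, $\{-3,-2,0,1\}$ and $\{-4,-3,-1,0\}$. These three sets have empty common intersection, so a single code vertex cannot meet all three patterns, and hence every translate $u+F$ satisfies $|(u+F)\cap C|\ge 2$. Proposition~\ref{prop:lowpattern} with $m=2$ and $k=14$ then gives $D(\sqrt{8},3-\sqrt{8})\ge 2/14=1/7$ with no case analysis at all. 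A discharging analysis of this frame would be the natural route to \emph{improving} the bound beyond $\frac{1}{7}$, not to reaching it.
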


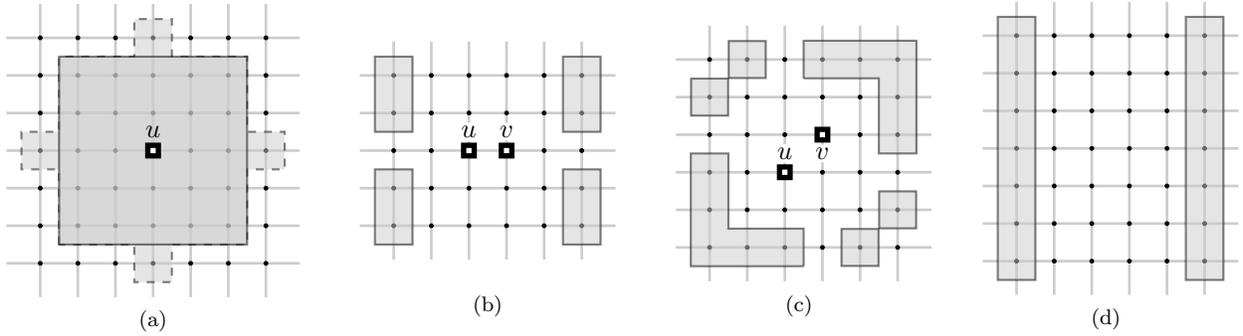
\begin{figure}[h]
\begin{center}
\subfloat[][\label{fig:b8}]{
\begin{tikzpicture}[scale=0.5,baseline=-50]
\draw[mygrid] (-3.9,-3.9) grid (3.9,3.9);
\foreach \I in {-3,...,3}\foreach \J in {-3,...,3}
	{\node[gridnode](\I\J) at (\I,\J) {};}

\draw[patternpath,dashed] (-2.5,-2.5) --(-2.5,-0.5)--(-3.5,-0.5)--(-3.5,0.5)--(-2.5,0.5) -- (-2.5,2.5) -- (-0.5,2.5)--(-0.5,3.5)--(0.5,3.5)--(0.5,2.5)-- (2.5,2.5) -- (2.5,0.5) --(3.5,0.5)--(3.5,-0.5)--(2.5,-0.5) -- (2.5,-2.5) -- (0.5,-2.5)--(0.5,-3.5)--(-0.5,-3.5)--(-0.5,-2.5)--(-2.5,-2.5);
\draw[patternpath] (-2.5,-2.5) rectangle (2.5,2.5);
\node[centernode] at (0,0) {};
\node[] at (0,0.5) {$u$};
\end{tikzpicture}
}
\hfil
\subfloat[][\label{fig:hor8}]{
\begin{tikzpicture}[scale=0.5,baseline =-50]
\draw[mygrid] (-3.9,-2.9) grid (2.9,2.9);
\foreach \I in {-3,...,2}\foreach \J in {-2,...,2}
	{\node[gridnode](\I\J) at (\I,\J) {};}

\draw[patternpath] (-3.5,-2.5) rectangle (-2.5,-0.5);
\draw[patternpath] (-3.5,0.5) rectangle (-2.5,2.5);
\draw[patternpath] (1.5,-2.5) rectangle (2.5,-0.5);
\draw[patternpath] (1.5,0.5) rectangle (2.5,2.5);
	
\node[centernode] at (0,0) {};
\node[circle,fill=white, inner sep =0] at (0,0.5) {$v$};
\node[centernode] at (-1,0) {};
\node[circle,fill=white, inner sep =0] at (-1,0.5) {$u$};
\end{tikzpicture}
}
\hfil
\subfloat[][\label{fig:diag8}]{
\begin{tikzpicture}[scale=0.5]
\draw[mygrid] (-3.9,-3.9) grid (2.9,2.9);
\foreach \I in {-3,...,2}\foreach \J in {-3,...,2}
	{\node[gridnode](\I\J) at (\I,\J) {};}

\draw[patternpath] (-0.5,1.5)--(-0.5,2.5)--(2.5,2.5)--(2.5,-0.5)--(1.5,-0.5)--(1.5,1.5)--(-0.5,1.5);
\draw[patternpath] (-0.5,-3.5)--(-0.5,-2.5)--(-2.5,-2.5)--(-2.5,-0.5)--(-3.5,-0.5)--(-3.5,-3.5)--(-0.5,-3.5);
\node[pattern] at (-2,2) {};
\node[pattern] at (-3,1) {};
\node[pattern] at (2,-2) {};
\node[pattern] at (1,-3) {};

\node[centernode] at (0,0) {};
\node[labelnode] at (0,-0.5) {$v$};
\node[centernode] at (-1,-1) {};
\node[labelnode] at (-1,-0.5) {$u$};
\end{tikzpicture}}
\hfil
\subfloat[][\label{fig:low8}]{
\begin{tikzpicture}[scale=0.5,baseline=-65]
\draw[mygrid] (-3.9,-4.9) grid (2.9,2.9);
\foreach \I in {-3,...,2}\foreach \J in {-4,...,2}
	{\node[gridnode](\I\J) at (\I,\J) {};}
\draw[pattern] (-3.5,-4.5) rectangle (-2.5,2.5);
\draw[pattern] (1.5,-4.5) rectangle (2.5,2.5);
\end{tikzpicture}}
\end{center}
\caption{In (a), the set $B_{r}(u)$ for $r=\sqrt{8}$ and $r=3$. In (b) and (c), the set $\S_{\sqrt{8}, 3-\sqrt{8}}(u,v)$ when (b) $v-u=(1,0)$ and (c) $v-u=(1,1)$.  In (d), the frame used to increase the lower bound in the case $(r,\Delta)=( \sqrt{8},3-\sqrt{8})$.}
\end{figure}

\begin{proof}
Let $F$ be the frame of Figure~\ref{fig:low8} and let $C$ be a  $(\sqrt{8},3-\sqrt{8})$-identifying code. Then for any vertex $u\in \Z^2$, $(u+F)\cap C$ must have at least two vertices. And so the density of $C$ is at least $\frac{1}{7}$.
\end{proof}

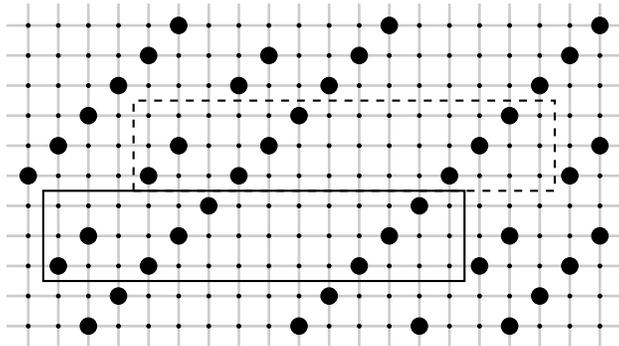
\begin{figure}[ht]
\begin{center}
\begin{tikzpicture}[scale=0.4]
\clip (-7.7,-5.7) rectangle (12.7,5.7);
\draw[mygrid] (-7.9,-5.9) grid (12.9,5.9);
\foreach \I in {-7,...,12}\foreach \J in {-5,...,5}
	{\node[gridnode](\I\J) at (\I,\J) {};}

\foreach \pos in {(-7, 0), (-6, -6), (-6, -3), (-6, 1), (-5, -5), (-5, -2), (-5, 2), (-4, -4), (-4, 3), (-3, -3), (-3, 0), (-3, 4), (-2, -2), (-2, 1), (-2, 5), (-1, -1), (0, 0), (0, 3), (1, -6), (1, 1), (1, 4), (2, -5), (2, 2), (3, -4), (3, 3), (4, -3), (4, 4), (5, -6), (5, -2), (5, 5), (6, -5), (6, -1), (7, 0), (8, -6), (8, -3), (8, 1), (9, -5), (9, -2), (9, 2), (10, -4), (10, 3), (11, -3), (11, 0), (11, 4), (12, -2), (12, 1), (12, 5), (13, -1)}
	\node[code] at \pos {};
	
\draw[ball] (-6.5,-3.5) rectangle (7.5,-0.5);
\draw[ball, dashed] (-3.5,-0.5) rectangle (10.5,2.5);
	
\end{tikzpicture}
\end{center}
\caption{Code of density $4/21$ for $(r,\Delta)=(\sqrt{8},3-\sqrt{8})$.}
\label{fig:code8}
\end{figure}

The code of Figure~\ref{fig:code8} is a $(\sqrt{8},3-\sqrt{8})$-identifying code of density $\frac{4}{21}$. Here checking that the horizontal and diagonal patterns are intersecting the code is not enough. Indeed for $(u,v)$ with $v=u+(1,2)$, $\S_{\sqrt{8},3-\sqrt{8}}(u,v)$ is not containing any diagonal, horizontal or vertical pattern.



\appendix
\section*{Appendix A: Proof of Proposition~\ref{prop:method}}

In this appendix, we give an outline of the proof of Proposition~\ref{prop:method}:
%
\begin{proof}
We assume that the conditions of the statement of the proposition are satisfied.

First note that there are at most $2\lfloor r \rfloor+2$ different horizontal lines in $\S_{r,\Delta}((0,0),(-1,-1))$, so necessarily $t\leq 2r+2$.

The set $L^d_s$ corresponds to a code with diagonal lines repeated modulo $s$. The code $C=L^d_s \cup L^h_t$ has density $\frac{1}{s}+\frac{1}{t}-\frac{1}{st}$.
We will now prove that $C$ is an $(r,\Delta)$-identifying code.

We have $s\leq diag(B_r((0,0)))$, so $C$ is a dominating set.
Let $u$ and $v$ be two vertices, $u\neq v$. Without loss of generality, we can assume that $v=u+(x,y)$, with $x\geq 0$.
By Lemma~\ref{lem:line}, $C\cap \S_{r,\Delta}(u,v)$ is nonempty when $u$ and $v$ lie on the same horizontal or vertical line. Hence, we can assume that $x\geq 1$ and $|y|\geq 1$.

Let $k$ be the maximum positive ordinate of a vertex with abscissa $\lfloor r \rfloor$ in $B_r((0,0))$. Condition $(c)$ says that $k\geq 2$. Let $E=\{(x_1,y_1),\ldots,(x_t,y_t)\}$ be a set of vertices of  $\S_{r,\Delta}((0,0),(-1,-1))$ such that $y_i\equiv i \bmod t$.  We can assume that $x_iy_i\geq 0$ for all $i$. Indeed, assume for example that $x_i<0$ and $y_i>0$. If $(x_i,y_i)\in B_r((0,0))\setminus B_{r+\Delta}((-1,-1))$, then $(-x_i,y_i)\in B_r((0,0))\setminus B_{r+\Delta}((-1,-1))$ and we change $(x_i,y_i)$ to $(-x_i,y_i)$. Otherwise, $(x_i,y_i)\in B_r((-1,-1))\setminus B_{r+\Delta}((0,0))$, then $(-x_i+1,y_i)\in B_r((0,0))\setminus B_{r+\Delta}((-1,-1))$, and we change $(x_i,y_i)$ to $(-x_i+1,y_i)$.

This implies that for $0\leq i\leq k$,
$E+(0,-i)$ is included in  $\S_{r,\Delta}((0,0),(-1,-2i))$ and $\S_{r,\Delta}((0,0),(-1,-2i-1))$. Therefore, by translation,  $\S_{r,\Delta}(u,v)$ is intersecting all the horizontal lines modulo $t$ for $|y|\leq 2k+1$ and $x=1$, and by symmetry, this is also true for any $x>0$.
If $x>0$ and $y>2k+1$, then it is clear that $\S_{r,\Delta}(u,v)$ is intersecting all the diagonal lines modulo $diag(B_r(0,0))$; so $\S_{r,\Delta}(u,v)\cap C$ is nonempty.

We assume now that $x>0$ and $y<-(2k+1)$.
We first deal with the case $x=1$. Without loss of generality, we can assume that $u=(0,h)$ and $v=(1,-h')$, with $h'+1\geq h\geq h'\geq k+1$. If $h\geq r$, the diagonal line $y=x$ is not intersecting the circle $\C_{r+\Delta}(v)$ in a positive abscissa, and then it is clear that $\S_{r,\Delta}(u,v)$ is intersecting at least $diag(B_r((0,0)))$ diagonal lines. In the other case, as in the proof of Lemma~\ref{lem:line}, we will show that the distance on the diagonal line $y=x$ between the circle $\C_r(u)$ and the circle $\C_{r+\Delta}(v)$ is at least $\sqrt{2}$. That will imply that $\S_{r,\Delta}(u,v)$ is intersecting at least $diag(B_r((0,0)))$ consecutive diagonal lines. Let say that the diagonal line $y=x$ is intersecting $\C_r(u)$ in a vertex $(x_u,x_u)$ and $\C_{r+\Delta}(v)$ in a vertex $(x_v,x_v)$.
Then $x_u=\frac{h}{2}+\frac{\sqrt{2r^2-h^2}}{2}$ and $x_v= 1-\frac{h'}{2}+\frac{\sqrt{2r^2-h'^2+4r\Delta-2h'+2\Delta^2-1}}{2}$. Hence:
\[
x_v-x_u = \frac{h+h'}{2} -1 +\frac{\sqrt{2r^2-h^2}-\sqrt{2r^2-h'^2+4r\Delta-2h'+2\Delta^2-1}}{2} \geq h'-2 \geq 1 \textrm{.}
\]


Therefore, the distance between $(x_u,x_u)$ and $(x_v,x_v)$ is at least $\sqrt{2}$ and we are done.

With the same method, we can show that $\S_{r,\Delta}(u,v)$ is also intersecting $diag(B_r((0,0)))$ consecutive diagonal lines for $x\geq 2$.
\end{proof}

\section*{Appendix B: Proof of Proposition~\ref{prop:low2}}
In this appendix, we prove the proposition~\ref{prop:low2}:
\begin{proof}
We consider the pattern $F$ of Figure~\ref{fig:frame3}. We say that a pattern $u+F$ is a {\em neighbour} of a pattern $v+F$ if $u$ and $v$ are neighbours, i.e. $d(u,v)=1$. Then a pattern has four neighbours: on the left, on the right, on the bottom and on the top.
Let $\mathcal C$ be a $(\sqrt{2},2-\sqrt{2})$-identifying code. We will show that in average, there are at least $\frac{64}{19}$ vertices of $\mathcal C$ in each pattern, proving the result.

We first notice, that there are at least three vertices of the code in each pattern $u+F$ and there are only five possibilities, up to symmetry and rotations, that are shown on Figure~\ref{fig:possframe3}.
For $i\geq 3$, let $\mathcal F_i$ be the set of all patterns $u+F$,$u\in \Z^2$, such that $|\mathcal C\cap(u+F)|=i$. We define as well $\mathcal F_{\geq i}$ as the union of set $\mathcal F_{j}$ with $j\geq i$.
We call a pattern $A$ (resp. $B,C,D,E$) if it corresponds to the case of Figure~\ref{fig:F3a} (resp.~\ref{fig:F3b},\ref{fig:F3c},\ref{fig:F3d},\ref{fig:F3e}).
We call a pattern $A^+$ if it is a frame $A$ and if it has three neighbours in $\mathcal F_3$.
We will show that each frame of $\mathcal F_3$ has as neighbours:
\begin{itemize}
\item a frame in $\mathcal F_{\geq 5}$ and a frame in $\mathcal F_{\geq 4}$, or,
\item a frame in $\mathcal F_{\geq 5}$, and it is a frame $C$, or,
\item three frames in $\mathcal F_4$ or,
\item two frames in $\mathcal F_4$ with one of them having at most three neighbours in $\mathcal F_3$, or,
\item one frame in $\mathcal F_4$ and, at distance exactly $\sqrt{5}$, a frame in $\mathcal F_{\geq 5}$ and it is a $A+$.
\end{itemize}

In order to prove this, we consider the different types of frames and study their neighbourhoods. We always choose for the starting frame the orientation of Figure~\ref{fig:possframe3}. We first notice that except frames $A$ and $C$ all the frames have at least two neighbours in $\mathcal F_{\geq 4}$, on the left and on the top.

\begin{itemize}
\item[Frame $A$ :] A frame $A$ must have at least one neighbour in $\mathcal F_{\geq 4}$ (on the left). Assume first that there is only one (case $A+$), then necessarily the right neighbour of the frame is also a frame $A$.
  For domination reasons, there must be at least one code vertex in position $1$ or $2$ in Figure~\ref{fig:squareA+}.
\begin{itemize}
\item If there is a code vertex in position $2$ (and maybe one in position $1$), then the top-neighbour is necessarily a frame $E$ and there is a contradiction for the diagonal pattern in the top-right neighbour.
\item Hence, the code vertex is necessarily  in position $1$, then the bottom-neighbour is a frame $D$. The frame in bottom-right is in $\mathcal F_4$ and is fixed. The top-neighbour is necessarily a frame $D$ and the top-right neighbour is in $\mathcal F_4$ and is fixed. Then the right-right-up neighbour has at least five code vertices. We say that this frame of $\mathcal F_{\geq 5}$ and the basic frame $A+$ are {\em $A$-associated}. Note that the three first columns of the frame of $\mathcal F_{\geq 5}$ are completly fixed and give the direction of the frame $A+$, so a frame in $\mathcal F_{\geq 5}$ can be $A$-associated to at most one frame $A+$.
Moreover, the frame of $\mathcal F_{\geq 5}$ has at most two neighbours in $\mathcal F_3$, in top and right position. They cannot be frames $C$ or $A+$.
\end{itemize}

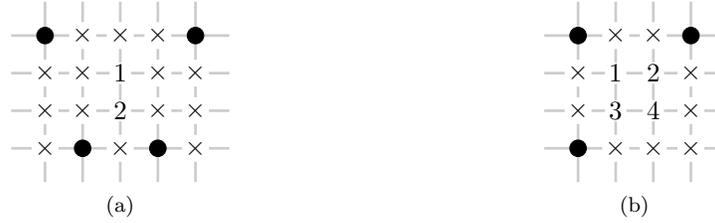
\begin{figure}[h]
\begin{center}
\subfloat[][\label{fig:squareA+}]{
\begin{tikzpicture}[scale=0.5]
\draw[mygrid] (-2.9,-2.9) grid (2.9,1.9);

\foreach \pos in {(-2,1),(-1,-2),(1,-2),(2,1)}
	{\node[code] at \pos {};}
	
\foreach \pos in {(-2,0),(-2,-1),(-2,-2),(-1,-1),(-1,0),(-1,1),(0,1),(0,-2),(1,-1),(1,0),(1,1),(2,0),(2,-1),(2,-2)}
	{\node[labelnode] at \pos {$\times$};}
	
	\node[labelnode] at (0,0) {$1$};
	\node[labelnode] at (0,-1) {$2$};

\end{tikzpicture}
}
\hfil
\subfloat[][\label{fig:squareB}]{
\begin{tikzpicture}[scale=0.5]
\draw[mygrid] (-2.9,-2.9) grid (1.9,1.9);

\foreach \pos in {(-2,1),(-2,-2),(1,1)}
	{\node[code] at \pos {};}
	
\foreach \pos in {(-2,0),(-2,-1),(-1,-2),(-1,1),(0,1),(0,-2),(1,-1),(1,0),(1,-2)}
	{\node[labelnode] at \pos {$\times$};}
	
	\node[labelnode] at (-1,0) {$1$};
	\node[labelnode] at (-1,-1) {$3$};
	\node[labelnode] at (0,0) {$2$};
	\node[labelnode] at (0,-1) {$4$};

\end{tikzpicture}
}
\end{center}
 \caption{Black dot is a vertex of the code and $\times$ means there is no code vertex.}
 \end{figure}

Assume now that the basic frame $A$ has only two neighbours in $\mathcal F_4$ with all of their neighbours in $\mathcal F_3$, and that the two other neighbours of the basic frame are in $\mathcal F_3$. One of the two neighbours in $\mathcal F_4$ is on the left. Assume first that the second one is on the right.
Then the frame in top of frame $A$ must be in $\mathcal F_3$ and three of its neighbours are frames of $\mathcal F_3$, so it must be a frame $A$ or $C$. It cannot be frame $A$, so it must be a frame $C$ and it is completely fixed. For the bottom neighbour, it must also be a frame $A$ or $C$, and it can only be a frame $C$. Then we have a contradiction because one vertex is not dominated.
Hence, we can now assume  that the  right neighbour of the basic frame is in $\mathcal F_3$. Then it is necessarily a frame $A$ and as before, there must be at least one code vertex among position $1$ and $2$.
\begin{itemize}
\item If there is a code vertex in position $2$, then the top-neighbour is in $\mathcal F_{\geq 4}$, and the top-right neighbour is also in $\mathcal F_{\geq 4}$, a contradiction.
\item Otherwise there must be a code vertex in position $1$. The top-left neighbour must be in $\mathcal F_3$. It cannot be a frame $E$ because a diagonal pattern would not be caught on position top-right. Also, the top-left neighbour is a frame $A$. Then the top-neighbour is also in $\mathcal F_3$ and is a frame $D$. The bottom-left neighbour is also in $\mathcal F_3$ and is fixing the bottom neighbour which is in $\mathcal F_4$. But then the bottom-right is in $\mathcal F_{\geq 4}$, a contradiction.
\end{itemize}

\item[Frame $B$:] We show that a frame $B$ must have at least three neighbours  in $\mathcal F_4$ or one in $\mathcal F_4$ and one in $\mathcal F_{\geq 5}$.
Assume that is not the case, then necessarily, the frame $B$ must have exactly two neighbours in $\mathcal F_4$ (the left and top-neighbours), the others must be in $\mathcal F_3$.
We denote the central squares of the frame with $1$,$2$,$3$,$4$ (see Figure~\ref{fig:squareB}). At least one of those vertices must be in the code for domination reasons.
\begin{itemize}
\item If there is a code vertex in position $1$, then the bottom and right frames are necessarily frame $D$, and then a diagonal pattern in position bottom-right is not caught.
\item If the code vertex is in position $2$, then the bottom frame is a frame $E$, but then the left frame must be in $\mathcal F_{\geq 5}$.
\item By symmetry the code vertex is not in position $3$
\item Finally, if there is a code vertex in position $4$ with no code vertices in positions $1$ to $3$, then the four code vertices in position left and top are completely determined, and with a contradiction for the diagonal pattern in position top-left.
\end{itemize}
\item[Frame $C$ :] A frame $C$ has a neighbour (the top one) in $\mathcal F_{\geq 5}$, and the top-top-one must be in $\mathcal F_{\geq 4}$. We say that the frame $C$ is {\em $C$-associated} to the top-neighbour in $\mathcal F_{\geq 5}$. Clearly, a frame in $\mathcal F_{\geq 5}$ is $C$-associated to at most two frames $C$.
\item[Frame $D$ :] Assume that a frame $D$ has only two neighbours in $\mathcal F_4$ (the top- and left-ones) and two neighbours in $\mathcal F_3$. We will show that one of the neighbours in $\mathcal F_4$ must have at most three neighbours in $\mathcal F_3$. We numerate as before the vertices of the central square $1$ to $4$. By contradiction, we assume that the neighbours in $\mathcal F_4$ have all their neighbours in $\mathcal F_3$.
\begin{itemize}
\item If there is a code vertex in position $1$, then on the right, there must be a frame $C$ and then the frame of $\mathcal F_4$ on the top has a neighbour frame in $\mathcal F_5$.
\item If there is a code vertex in position $2$ and none in position $1$, then the bottom frame is a frame $C$, the left frame, in $\mathcal F_4$, is fixed. For domination reasons, there must be a code vertex in position $3$, and the top-left must be in $\mathcal F_{\geq 4}$.
\item If there is a code vertex in position $3$ and none in positions $1$ and $2$, then it must be a frame $A$ on the right, then on top-right it must also be a frame $A$, then the top-frame is fixed, and the left-up must be in $\mathcal F_{\geq 4}$.
\item Otherwise, there must be a code vertex in position $4$, and none in positions $1$ to $3$, the top-frame is completely fixed, and its right neighbour is in $\mathcal F_{\geq 4}$.
\end{itemize}
\item[Frame $E$:] Assume that a frame $E$ has only two neighbours  in $\mathcal F_4$ (in top and left positions) and two  in $\mathcal F_3$. We will show that one of the neighbours  in $\mathcal F_4$ has at most three neighbours in $\mathcal F_3$. We numerate as before the vertices of the central square $1$ to $4$. Here we also have the possibilities that none of the four vertices is in the code. We assume that the two neighbours of $\mathcal F_4$ have only neighbours in $\mathcal F_3$.
\begin{itemize}
\item If there is one code vertex in position $1$,  there must a be frame $A$ on right and bottom, and frame of  $\mathcal F_4$ on top-right.
\item If there is one code vertex in position $2$, then the bottom one is a frame $C$, and then the left-frame is fixed, the top-top-frame must be in $\mathcal F_3$, and will necessarily be a frame $D$. Then the top-left is necessarily a frame $A$ and we have a contradiction for the top-frame.
\item By symmetry, it is the same for position $3$,
\item If there is only a code vertex in position $4$, then the left frame is fixed, and then the bottom-left must be in $\mathcal F_4$.
\item If there is no code vertices in the square, then the left-left-frame, which must be in $\mathcal F_3$, is a frame $A$ and fixes the left-frame. The same holds for the top-frame, but then there are at least four vertices in the top-left-frame.
\end{itemize}
\end{itemize}

We can show from the previous analysis that a frame in $\mathcal F_{\geq 5}$ has:
\begin{itemize}
\item one $A$-associated $A+$ and then at most two neighbouring frames in $\mathcal F_3$. They are not frame $C$ or $A+$ and so have as neighbour at least one other frame in $\mathcal F_{\geq 4}$, or,
\item no $A$-associated frame, two $C$-associated frames and no other neighbouring frames in $\mathcal F_3$, or,
\item no $A$-associated frame, one $C$-associated frame and at most two other neighbouring frames in $\mathcal F_3$, or,
\item no $A$-associated frame, no $C$-associated frame and at most four other neighbouring frames in $\mathcal F_3$.
\end{itemize}

Let now $\alpha = \frac{12}{19}$. We give charge $i$ to a frame of $\mathcal F_i$. We apply the following discharging rules :
\begin{enumerate}
\item A frame in $\mathcal F_4$ gives $\frac{\alpha}{4}$ to each neighbour in $\mathcal F_3$  different from frame $A+$, if there are four of them, and  $\frac{\alpha}{3}$ if there are at most three.
\item A frame in $\mathcal F_{\geq 5}$, gives $\frac{7}{12}\alpha$ to $A$- and $C$-associated frames and $\frac{\alpha}{3}$ to all its remaining neighbours in $\mathcal F_3$ at distance $1$.
\end{enumerate}

Then, after applying those rules, every frame has at least charge $4-\alpha=3+\frac{7}{12}\alpha=\frac{64}{19}$. Indeed:
\begin{itemize}
\item A frame in $\mathcal F_4$ gives at most $\alpha$ so ends with at least $4-\alpha$.
\item A frame  in $\mathcal F_{\geq 5}$ gives at most $\max\left(\frac{14}{12}\alpha,(\frac{7}{12}+\frac{2}{3})\alpha,
\frac{4}{3}\alpha\right) = \frac{4}{3}\alpha$, and so ends with at least $5-\frac{4}{3}\alpha>4-\alpha$.
\item A frame $A+$ receives $\frac{7}{12}\alpha$ and so ends with  $3+\frac{7}{12}\alpha$.
\item A frame $C$ receives at least $\frac{7}{12}\alpha$ from a frame of $\mathcal F_5$.
\item The other frames of $\mathcal F_3$ receives at least $\frac{\alpha}{4} +\frac{\alpha}{3}=\frac{7}{12}\alpha$.
\end{itemize}
\end{proof}

\end{document}